\theoremstyle{plain}
\newtheorem{theorem}{Theorem}
\newtheorem{lemma}{Lemma}
\newtheorem{assumption}{Assumption}
\newtheorem{proposition}{Proposition}
\newtheorem{example}{Example}
\def\pr{{f}}
\def\tpr{\tilde {f}}
\def\A{{\mathcal  A}}
\def\B{{\mathcal  B}}
\def\S{{\mathcal  S}}
\def\C{{\mathcal C}}
\def\i{{\rm  i}}
\def\med{{\rm median}}
\def\T{{ \mathrm{\scriptscriptstyle T} }}
\newcommand*{\ind}{%
	\mathbin{%
		\mathpalette{\@ind}{}%
	}%
}
\newcommand*{\nind}{%
	\mathbin{
		\mathpalette{\@ind}{\not}
	}%
}
\newcommand*{\@ind}[2]{
	\sbox0{$#1\perp\m@th$}
	\sbox2{$#1=$}
	\sbox4{$#1\vcenter{}$}
	\rlap{\copy0}
	\dimen@=\dimexpr\ht2-\ht4-.2pt\relax
	\kern\dimen@
	{#2}%
	\kern\dimen@
	\copy0 
} 
\begin{document}


\title{\bf Identifying effects of multiple treatments in the presence of   unmeasured confounding}
\date{}
\author{
Wang Miao \thanks{Department of Probability and Statistics, Peking University,  Beijing, PRC; mwfy@pku.edu.cn}, 
Wenjie Hu \thanks{Department of Probability and Statistics, Peking University, Beijing, PRC; huwenjie@pku.edu.cn}, 
Elizabeth L. Ogburn \thanks{Department of Biostatistics, Johns Hopkins Bloomberg School of Public Health, Baltimore, MD, USA; eogburn@jhsph.edu}, and  
Xiaohua Zhou \thanks{
Department of Biostatistics and Beijing International Center for Mathematical Research, 
 \protect\\[-5pt] \hspace*{2em}  Peking University, Beijing, PRC;   azhou@math.pku.edu.cn}
 }

\maketitle


\vspace{-2cm}

\begin{center}
\textbf{Abstract}
\end{center}

Identification  of treatment effects in the presence of unmeasured confounding is a persistent problem in the social, biological, and medical sciences.
The problem of unmeasured confounding in settings with multiple treatments is most common in statistical genetics and bioinformatics settings, where researchers have developed many successful statistical strategies without engaging deeply with the causal aspects of the problem. Recently there have been a number of attempts to bridge the gap between these statistical approaches and causal inference, but these attempts have either been shown to be flawed or have relied on fully parametric assumptions. In this paper, we propose two strategies for identifying and estimating causal effects of multiple treatments in the presence of unmeasured confounding. The \emph{auxiliary variables} approach leverages   variables that are not causally associated with the outcome; in the case of a univariate confounder, our method only requires one auxiliary variable, unlike existing instrumental variable methods that would require as many instruments as there are treatments. 
An alternative \emph{null treatments} approach relies on the assumption that at least half of the confounded treatments have no causal effect on the outcome, but does not require a priori knowledge of which treatments are null. Our identification strategies do not impose  parametric  assumptions on the outcome model and do not rest on estimation of the confounder. 
This paper extends and generalizes existing work on unmeasured confounding with a single treatment and models commonly used in bioinformatics.

\vspace*{.3in}

\noindent\textsc{Keywords}: {Confounding; Identification; Instrumental variable; Multiple treatments.} 
\vspace*{.3in}

\newpage

\section{Introduction} 

Identification  of treatment effects in the presence of unmeasured confounding is a persistent problem in the social, biological, and medical sciences, where in many settings it is difficult to  collect data on all possible treatment-outcome confounders. 
Identification means that the treatment effect  of interest is uniquely determined from the joint distribution of observed variables. Without identification, statistical inference may  be misleading  and is of limited interest. 
Most of the work on unmeasured confounding by causal inference researchers focuses on settings with a single treatment, and either harnesses auxiliary variables (e.g. instruments, negative controls, or confounder proxies) to achieve point identification of causal effects, or relies on sensitivity analyses or on weak assumptions to derive bounds for the effects of interest. 
A large body of work from statistical genetics and computational biology is concerned with  multiple treatments--for example GWAS (Genome-Wide Association Studies) with  confounding by population structure and computational biology applications with confounding by batch effects. 
Recently, there have been a few attempts to put these approaches on solid theoretical footing and to bridge the gap between these statistical approaches and causal inference.
However, these attempts either rely themselves on strong parametric models that circumvent  the underlying causal structure,  or have been shown to be  flawed.
In this paper, we propose two novel strategies for identifying causal effects of multiple treatments in the presence of unmeasured  confounding without placing any parametric restrictions on the  outcome model.
This paper  generalizes existing work on unmeasured confounding with a single treatment to the multi-treatment  setting,   and resolves challenges that have undermined previous proposals for dealing with multi-treatment unmeasured confounding. 

\subsection{Related work}

For a single treatment, a variety of methods have been developed to test, adjust for, and eliminate unmeasured confounding bias. Sensitivity analysis \citep{cornfield1959,rosenbaum1983assessing,ding2014generalized} and bounding  \citep{manski1990nonparametric, balke1997bounds,richardson2014ace} are used to evaluate the robustness of causal inference to unmeasured confounding. For point identification of the treatment effect, the instrumental variable (IV) is  an influential tool used in biomedical, epidemiological, and socioeconomic studies \citep{wright1928tariff,goldberger1972,robins1994correcting,angrist1996identification,didelez2007mendelian,small2017instrumental}. 
Recently, \cite{miao2018proxy},   \cite{shi2020multiply}, \cite{tchetgen2020introduction},  \cite{lipsitch2010negative},  \cite{kuroki2014measurement},  \cite{ogburn2012nondifferential},  \cite{flanders2017new}, and \cite{wang2017confounder} demonstrate the potential of using confounder proxy variables and negative controls for adjustment of confounding bias. For an overview of recent work in the single treatment setting, see \cite{tchetgen2020introduction} and \cite{wang2018bounded}.

Similar methods can sometimes be used in  settings with multiple treatments,  simply treating them as a single vector-valued treatment.
These approaches   allow  for unrestricted correlations   among the  treatments. However,   if, as is typically the case in GWAS and computational biology settings, correlations among treatments contains useful information about the confounding, these methods cannot leverage the information. 
Latent variable methods leveraging the multi-treatment correlation structure  have been used to estimate and control for unmeasured confounders in biological applications since the early 2000s \citep{alter2000singular, price2006principal,  leek2007capturing,     friguet2009factor, gagnon2012using,luo2019batch}. 
Recently, a few authors have attempted  to elucidate the causal structure underlying these statistical procedures   and to establish  rigorous   theoretical  guarantees for identification,  using fully parametric models.
\cite{wang2017confounder} propose confounding adjustment  approaches for  the effects of a treatment on multiple outcomes under  a linear  factor model;
by  reversing the labeling of the outcome and   treatments, their approaches can test but not  identify the effects of multiple treatments on the outcome.  
\cite{kong2019multi}  consider   a binary outcome with a univariate confounder and prove identification under a linear factor model for multiple treatments and a parametric  outcome model  via a meticulous analysis  of the link distribution; but their approach cannot generalize to the multivariate confounder setting as we illustrate with a counterexample in the supplement.
\cite{grimmer2020ive}, \cite{cevid2020spectral}, \cite{guo2020doubly}, and \cite{chernozhukov2017lava} consider linear outcome models with high-dimensional treatments that are confounded or mismeasured; 
in this case,  identification  is implied by the fact that confounding on each  treatment vanishes as the number of treatments goes to infinity. 
In contrast,  we take a fundamentally causal approach to confounding and to identification of treatment effects by allowing the outcome model to be  unrestricted, the treatment-confounder distribution to lie in a more general, though not unrestricted, class of models,  the number of treatments to be finite, and confounding to not vanish.

Most notably, \cite{wang2019jasa} provide an intuitive  justification for using latent variable methods in general multi-treatment unmeasured confounding settings; they call the justification and resulting method the ``deconfounder." 
Their approach uses  a factor model assuming that  treatments are independent conditional on the confounder   to estimate the  confounder, and  the confounder estimate is used for adjustment of   bias. 
However,   as demonstrated in a counterexample by \cite{damour2019multi} and discussed by \cite{ogburn2019comment,ogburn2020counterexamples} and \cite{imai2019discussion}, 
identification  is not guaranteed for the deconfounder,  i.e., 
the treatment effects  can not be uniquely determined from the observed data  even with an infinite number of
data samples. 
Additionally, an infinite number of treatments are required for consistent estimation of the confounder, complicating finite sample inference and undermining positivity.
For refinements and discussions of the deconfounder approach, 
see  \cite{wang2020towards}, \cite{damour2019multi}, \cite{ogburn2020counterexamples}, \cite{grimmer2020ive}, and the commentaries
\citep{damour2019comment,ogburn2019comment,imai2019discussion,athey2019comment} published alongside \cite{wang2019jasa}.
 \cite{damour2019multi}  suggests the proximal inference  and \cite{imai2019discussion} consider the conventional instrumental variable approach to facilitate identification. 
However, if correlations among the multiple treatments are indicative of confounding, as the deconfounder approach assumes, neither of these two methods makes use of that correlation. 
Moreover,  their extension to the  multi-treatment setting is complicated by  the fact that the  proximal inference 
requires    confounder proxies to be  causally uncorrelated with  any  of the treatments and the instrumental variable approach requires at least as many instrumental variables as there are treatments.

\subsection{Contribution}
In Section \ref{sec:id}, we review the challenges for identifying multi-treatment effects in the presence of unmeasured confounding.
In Sections 3 and 4,  we propose two novel approaches for the identification of causal effects of multiple treatments with unmeasured confounding: an \emph{auxiliary variables} approach and a \emph{null treatments} approach. 
Both approaches rely on two assumptions restricting the joint distribution of the unmeasured confounder and treatments. The first assumption is that the  joint treatment-confounder distribution lies in a class of models that satisfy a particular \emph{equivalence} property that is known to hold for many commonly-used models, e.g., many types of factor and mixture models. 
This assumption can accommodate  other treatment-confounder models,  such as mixture models,  in addition to   the  factor models considered   by \cite{wang2019jasa}, \cite{kong2019multi},   \cite{wang2017confounder}, and \cite{grimmer2020ive}. 
The second assumption is that the treatment-confounder distribution satisfies a \emph{completeness} condition that is standard in nonparametric identification problems.
In addition to these two assumptions, the auxiliary variables approach leverages an   auxiliary variable  that   does  not directly affect the outcome to identify treatment effects, such as an IV or confounder proxy.
In the presence of a univariate confounder, identification can be achieved with our approach even if only one auxiliary variable is available and if it is associated with only one confounded treatment. 
In contrast,     IV approaches     require  as many instrumental variables as there are treatments and that all confounded treatments must be associated with the instrumental variables. 
The null treatments approach does not require  any auxiliary variables,  but instead rests on the assumption   that  at least half of the confounded treatments are null, without requiring knowledge of   which are active and which are null. 
In these two approaches,  identification is  achieved without imposing  parametric  assumptions  on the outcome model, although the   joint treatment-confounder distribution is restricted by the equivalence and the completeness assumptions.
Identification does not rest on estimation of the unmeasured confounder,
and thus  works with a finite number of treatments and does not run afoul of positivity. 
In the absence of auxiliary variables and if the null treatments assumption fails to hold, 
our method still constitutes a valid test of the null hypothesis of no joint treatment effect.
Because identification in both approaches  requires solving an integral equation, an explicit identification formula is not available for unrestricted  outcome models. However, we describe some estimation strategies in Section \ref{sec:estimation}. In simulations in Section \ref{sec:sims}, the proposed approaches perform well with little bias and appropriate coverage rates. In a data example about mouse obesity in Section \ref{sec:application}, we apply the approaches to detect genes possibly causing mouse obesity, which reinforces previous findings by taking      unmeasured confounding into account. 
Section \ref{sec:conclusion} concludes with a brief mention of some potential extensions of our approaches. Proofs and further discussions are relegated to the supplement.

\section{Preliminaries and challenges to identification} \label{sec:id}

Throughout the paper, we let $X=(X_1,\ldots, X_p)^\T$ denote a vector of $p$ treatments and $Y$ an outcome. We are interested in
the effects of $X$ on $Y$, which may be confounded by a vector of $q$ unobserved covariates $U$. 
The dimension of the confounder, $q$, is assumed to be known a priori; 
for choice of $q$ in practice see  illustrations in Sections 6--7 and  the  discussion in Section 8.
For notational convenience, we suppress observed covariates, and all conditions and results can be viewed as conditioning on them. 
Hereafter, we use $\Sigma_A$ to denote the covariance matrix of a random vector $A$.
We use $\pr$ to denote  a generic probability density or mass function and   $\pr(A=a\mid B=b)$  the conditional density/mass of $A$ given $B$ evaluated at $(A=a,B=b)$, and  write $\pr(a\mid b)$  for simplicity. 
Vectors are assumed to be column vectors unless explicitly transposed.
We refer to $\pr(x,u)$ as the treatment-confounder distribution and $\pr(y\mid u,x)$ the outcome model.

Let $Y (x)$ denote the potential outcome that would have  been observed had the treatment $X$ been set to $x$. 
Treatment effects are defined by  contrasts of potential outcomes between different treatment conditions,
and thus we focus on identification of  $\pr\{Y(x)\}$. 
We say that $\pr\{Y(x)\}$ is identified if and only if it is  uniquely determined by  the  joint distribution of observed variables.

Throughout we make three standard identifying assumptions.
\begin{assumption}\label{assump:ign}
	\begin{enumerate}
	
		\item[(i)] Consistency: When $X=x$, $Y=Y(x)$;  
		
		\item[(ii)]Ignorability:  $Y(x)\ind  X\mid U$;

		\item [(iii)] Positivity: $0<\pr(X=x\mid U=u)<1$ for all $(x,u)$.

	\end{enumerate}
\end{assumption}
Consistency states that the observed outcome is a realization of the potential outcome under the treatment actually received. 
Ignorability, also called ``exchangeability,"  ensures that treatment assignments are effectively randomized conditional on $U$ and implies that $U$ suffices to control for all confounding. Positivity, also called ``overlap," ensures that for all values of $U$ all treatment values have positive probability.

If we were able to observe the confounder $U$,  Assumption \ref{assump:ign} would permit fully nonparametric identification of $\pr\{Y(x)\}$  by the back-door formula \citep{pearl1995causal}  or the g-formula \citep{robins1986new}, 
\begin{eqnarray}\label{eq:effect}
\pr\{Y(x)=y\} = \int_u \pr(y\mid u, x)\pr(u) du. 
\end{eqnarray}

But when $U$ is not observed,  all information contained in the observed data is  captured by $\pr(y,x)$, from which one cannot uniquely determine the joint distribution $\pr(y,x,u)$.
To be specific, one has to solve for $\pr(x,u)$  and $\pr (y \mid u, x)$  from 
\begin{eqnarray}
\pr(x)&=&\int_u \pr(x,u) du,\label{eq:factor0}\\
\pr(y\mid x)&=&\int_u \pr(y\mid u,x)\pr(u\mid x) du. \label{eq:int0}
\end{eqnarray}
However, $\pr(x,u)$ cannot be uniquely determined from \eqref{eq:factor0}, even if, as is  common practice, a  factor model is imposed on $\pr(x,u)$;
see \cite{damour2019multi} for a counterexample.
Furthermore, even if $\pr(x,u)$ is known, the outcome model $\pr(y\mid u,x)$ cannot be identified; 
\cite{damour2019multi} points out that lack of identification of $\pr(y \mid  u, x)$ is due to the unknown copula of $\pr(y \mid  x)$ and $\pr(u \mid  x)$. 
Here we note that identifying $\pr(y \mid  u,x)$ given $\pr(u\mid x)$ is equivalent to solving the integral equation \eqref{eq:int0}, 
but the solution   is not unique without extra assumptions. 
As a result, one cannot identify the true joint distribution $\pr(y,x,u)$  that is  essential for the g-formula  \eqref{eq:effect}. 
We  call a joint  distribution  $\tpr(y,x,u)$ \emph{admissible} if it  conforms to the observed data distribution $\pr(y,x)$, i.e.,  $\pr(y,x)=\int_u \tpr(y,x,u)du$.
The counterexample by \cite{damour2019multi} also shows that   different  admissible joint distributions  result in different potential outcome distributions, i.e., the potential outcome distribution is not identified without additional assumptions.
Some previous approaches have estimated $U$ directly with a deterministic function of $X$,  
but this controverts the positivity assumption and   requires an infinite number of treatments in order to consistently estimate $U$. 
Furthermore,  \cite{grimmer2020ive} show that  in these settings  the effect of $X$ on $Y$ is asymptotically unconfounded 
and  a naive regression of $Y$ on $X$  weakly  dominates these more involved approaches.

\section{Identification with auxiliary variables}\label{sec:auxil}

\subsection{The auxiliary variables assumption}
Suppose we have  available a vector of auxiliary variables $Z$,  then the observed data distribution is captured by $\pr(x,y,z)$, from which we aim to identify the potential outcome distribution $\pr\{Y(x)\}$.
Let $\pr(x,u\mid z;\alpha)$ denote a model for the treatment-confounder distribution  indexed by a possibly infinite-dimensional parameter $\alpha$,  and $\pr(x\mid z;\alpha)$ the resulting marginal distribution.
Given $\pr(x\mid z;\alpha)$, we let $\pr(x,u\mid z;\tilde \alpha)$ denote an arbitrary admissible joint distribution such that $\pr(x\mid z;\alpha)=\int_u \pr(x,u\mid z;\tilde \alpha)du$, and write  $\tpr(x,u\mid z)=\pr(x,u\mid z;\tilde \alpha)$ for short.
Our identification strategy rests on the following  assumption.

\begin{assumption}\label{assump:auxil}
	\begin{enumerate}
		\item[(i)] Exclusion restriction: $Z\ind Y\mid (X,U)$;  
		\item[(ii)]Equivalence:   for any $\alpha$, any    $ \tilde \pr(x, u\mid z)$  that solves  $\pr(x\mid z;\alpha)=\int_u \tpr(x,u\mid z) du$  can be written as $ \tilde\pr(x, u\mid z) =  \pr\{X=x, V(U)=u\mid z; \alpha\}$ for some   invertible but not necessarily known  function $V$;
\item [(iii)] Completeness: for any $\alpha$, $\pr(u\mid x,z;\alpha)$ is complete in $z$,  i.e.,  for any  fixed $x$ and square-integrable function $g$, $E\{g(U)\mid X=x,Z;\alpha\}=0$ almost surely if and only if $g(U)=0$ almost surely.
	\end{enumerate}
\end{assumption}
The exclusion restriction characterizing   auxiliary variables  is the same condition invoked for  the treatment-inducing  confounder
proxies   by \cite{miao2018proxy} and \cite{tchetgen2020introduction}; 
it  rules out the existence of a direct causal association between  the auxiliary variable and the outcome.
It is satisfied by   instrumental variables and   confounder proxies or negative controls.
Figure \ref{fig:dag} includes two directed acyclic graph (DAG) examples that satisfy the assumption.
In Section 3.3 we will illustrate the difference between our use of  auxiliary variables and previous proposals.
\begin{figure}[H]
\centering
\begin{minipage}{0.45\textwidth}
	\centering
	\begin{tikzpicture}[scale=1,
	->,
	shorten >=2pt,
	>=stealth,
	node distance=2cm,
	pil/.style={
		->,
		thick,
		shorten =2pt,}
	]
	\node (X) at (0,0) {$X$};
	\node (Y) at (3,0) {$Y$};	
	\node (U) at (1.5, 1.5) {$U$};
	\node (Z) at (-1.5,1.5)  {$Z$};
	\foreach \from/\to in {U/X,U/Y, Z/X, X/Y}
	\draw (\from) -- (\to);
	\end{tikzpicture}
		\subcaption{An instrumental variable}
\end{minipage}
\begin{minipage}{0.45\textwidth}
	\centering
\begin{tikzpicture}[scale=1,
	->,
	shorten >=2pt,
	>=stealth,
	node distance=2cm,
	pil/.style={
		->,
		thick,
		shorten =2pt,}
	]
	\node (X) at (0,0) {$X$};
	\node (Y) at (3,0) {$Y$};	
	\node (U) at (1.5, 1.5) {$U$};
	\node (Z) at (-1.5,1.5) {$Z$};
	\foreach \from/\to in {U/X,U/Y, U/Z, X/Y}
	\draw (\from) -- (\to);
	\end{tikzpicture}
		\subcaption{A nondifferential  proxy  of the confounder}
\end{minipage}
\caption{Example causal diagrams for auxiliary variables. }\label{fig:dag}
\end{figure}
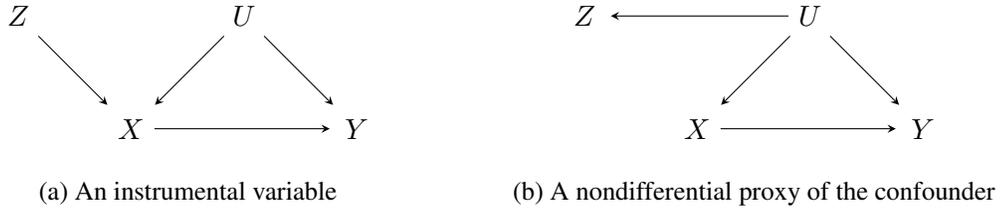

Equivalence  is a high-level assumption  stating that the treatment-confounder  distribution   lies in a model that is identified  upon a one-to-one transformation of $U$.
Because ignorability holds conditional on any one-to-one transformation of $U$, this  allows us to use an arbitrary admissible   treatment-confounder distribution to identify the treatment effects.
The equivalence property restricts the class of treatment-confounder distributions; 
as one example,  it is not met if the dimension of confounders exceeds that of the treatments.
Nonetheless,  the equivalence property  admits   a large  class of models.
In particular, it allows for any factor model or mixture model that is identified, where identification in the context of these models does not imply point identification but rather identification up to a rotation (factor models) or up to label switching (mixture models). 
Such model assumptions  are often used in bioinformatics applications where the unmeasured confounder represents population structure (GWAS) or lab batch effects \citep{wang2017confounder,wang2019jasa,luo2019batch}.
Identification results for factor and mixture models have been very well established \citep{anderson1956,kuroki2014measurement,titterington1985statistical,yakowitz1968identifiability}.
A major limitation of factor models is that they are in general not identified  when there are single-treatment confounders or when there are  causal relationships among the treatments  \citep{ogburn2019comment}. 
However, the equivalence assumption can accommodate models that allow for both of these features, 
for instance,  normal mixture models \citep{yakowitz1968identifiability};
see the supplement for an example.

Completeness is a fundamental concept   in statistics (see \cite{lehman1950completeness}),
and primitive conditions  are readily available in the literature, including  the fact that it holds for very general exponential families of distributions and for many regression models; see e.g.,  \citet{newey2003instrumental}  and \cite{d2011completeness}.
\cite{chen2014local} and \cite{andrews2017examples} have  shown that if $Z$ and $U$ are continuously distributed and the dimension of $Z$ is larger than that of $U$,  then  under    a mild regularity condition  the completeness condition holds generically in the sense that   the set of distributions for which completeness fails has a property analogous to having zero Lebesgue  measure.
By appealing to such results,   completeness holds in a large class of distributions and thus one may argue that it is  commonly satisfied.
The role of  completeness in this paper is  analogous to its   wide use in    a variety of nonparametric and semiparametric identification problems,  for instance, in   IV regression   \citep{newey2003instrumental},
IV quantile regression \citep{chernozhukov2005iv}, measurement error   problem \citep{hu2008instrumental}, missing data   \citep{miao2016varieties,d2010new}, and proximal inference \citep{miao2018proxy}. 
Our completeness  assumption  means that, conditional on $X$,  any variability in $U$ is captured by variability in $Z$,  
analogous to the relevance   condition in the  instrumental variable identification.
It  is easiest understood in the  categorical case.
For the binary  confounder case, completeness holds if   $U$ and $Z$ are correlated within each level of $X$.
When  both $U$ and $Z$ have $k$ levels,  
completeness means that  the matrix $[\pr(u_i\mid x,z_j)]_{k\times k}$ consisting of the conditional probabilities is invertible. 
This is stronger than dependence of $Z$ and $U$ given $X$.
Roughly speaking, dependence reveals that variability in $U$ is accompanied by  variability in $Z$, 
and  completeness reinforces that  any infinitesimal  variability in $U$ is accompanied by   variability in $Z$.
As a consequence, completeness in general fails if the number of levels or dimension of $Z$ is smaller  than that of $U$ or $Z$
is a coarsening of $U$.
In practice, completeness is more plausible if  practitioners     measure a rich set of potential auxiliary variables for the purpose of confounding adjustment. 
In the usual case that the dimension of $U$ is much smaller than that of $X$, the dimension of $Z$ can also be  small.
Completeness can be checked in specific models, for instance,  in the categorical case.
However, \cite{canay2013testability} show that  for unrestricted models the  completeness condition  is in fact untestable.
In the supplement, we further elaborate the discussion on completeness and provide both positive and negative examples to facilitate its   interpretation and use in practice.

Proposition 1 formalizes the equivalence and completeness conditions for the linear factor model that is widely used in GWAS and computational biology applications.

\begin{proposition}\label{prop:factor}
Consider   a factor model $X= \alpha U +\eta Z + \varepsilon$  for a vector of $p$ observed variables $X$,   
$q$ unobserved confounders  $U$, and  $r(\geq q)$  instrumental variables $Z$ such that $Z\ind U\ind \varepsilon$.
Without loss of generality we let $E(U)=0$, $\Sigma_U=I_q$,   $E(\varepsilon)=0$,   and $\Sigma_\varepsilon$ be diagonal.
Assuming there remain two disjoint submatrices of rank $q$  after deleting any row of $\alpha$,
we have that
\begin{itemize}
\item[(i)] $ \alpha \alpha^\T$ and $\Sigma_\varepsilon$ are uniquely determined from $\Sigma_{X-\eta Z}=\alpha\alpha^\T+\Sigma_\varepsilon$, 
and  any admissible  value for $\alpha$ can be written as $\tilde \alpha =\alpha R$ with $R$ an arbitrary $q\times q$ orthogonal matrix;
\item[(ii)] if  the components of $\varepsilon$ are mutually independent and  the joint characteristic function of  $X$ does not vanish, then any admissible joint distribution can be written as $\tilde \pr(x,u\mid z)=\pr(X=x, R^\T U=u\mid Z=z;\alpha)$ with $R$ an arbitrary $q\times q$ orthogonal matrix;
\item[(iii)] if $U,Z$, and $\varepsilon^\T$ are normal variables and $\eta^\T\gamma$ has full rank of $q$, then $\pr(u\mid x, z) \thicksim N(\gamma^\T x - \gamma^\T\eta z, \Sigma)$  and  is complete in $z$, where  $\gamma = (\Sigma_{X-\eta Z})^{-1} \alpha$, $\Sigma = I_q -  \alpha^\T (\Sigma_{X-\eta Z})^{-1}\alpha$.
\end{itemize}

\end{proposition}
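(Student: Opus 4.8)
The plan is to handle the three claims in turn, since each isolates a different layer of the model: second moments for (i), the full distribution for (ii), and a Gaussian conditioning computation for (iii).

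For part (i) I would first pass to the residual $W := X - \eta Z = \alpha U + \varepsilon$. Because $Z\ind U\ind\varepsilon$ and $\Sigma_U = I_q$, its covariance factors as $\Sigma_{X-\eta Z} = \alpha\Sigma_U\alpha^\T + \Sigma_\varepsilon = \alpha\alpha^\T + \Sigma_\varepsilon$, so identifying $\alpha\alpha^\T$ and the diagonal $\Sigma_\varepsilon$ is exactly the classical problem of splitting a covariance matrix into a rank-$q$ piece plus a diagonal piece. The stated hypothesis, that deleting any row of $\alpha$ still leaves two disjoint submatrices of rank $q$, is precisely the Anderson--Rubin identifiability condition, so I would invoke that theorem to conclude $\alpha\alpha^\T$ and $\Sigma_\varepsilon$ are uniquely determined. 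For the rotation statement, any admissible $\tilde\alpha$ must reproduce the same $\Sigma_{X-\eta Z}$ and the same $\Sigma_\varepsilon$, hence $\tilde\alpha\tilde\alpha^\T = \alpha\alpha^\T$; the elementary fact that two $p\times q$ matrices of full column rank with equal Gram products $AA^\T = BB^\T$ differ by a right orthogonal factor then gives $\tilde\alpha = \alpha R$.

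For part (ii) I would lift the argument from second moments to the entire distribution. Note that $W$ is independent of $Z$, and mutual independence of $U$ and the coordinates of $\varepsilon$ gives the factorization $\phi_W(t) = \phi_U(\alpha^\T t)\prod_j \phi_{\varepsilon_j}(t_j)$. The non-vanishing characteristic function lets me take a global continuous logarithm, turning this into an additive identity. Using part (i) to write any competing loading as $\tilde\alpha = \alpha R$ and matching the two representations of $\log\phi_W$ yields an equation whose left side depends on $t$ only through the $q$-dimensional combination $\alpha^\T t$, while its right side is additive across the coordinates $t_j$. The hard part is this functional equation: I would argue, using the Anderson--Rubin rank condition (two disjoint full-rank blocks survive deletion of any coordinate) together with the already-matched second moments from part (i), that the coordinatewise discrepancies must vanish, i.e. each $\tilde\varepsilon_j \stackrel{d}{=}\varepsilon_j$ and correspondingly $\tilde U \stackrel{d}{=} R^\T U$. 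This is a Cram\'er--Marcinkiewicz-type decomposition step, and it is where essentially all the work lies; the non-vanishing characteristic function and the rank condition are exactly the ingredients that rule out shuffling probability mass between the factor and the errors, leaving only the orthogonal reparametrization $U \mapsto R^\T U$. Translating back gives $\tilde\pr(x,u\mid z) = \pr(X=x, R^\T U = u\mid z;\alpha)$.

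For part (iii) I would specialize to the jointly Gaussian case and compute $\pr(u\mid x,z)$ directly. Conditioning on $Z=z$, the pair $(U, X)$ is Gaussian with $E(U)=0$, $E(X\mid Z=z)=\eta z$, $\cov(U, X)=\alpha^\T$, and $\var(X\mid Z=z)=\alpha\alpha^\T + \Sigma_\varepsilon = \Sigma_{X-\eta Z}$. The standard Gaussian conditioning formulas then give mean $\alpha^\T\Sigma_{X-\eta Z}^{-1}(x-\eta z) = \gamma^\T x - \gamma^\T\eta z$ and variance $I_q - \alpha^\T\Sigma_{X-\eta Z}^{-1}\alpha = \Sigma$, matching the claim with $\gamma = \Sigma_{X-\eta Z}^{-1}\alpha$. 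For completeness in $z$, I would observe that as $z$ varies the conditional law of $U$ is a Gaussian location family with fixed covariance $\Sigma$ and mean shifted by $-\gamma^\T\eta z$; since $\eta^\T\gamma$ has rank $q$, the map $z\mapsto \gamma^\T\eta z$ has full-dimensional range, so the location parameter sweeps out all of $\mathbb{R}^q$. Completeness then follows from the well-known completeness of the Gaussian location (exponential) family, via injectivity of the Fourier/Laplace transform: $E\{g(U)\mid X=x, Z=z\}\equiv 0$ in $z$ forces the transform of $g$ against the Gaussian kernel to vanish, hence $g=0$ almost everywhere.
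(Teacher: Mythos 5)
Your treatments of parts (i) and (iii) are correct and essentially coincide with the paper's: part (i) is the Anderson--Rubin identification result for the residual factor model $W = X - \eta Z = \alpha U + \varepsilon$ plus the elementary fact that $\tilde\alpha\tilde\alpha^\T = \alpha\alpha^\T$ forces $\tilde\alpha = \alpha R$, and part (iii) is standard Gaussian conditioning followed by completeness of the Gaussian location (exponential) family, exactly as the paper argues via Theorem 2.2 of Newey and Powell (2003). (One small omission: you should note that $\eta$ itself is identified, e.g.\ by the regression of $X$ on $Z$, before passing to the residual $W$; the paper does this in one line.)

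Part (ii), however, has a genuine gap, and you have located it yourself: after setting up the log-characteristic-function equation
\[
\log\phi_U(\alpha^\T t) - \log\phi_{\tilde U}(R^\T\alpha^\T t) \;=\; \sum_j \bigl\{\log\phi_{\tilde\varepsilon_j}(t_j) - \log\phi_{\varepsilon_j}(t_j)\bigr\},
\]
you assert that the coordinatewise discrepancies must vanish by a ``Cram\'er--Marcinkiewicz-type decomposition step,'' admitting this is where essentially all the work lies. But that step \emph{is} the entire content of part (ii); it is not a citable theorem in this form, and neither Cram\'er's decomposition theorem (about Gaussian summands) nor Marcinkiewicz's theorem (about polynomial exponents of characteristic functions) applies here. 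The rigidity of an additive decomposition $g(\alpha^\T t) = \sum_j h_j(t_j)$ is exactly the deconvolution-uniqueness problem that must be solved, and matching second moments plus the rank condition does not hand it to you without a real argument.

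The paper closes precisely this gap by reducing to Kotlarski's lemma (1967). Because deleting any row of $\alpha$ leaves two disjoint rank-$q$ submatrices, there exist disjoint index sets $\mathcal I, \mathcal J$ with $\tilde\alpha_{\mathcal I}, \tilde\alpha_{\mathcal J}$ invertible; writing $V = R^\T U$, one forms the two ``repeated measurements''
\[
\tilde\alpha_{\mathcal I}^{-1} W_{\mathcal I} = V + \tilde\alpha_{\mathcal I}^{-1}\varepsilon_{\mathcal I},
\qquad
\tilde\alpha_{\mathcal J}^{-1} W_{\mathcal J} = V + \tilde\alpha_{\mathcal J}^{-1}\varepsilon_{\mathcal J},
\]
whose error terms are mutually independent of each other and of $V$ (this is where mutual independence of the components of $\varepsilon$ enters), and whose joint characteristic function is non-vanishing because that of $X$ is. Kotlarski's lemma then pins down the distributions of $V$ and both error blocks, hence of the full $\varepsilon$ and of the joint law, giving $\tilde f(x,u\mid z) = f(X=x, R^\T U = u\mid z;\alpha)$. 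If you wish to avoid citing Kotlarski, you would have to reproduce the substance of its proof; your current sketch replaces it with a named but inapplicable classical theorem, so as written the argument for (ii) does not go through.
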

The first result follows from    \cite{anderson1956} by noting that $\eta$ is identified by the regression of $X$ on $Z$,
the third result can be obtained from the  completeness property of exponential families  \citep[Theorem 2.2]{newey2003instrumental},  and we prove the second result in the supplement.
The first two results hold without $Z$, i.e., when $\eta=0$.
This proposition demonstrates that,  for the linear factor model, any admissible  value   for $\alpha$ must be  some rotation of the true value and any admissible treatment-confounder distribution  must be  the joint distribution of $X$ and some rotation of $U$.
Proposition \ref{prop:factor}  requires that $p\geq 2q+1$ and that each  confounder is correlated with at least three observed variables, and therefore  implies ``no single- or dual-treatment confounders.'' 
This is stronger than the ``no  single-treatment confounder'' assumption of \cite{wang2019jasa};
however,  \cite{grimmer2020ive} argue that in fact \cite{wang2019jasa} require the much stronger assumption of ``no finite-treatment confounding."

\subsection{Identification}
Leveraging  auxiliary variables gives  the following identification result.

\begin{theorem}\label{thm:auxil}
Under   Assumptions \ref{assump:ign} and \ref{assump:auxil}, for any admissible joint distribution $\tilde \pr(x,u\mid z)$  that  solves $\pr(x\mid z)=\int_u  \tpr(x,u\mid z) du$, 
there  exists a   unique solution $\tpr(y\mid u, x)$ to the equation
\begin{eqnarray}\label{eq:int}
\pr(y\mid x,z) = \int_u \tpr(y\mid u,x)\tilde \pr(u\mid x, z)du,
\end{eqnarray}
and the potential outcome distribution is identified by
\begin{eqnarray}\label{eq:idn}
\pr\{Y(x)=y\} = \int_u \tilde \pr(y\mid u, x)\tilde\pr(u)du.
\end{eqnarray}
where $\tpr(u)$ is obtained from $\tilde \pr(x,u\mid z)$ and $\pr(z)$.
\end{theorem}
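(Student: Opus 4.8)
The plan is to exploit the equivalence assumption to reduce the problem to the true data-generating process after an invertible relabeling of the confounder. By Assumption \ref{assump:auxil}(ii), any admissible $\tpr(x,u\mid z)$ solving $\pr(x\mid z)=\int_u\tpr(x,u\mid z)\,du$ can be written as $\pr\{X=x, V(U)=u\mid z;\alpha\}$ for some invertible (though unknown) $V$. I would set $\tilde U = V(U)$ and treat it as a relabeled confounder. The crux is that, because $V$ is a bijection, $\tilde U$ generates the same information as $U$; hence both ignorability $Y(x)\ind X\mid U$ from Assumption \ref{assump:ign}(ii) and the exclusion restriction $Z\ind Y\mid(X,U)$ from Assumption \ref{assump:auxil}(i) continue to hold verbatim with $U$ replaced by $\tilde U$. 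This invariance under relabeling is what licenses running the whole argument as though $\tilde U$ were the true confounder, and it is the conceptual engine of the proof.

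For existence, I would exhibit the true conditional law of $Y$ given $(\tilde U, X)$ as a solution to \eqref{eq:int}. Writing $\tpr(y\mid u,x)=\pr(Y=y\mid \tilde U=u, X=x)$ and applying the exclusion restriction in the form $Z\ind Y\mid(X,\tilde U)$, I obtain $\pr(y\mid x,z)=\int_u \pr(Y=y\mid \tilde U=u,X=x,Z=z)\,\pr(\tilde U=u\mid x,z)\,du=\int_u \tpr(y\mid u,x)\,\tpr(u\mid x,z)\,du$, where the last equality uses that $\tpr(u\mid x,z)$ is exactly the conditional density of $\tilde U$ induced by the admissible distribution. This verifies that at least one solution exists. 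For uniqueness I would invoke completeness: the difference $g(u)$ of any two solutions satisfies $\int_u g(u)\,\tpr(u\mid x,z)\,du=0$ for every $z$, i.e. $E\{g(\tilde U)\mid X=x,Z;\alpha\}=0$ almost surely. The key step here is transferring the completeness of $\pr(u\mid x,z;\alpha)$ in $z$ from $U$ to $\tilde U=V(U)$: since $g(\tilde U)=g(V(U))=(g\circ V)(U)$ ranges over arbitrary square-integrable functions of $U$ as $g$ does, Assumption \ref{assump:auxil}(iii) forces $g(\tilde U)=0$ almost surely, so the solution is unique.

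Finally, identification follows from the g-formula \eqref{eq:effect} applied with $\tilde U$ in place of $U$. Ignorability with respect to $\tilde U$ together with consistency gives $\pr\{Y(x)=y\}=\int_u \pr(Y=y\mid \tilde U=u,X=x)\,\pr(\tilde U=u)\,du=\int_u \tpr(y\mid u,x)\,\tpr(u)\,du$, and since the marginal $\tpr(u)$ recovered from $\tpr(x,u\mid z)$ and $\pr(z)$ is precisely the law of $\tilde U$, this is exactly the claimed formula \eqref{eq:idn}. The main obstacle I anticipate is not computational but conceptual: carefully justifying that the conditional-independence conditions and the completeness property are genuinely preserved under the unknown invertible relabeling $V$. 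Once that invariance is nailed down, existence, uniqueness, and the concluding g-formula are each essentially immediate.
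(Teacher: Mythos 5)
Your proof is correct and follows essentially the same route as the paper's: invoke equivalence to write the admissible distribution as the law of $(X, V(U))$, observe that ignorability, the exclusion restriction, and completeness are all preserved under the invertible relabeling $V$, verify that $\pr\{y\mid V(U)=u,x\}$ solves \eqref{eq:int}, and conclude uniqueness and the g-formula \eqref{eq:idn}. Your explicit justification of the completeness transfer via $g(\tilde U)=(g\circ V)(U)$ is a welcome elaboration of a step the paper states only tersely.
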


Although the equivalence and completeness assumptions impose restrictions on the treatment-confounder distribution $\pr(x,u\mid z)$, the outcome model  $\pr(y\mid u,x)$    is left unrestricted in the sense that the parameter space of $\pr(y\mid u,x)$ is  all possible conditional densities of $Y$ given $X$ and a $q$-dimensional confounder $U$.
Theorem \ref{thm:auxil} depicts three steps of the auxiliary variables approach.
First we  obtain an arbitrary admissible  distribution $\tilde \pr(x,u\mid z)$;
then by solving equation \eqref{eq:int} we identify $\tilde\pr(y\mid u,x)$, which encodes  the treatment effect within each stratum of the confounder;  and finally we integrate the stratified  effect to obtain the treatment effect in the population. 
The auxiliary variables approach does not estimate the confounder, or even a surrogate confounder, and thus dispenses with the need for an infinite number of treatments and  avoids the forced positivity violations that were described by \cite{damour2019multi,damour2019comment}  and \cite{ogburn2020counterexamples}.

The auxiliary variable is indispensable in the second stage of the approach; without it  one  has to solve  
$\pr(y\mid x) = \int_u \tpr(y\mid u,x) \tilde\pr(u\mid x)du$ for the outcome model. 
The solution to this equation is  not unique given $\pr(y\mid x)$ and $\tilde\pr(u\mid x)$.
However,  by incorporating an auxiliary variable satisfying the exclusion restriction, we obtain equation \eqref{eq:int},
a Fredholm integral equation of the first kind \citep[chapter 15]{kress1989linear}. 
The solution of this equation is unique  under  the completeness condition and thus identifies  the outcome model, 
up to an invertible transformation of the confounder.
Equation \eqref{eq:int} also offers testable implications for Assumption \ref{assump:auxil}: 
if the equation does not have a solution, then  Assumption \ref{assump:auxil} must be partially violated.

Unlike  the g-formula \eqref{eq:effect},  we  do not identify the true outcome model $\pr(y\mid u,x)$ or the true confounder distribution $\pr(u)$, 
but instead we obtain  $\tilde\pr(y\mid u,x)=\pr\{y\mid V(U)=u, x\}$ and  $\tilde\pr(x,u\mid z)=\pr\{x,V(U)=u\mid z\}$ for some invertible transformation $V(U)$. Nonetheless,  for any such admissible pair of  outcome model and treatment-confounder distribution, we can still identify the potential outcome distribution, because 
ignorability holds conditional on any such  transformation of $U$.
The equivalence assumption guarantees that any admissible  distribution $\tpr(x,u\mid z)$ 
can be  used  for identifying the potential outcome distribution;
we do not need to use the truth $\pr(x,u\mid z)$ and thus bypass the challenge to identifying it.

Although Theorem \ref{thm:auxil} shows that  the potential outcome distribution is identified, 
the integral equation \eqref{eq:int} does not admit an analytic solution in general and one has to resort to numerical methods. For instance, \cite{chae2019algorithm} provide an estimation algorithm that is conjectured  to provide a consistent estimator of the unknown function under mild conditions.  
Nonetheless,  in certain special cases, a closed-form identification formula  can be derived.
\begin{example}\label{examp:iv}
Suppose  $p$ treatments, one confounder, one instrumental variable,  and one outcome are generated as $X= \alpha U + \eta Z + \varepsilon$ and $Y=m(X,U,e)$,  where $(\varepsilon^\T, U, Z)$  is a vector  of   independent normal variables with mean zero,  $\Sigma_U=1$, $m$ is  unknown,  and $e \ind (\varepsilon^\T, U, Z)$.
We require that at least three entries of  $\alpha$ are  nonzero   and that $\eta^\T\gamma\neq0$, in which case, 
the equivalence and completeness assumptions are met according to Proposition \ref{prop:factor}.
Given  an admissible value $\tilde \alpha$, we let  $\tilde \gamma= (\Sigma_{X-\eta Z})^{-1}\tilde \alpha$,  $\tilde \sigma^2 = 1 - \tilde \alpha ^\T (\Sigma_{X-\eta Z})^{-1}\tilde \alpha$,  then $\tpr(u\mid x,z) \thicksim N(\tilde \gamma^\T x -  \tilde \gamma^\T \eta  z, \tilde \sigma^2)$  is an admissible distribution for $\pr(u\mid x,z)$. 
Let
\begin{align*}
h_1(t)&=\int_{-\infty}^{+\infty}\exp(-{\rm i}tz)\phi(z)dz,\\
h_2(y,x, t)&= - \frac{\tilde \gamma^\T \eta}{\tilde \sigma}\int_{-\infty}^{+\infty} \exp\left\{-{\rm i}t \frac{\tilde \gamma^\T x -  \tilde \gamma^\T \eta z}{\tilde \sigma}\right\} \pr(y\mid x, z)dz,
\end{align*}
be the Fourier transforms of  the standard normal density   function $\phi$ and $\pr(y\mid x,z)$ respectively, 
where  ${\rm i}=(-1)^{1/2}$ denotes the imaginary unity.
Then the solution to \eqref{eq:int} with   $\tpr(u\mid x,z)$ given above is
\[\tpr(y\mid x,u)=\frac{1}{2\pi}\int_{-\infty}^{+\infty}\exp\left(\frac{\i t u}{\tilde \sigma}\right)\frac{h_2(y,x, t)}{h_1(t)} dt,\]
and the potential outcome distribution is 
\[\pr\{Y(x)=y\} = \int_{-\infty}^{+\infty} \tpr(y\mid u,x)\phi(u)du.\]
\end{example}

Detailed derivation for Example \ref{examp:iv} is deferred to  the supplement.
If  a linear outcome model is assumed and the structural causal parameter is of interest, 
then identification and estimation   are simplified as we will discuss in Section 5.2.
In the supplement, we   include an  additional example where identification  rests on  a  confounder proxy variable.

\subsection{A comparison to the conventional instrumental variable and proximal inference approaches}

We briefly describe  the difference between our auxiliary variables  approach and the  instrumental variable and negative control or proximal inference approaches.
In addition to  the  exclusion restriction $(Z\ind Y\mid (X,U))$, the  instrumental variable approach requires additional assumptions to achieve identification,  such as an additive outcome model not allowing for interaction of the treatment and the confounder $E(Y\mid u,x)= m(x) + u$ as well as completeness in $z$ of $\pr(x\mid z)$  \citep{newey2003instrumental}.
Alternative   strands of  using IV for confounding adjustment include  nonseparable outcome models \citep{imbens2009identification,chernozhukov2005iv,wang2018bounded}  and local average treatment effect models \citep{angrist1996identification,ogburn2015doubly}.
However, these two approaches    typically focus on a single (or binary) treatment and
we are not aware of   any extensions for    multiple treatments.
Therefore,   we   compare our approaches to the additive model, which has a straightforward extension to multiple treatments.
The completeness of $\pr(x\mid z)$ guarantees  uniqueness of the solution to  $E(Y\mid z)=\int_x m(x)\pr(x\mid z)dx$, 
an integral equation identifying  $m(x)$.
In contrast, our approach does not rest on  outcome model restrictions and hence  accommodates interactions.
Moreover,  the completeness  of $\pr(x\mid z)$  in $z$ entails at least as many  instrumental variables as  there are confounded treatments, 
and requires  each confounded treatment to be correlated with at least one instrumental variable.
But for  our auxiliary variables approach, completeness of $\pr(u\mid x,z)$ in $z$ requires the dimension of $Z$ to be  as great as that of $U$, which can be much smaller than that of the treatments.
In the special case of a single  confounder, completeness of $\pr(u\mid x,z)$  can be  more plausibly achieved with only one auxiliary variable and with only one treatment  correlated with it.

Proximal inference    \citep{miao2018proxy,shi2020multiply,tchetgen2020introduction} allows for unrestricted outcome models, 
but entails at least two confounder proxies $(W, Z)$ with exclusion restrictions:
$W\ind (X,Z)\mid U$ and $Z\ind Y\mid (X,U)$,  even in the single confounder setting.
This approach additionally assumes existence of a function $h(w,y,x)$, called the confounding bridge function, such that  $\pr(y\mid u,x)=\int_w h(w,y,x)\pr(w\mid u)dw$, i.e.,  $h(w,y,x)$  suffices to depict the relationship between the confounding on $Y$ and $W$.
The integral equation $\pr(y\mid x,z)=\int_w h(w,y,x)\pr(w\mid x,z)dw$ is solved for the confounding bridge $h(w,y,x)$, and the potential outcome distribution is obtained by $\pr\{Y(x)=y\}=\int_w h(w,y,x)\pr(w)dw$, 
where completeness of $\pr(u\mid x,z)$ in $z$ is also required for identification of $\pr\{Y(x)\}$.

A strength of these two approaches is that they leave the treatment-confounder distribution unrestricted. 
But when the correlation structure of multiple treatments is informative about the presence and nature of confounding, as is generally the case in GWAS and computational biology applications,
our method can exploit this correlation structure to remove the confounding bias, 
while the conventional instrumental variable and  proximal inference approaches are agnostic to the treatment-confounder distribution and therefore unable to leverage any information it contains.

\section{Identification under the null treatments assumption}\label{sec:specificity}

\subsection{Identification}
Without auxiliary variables,
we let $\pr(x,u\mid \alpha)$ denote a model for the treatment-confounder distribution and $\pr(x;\alpha)$ the resulting marginal distribution indexed by a possibly infinite-dimensional parameter $\alpha$.
Let $\C=\{i:\pr(u\mid x) \text{ varies with } x_i\}$  denote the indices of confounded treatments  that are   associated with the confounder and $\A=\{i: \pr(y\mid u,x) \text{ varies with } x_i\}$  the active ones that affect the outcome.  A key feature of this identification strategy is that the analyst does not need to know which treatments are confounded or which are active.
We  make the following  assumptions.
\begin{assumption}\label{assump:null}
\begin{enumerate}
\item[(i)] Null treatments:   the cardinality of the intersection $\C\cap \A$  does not exceed $(|\C|-q)/2$, where $|\C|$ is the cardinality of  $\C$ and must be larger than the dimension of $U$;
\item[(ii)] Equivalence:  for any $\alpha$, any   $ \tilde \pr(x, u)$  that solves  $\pr(x;\alpha)=\int_u \tpr(x,u) du$  can be written as $ \tilde\pr(x, u) =  \pr\{X=x, V(U)=u;\alpha\}$ for some  invertible but not necessarily known  function $V$;
\item [(iii)] Completeness: for any $\alpha$,  $\pr(u\mid x;\alpha)$ is complete in   any $q$-dimensional subvector $x_\S$ of the confounded treatments $x_\C$.
\end{enumerate}
\end{assumption}

Analogous to Assumption \ref{assump:auxil},   the equivalence and completeness assumptions restrict the treatment-confounder distribution, but hold for certain well-known classes of models like   factor or mixture models.
Under  the equivalence assumption, one can   identify the confounded treatments set $\C$ by using an arbitrary admissible joint distribution $\tpr(x,u)$ without knowing the truth.
The null treatments assumption entails that  fewer than  half  of the  confounded treatments   can have  causal effects on the outcome but does not require  knowledge of  which  treatments are active.
The assumption  is reasonable  in many empirical studies where only a few but not many of the treatments can causally affect the outcome.
For example, in   GWAS or analyzing electronic health record  databases for off-label drugs that improve COVID-19 outcomes, one may expect that most treatments are null without knowing which are active treatments.
A counterpart of the null treatments assumption was previously considered by \cite{wang2017confounder} in the context of effects of a treatment on multiple outcomes.
They consider a linear factor model  $Y=\alpha U +\beta X +\varepsilon$  for   $p$ outcomes $(Y)$ and
show that $\beta$ is identified  if only a small proportion of its elements are nonzero.
By  reversing  the roles of treatments and outcome, their approach  can be adapted to  test but not identify multi-treatment  effects,   because   the   coefficient in the regression of treatments on the outcome and the confounder  is not the causal effect of interest.
Another related concept is the $s$-sparsity \citep{kang2016instrumental} used in Mendelian randomization, which assumes that at most $s$  single nucleotide polymorphisms can directly affect the outcome of interest.  
In contrast, the null treatments assumption does not necessarily require the effects to be sparse and imposes no restrictions on the unconfounded treatments.

Leveraging  the null treatments assumption gives  the following identification result.

\begin{theorem}\label{thm:null}
Under Assumptions \ref{assump:ign} and \ref{assump:null}, 
for any joint distribution $\tilde \pr(x,u)$  that  solves $\pr(x)=\int_u \tpr(x,u) du$, 
there  exists a   unique solution $\tilde \pr(y\mid u, x)$ to the equation
\begin{eqnarray}\label{eq:int2}
\pr(y\mid x) = \int_u \tpr(y\mid u, x)\tilde\pr(u\mid x)du,
\end{eqnarray}
and the potential outcome distribution is identified by
\begin{eqnarray}\label{eq:idn2}
\pr\{Y(x)=y\} = \int_u \tilde\pr(y\mid u, x)\tilde\pr(u)du.
\end{eqnarray}
\end{theorem}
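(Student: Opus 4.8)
The plan is to mirror the three-step logic of Theorem \ref{thm:auxil}, with a subvector of confounded-but-null treatments playing the role that the auxiliary variable $Z$ played there. The exclusion restriction $Z \ind Y\mid (X,U)$ is now supplied, free of charge, by the null treatments assumption: if $x_\S$ indexes treatments that are confounded but inactive, then the true outcome model does not vary with $x_\S$, so varying $x_\S$ perturbs only the kernel $\tpr(u\mid x)$ in \eqref{eq:int2} and leaves the unknown $\tpr(y\mid u,x)$ fixed. This is precisely the leverage that completeness needs.

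First I would record what the equivalence assumption buys. Writing the admissible $\tpr(x,u)=\pr\{X=x,V(U)=u\}$ for an invertible $V$, the confounded set $\C$ is unchanged, since $\tpr(u\mid x)$ varies with $x_i$ exactly when $\pr(U\mid x)$ does; and completeness in any $q$-dimensional subvector $x_\S$ of $x_\C$ transfers from $\pr(u\mid x)$ to $\tpr(u\mid x)$, because completeness is invariant under invertible reparametrization of $U$. For existence, the transformed truth $\tpr(y\mid u,x):=\pr\{y\mid V(U)=u,x\}$ solves \eqref{eq:int2} by the law of total probability applied to the variable $V(U)$, and it depends on $x$ only through the active treatments, so it satisfies the null treatments bound $|\A\cap\C|\le(|\C|-q)/2$.

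The crux is uniqueness, which I would establish within the class of candidate outcome models that respect the null treatments assumption. Suppose $\tpr_1$ and $\tpr_2$ both solve \eqref{eq:int2} and each depends on $x$ only through an active set $\A_j$ with $|\A_j\cap\C|\le(|\C|-q)/2$. Then $|(\A_1\cup\A_2)\cap\C|\le|\C|-q$, so at least $q$ confounded treatments are inactive for both models; let $x_\S$ be a $q$-dimensional subvector drawn from these. The difference $g(y,u,x)=\tpr_1(y\mid u,x)-\tpr_2(y\mid u,x)$ is then free of $x_\S$ and satisfies $\int_u g(y,u,x)\tpr(u\mid x)\,du=0$. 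Fixing $x_{-\S}$ and $y$ and varying $x_\S$, completeness of $\tpr(u\mid x)$ in $x_\S$ forces $g(y,u,x_{-\S})=0$ for almost every $u$; ranging over $x_{-\S}$ and $y$ gives $\tpr_1=\tpr_2$. This is exactly where the constant $(|\C|-q)/2$ is used, and it is also where the argument sidesteps any need to know which treatments are active, since it holds simultaneously for every admissible pair of active sets.

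Finally, \eqref{eq:idn2} follows because ignorability holds conditional on $V(U)$ just as on $U$, so the g-formula applied with confounder $V(U)$ and marginal $\tpr(u)=\int_x\tpr(x,u)\,dx$ returns the same potential outcome distribution. The main obstacle I anticipate is the uniqueness step, specifically arguing that the combinatorial budget in Assumption \ref{assump:null} leaves $q$ common-null confounded treatments for every pair of admissible solutions, and checking that completeness may legitimately be invoked on that (data-dependent) subvector. Existence and the closing integration are comparatively routine given Assumptions \ref{assump:ign} and \ref{assump:null}.
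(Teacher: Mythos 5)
Your proposal is correct and follows essentially the same route as the paper's proof: use equivalence to write the admissible distribution as $\pr\{X=x,V(U)=u\}$ for invertible $V$ (so that $\C$, completeness, and ignorability all transfer), take the transformed truth $\pr\{y\mid V(U)=u,x\}$ as the solution for existence, and prove uniqueness by noting that the difference of two admissible solutions depends on at most $|\C|-q$ confounded treatments, leaving a $q$-dimensional confounded subvector $x_\S$ on which completeness annihilates the difference. Your counting step $|(\A_1\cup\A_2)\cap\C|\le|\C|-q$ and the closing g-formula with confounder $V(U)$ are exactly the paper's argument, stated with slightly more care about the solution class and the data-dependence of $x_\S$.
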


This theorem states that treatment effects are identified if fewer than half of the confounded treatments can affect the outcome.
The  outcome model  is left unrestricted except for the restriction imposed by the null treatments assumption.

Analogous to the auxiliary variables approaches, the equivalence assumption allows us  to use an arbitrary admissible treatment-confounder distribution for identification, 
the null treatments assumption allows us to construct  Fredholm integral equations of the first kind to solve 
for the outcome model, and the completeness assumption guarantees uniqueness of the  solution.
Without the null treatments assumption,  the solution to \eqref{eq:int2} is not unique as we noted above.

Theorem \ref{thm:null}  holds if we replace $q$ with an integer $s\geq q$ in Assumption \ref{assump:null}, 
in which case,  completeness   is weakened but the null treatments assumption is strengthened.
If the average treatment effect is of interest,  we could define the active treatments set as $\A=\{i: E(Y\mid u,x) \text{ varies with } x_i\}$ and analogously   identify $E\{Y(x)\}$.
If a  linear outcome model   is assumed and the structural parameter is of interest, the identification and estimation are simplified. We demonstrate this  in Section \ref{sec:linear}.

To illustrate, the following is a simple example where Assumption \ref{assump:null} holds and  identification is achieved. 
We are not aware of any previous identification results for this setting, in particular  when  no parametric assumption is imposed on the outcome model.
\begin{example}\label{exam:specificity}
Suppose  $p$  treatments, one confounder, and one outcome are generated as $X= \alpha U +  \varepsilon$ and $Y=m(X,U,e)$,  
where $(\varepsilon^\T, U)$ is  a vector  of independent  normal variables with mean zero, $\Sigma_U=1$, $m$ is  unknown,  and $e \ind (\varepsilon, U)$. 
Suppose the  entries of $\alpha$ are nonzero and $m$  can depend  on at most $(p-1)/2$  treatments, 
but we do not know which ones.
In this setting, $\pr(u\mid x)\thicksim N(\gamma^\T x, \sigma^2)$ with $\gamma=\Sigma_X^{-1} \alpha$ and $\sigma^2= 1- \alpha^\T \Sigma_X^{-1} \alpha$.
Because the entries of $\alpha$ are nonzero, the equivalence assumption is met (Proposition \ref{prop:factor} with $\eta=0$) and  all  entries of $\gamma$ must be nonzero (lemma 2 in the supplement);  thus, $\pr(u\mid x)$ is complete   in each treatment.
As a result, the potential outcome distributions and  treatment effects are identified.
\end{example}

The null treatments approach proceeds  by first obtaining an admissible $\tilde\pr(x,u)$,  
then finding the solution to  \eqref{eq:int2}, and finally calculating  $\pr\{Y(x)\}$ from \eqref{eq:idn2}.
Equation \eqref{eq:int2} cannot be solved directly as we do not know which treatments are null.
As an informal, heuristic intuition for how this identification strategy works in this example, note that, if we knew the identity of any one of the null treatments we could treat it as an auxiliary variable and apply the auxiliary variables approach. In the absence of such knowledge we can imagine using each $X$ as an auxiliary variable; under the null treatments assumption more than half of the $X$'s must give the same, correct solution to \eqref{eq:int2}. We describe a constructive method that formalizes this idea.

Let $x_\S$ denote an arbitrary $q$-dimensional subvector of $x_\C$ and $x_{\bar \S}$  the rest components of $x$ except for $x_\S$. Given $\tpr(x,u)$ and $\pr(y\mid x)$,  we solve 
\begin{eqnarray}\label{eq:int3}
\pr(y\mid x) = \int_u \tpr(y\mid u, x_{\bar \S})\tilde\pr(u\mid x_\S, x_{\bar \S})du,
\end{eqnarray}
for $ \tpr(y\mid u, x_{\bar \S})$ for all choices of  $x_\S$.

\begin{proposition}\label{prop:null}
Under Assumptions \ref{assump:ign} and \ref{assump:null},  for any choice for $x_\S$, if the solution to \eqref{eq:int3}  exists and depends on at most $(|\C|-q)/2$ ones of the confounded treatments,
then it must solve \eqref{eq:int2}. 
\end{proposition}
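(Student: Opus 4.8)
The plan is to compare the hypothesized solution of \eqref{eq:int3} against a reference solution of \eqref{eq:int2} and show the two must coincide. By the equivalence assumption, any admissible $\tpr(x,u)$ can be written as $\tpr(x,u)=\pr\{X=x,V(U)=u\}$ for an invertible $V$; hence setting $\tpr(y\mid u,x)=\pr\{y\mid V(U)=u,x\}$ and integrating over $V(U)$ shows that this transformed true outcome model solves \eqref{eq:int2}. Because $V$ transforms only the confounder and leaves the treatments untouched, this solution depends on $x$ through exactly the active set $\A$, and in particular through at most $|\C\cap\A|\le(|\C|-q)/2$ of the confounded treatments. I would take this as my reference solution and, since the claim is that the hypothesized solution \emph{solves} \eqref{eq:int2}, it suffices to prove that the hypothesized solution equals this reference.

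Next I would fix the given $x_\S$ and let $\tpr(y\mid u,x_{\bar\S})$ denote the hypothesized solution of \eqref{eq:int3}, which by assumption depends on at most $(|\C|-q)/2$ of the confounded treatments; call that index set $\mathcal{D}\subseteq\C\setminus\S$. Since $(x_\S,x_{\bar\S})$ is a repartition of $x$, the kernel $\tpr(u\mid x_\S,x_{\bar\S})$ in \eqref{eq:int3} coincides with $\tpr(u\mid x)$ in \eqref{eq:int2}, and both equations share the left-hand side $\pr(y\mid x)$. Subtracting them therefore yields, for every $x$ and $y$,
\begin{equation*}
\int_u \bigl\{\tpr(y\mid u,x)-\tpr(y\mid u,x_{\bar\S})\bigr\}\,\tpr(u\mid x)\,du = 0,
\end{equation*}
where $\tpr(y\mid u,x)$ now denotes the reference solution. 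Viewed as a function of $x$, the bracketed difference depends on the confounded treatments only through the union $(\C\cap\A)\cup\mathcal{D}$ (it may additionally depend on unconfounded treatments, but these are irrelevant to the counting that follows and will be held fixed).

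The crux is a counting step. Since both $|\C\cap\A|$ and $|\mathcal{D}|$ are at most $(|\C|-q)/2$, the union $(\C\cap\A)\cup\mathcal{D}$ contains at most $|\C|-q$ confounded treatments. As $|\C|>q$, there remain at least $q$ confounded treatments on which the bracketed difference does not depend, and I would pick any $q$ of them to form a $q$-dimensional subvector $x_{\S'}$. Writing the difference as $\delta(y,u,x_{\bar\S'})$, free of $x_{\S'}$, the displayed identity becomes $E\{\delta(y,U,x_{\bar\S'})\mid x_{\S'},x_{\bar\S'}\}=0$ for all values of $x_{\S'}$. Because $x_{\S'}$ is a $q$-dimensional subvector of the confounded treatments, Assumption \ref{assump:null}(iii) supplies completeness of $\tpr(u\mid x)$ in $x_{\S'}$, which forces $\delta\equiv0$ for each fixed $(y,x_{\bar\S'})$. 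Hence $\tpr(y\mid u,x_{\bar\S})$ equals the reference solution of \eqref{eq:int2}, and so itself solves \eqref{eq:int2}.

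The main obstacle I anticipate is the dimension bookkeeping in the counting step: one must verify that the two dependence sets can be combined so that their union still leaves a full $q$-dimensional clean subvector of confounded treatments, even when $|\C|-q$ is odd. This is exactly where the factor of one-half in the null treatments assumption and the strict inequality $|\C|>q$ are both essential; once such a subvector is secured, the completeness assumption does the remaining work mechanically. A secondary point to handle carefully is the use of the equivalence assumption to justify that the reference solution inherits the same active-treatment support as the original outcome model, so that its count over $\C$ is genuinely bounded by $(|\C|-q)/2$.
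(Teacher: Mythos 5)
Your proof is correct, but it takes a genuinely different route from the paper's. You compare the hypothesized solution of \eqref{eq:int3} directly against the reference solution $\tpr(y\mid u,x)=\pr\{y\mid V(U)=u,x\}$ of \eqref{eq:int2}: subtracting the two equations (which share left-hand side and kernel), you bound the confounded-treatment dependence of the difference by $|\C\cap\A|+|\mathcal{D}|\le |\C|-q$, extract a clean $q$-dimensional subvector $x_{\S'}$, and apply completeness in $x_{\S'}$ to force the difference to vanish. This is the same counting-plus-completeness mechanism that the paper uses to prove uniqueness in Theorem \ref{thm:null}, applied once and with no case analysis. The paper's own proof of Proposition \ref{prop:null} is instead a case analysis with contradiction on the dependence set $\B$ of the candidate solution: if $x_\B$ contains all active confounded treatments, uniqueness gives the claim; if it misses some, the paper embeds the candidate into an expanded equation whose unknown may depend on $x_\B\cup x_\A$, notes the kernel is still complete in the $q+t$ leftover null confounded treatments, and concludes the candidate would have to equal $\tpr(y\mid u,x_\A)$ --- a contradiction, so no solution exists in that case. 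The paper's route therefore proves strictly more than the stated implication: non-existence of a solution flags a choice of $x_\S$ that overlaps the active set, which is what makes the constructive algorithm and the testability remarks work; your route proves exactly the stated implication, more economically. Two minor points: your worry about odd $|\C|-q$ is a non-issue, since both dependence counts are integers bounded by $(|\C|-q)/2$, so their union always leaves at least $q$ clean confounded treatments; and the completeness you invoke is for the admissible $\tpr(u\mid x)$ rather than the true $\pr(u\mid x)$ --- that transfer (via invertibility of $V$, exactly as in the paper's proof of Theorem \ref{thm:null}) deserves an explicit sentence rather than a bare citation of Assumption \ref{assump:null}(iii).
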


\subsection{Hypothesis testing without auxiliary variables and null treatments assumptions}
An immediate extension of the null treatments approach provides  a test of the sharp null hypothesis  of no joint effects, which requires neither auxiliary variables nor the null treatments assumption. 
The sharp null hypothesis is $\mathbb H_0: \pr(y\mid u,x)=\pr(y\mid u)$ for all  $x$.

\begin{proposition}\label{thm:hyp}
	Under Assumption \ref{assump:ign} and  (ii)--(iii) of Assumption \ref{assump:null}
	and given an admissible joint distribution $\tilde \pr(x,u)$, 
	if  the null hypothesis $\mathbb H_0$ is correct, 
	then for any $q$-dimensional subvector $x_\S$ of the confounded treatments $x_\C$, 
	the solution to  the following equation	exists and is  unique,  
	\begin{eqnarray}\label{eq:hyp1}
	\pr(y\mid x) = \int_u \tpr(y\mid u, x_{\bar \S})\tilde\pr(u\mid x_{ \S},x_{\bar \S})du,
	\end{eqnarray}
and  the solution must satisfy the  following equality,
	\begin{eqnarray}\label{eq:hyp2}
	\pr(y)=  \int_u \tilde\pr(y\mid u, x_{\bar \S})\tilde\pr(u)du.
	\end{eqnarray}
\end{proposition}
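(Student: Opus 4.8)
The plan is to use the equivalence assumption to reduce everything to an invertible transformation of the true confounder, then to exploit the sharp null to exhibit an explicit solution of \eqref{eq:hyp1}, use completeness to show it is the only one, and finally read off the testable equality \eqref{eq:hyp2} by a bare marginalization. First I would invoke Assumption \ref{assump:null}(ii): since $\tpr(x,u)$ is admissible, there is an invertible map $V$ with $\tpr(x,u)=\pr\{X=x,V(U)=u\}$, so that $\tpr(u\mid x_\S,x_{\bar \S})$ is the conditional law of $V(U)$ given $X=x$ and $\tpr(u)$ its marginal. Under $\mathbb H_0$ the true outcome model is free of $x$, so ignorability (Assumption \ref{assump:ign}(ii)) gives $Y\ind X\mid U$, and because $V$ is invertible this upgrades to $Y\ind X\mid V(U)$. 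This conditional independence is the structural fact driving the whole argument.

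For existence, I would set $\tpr^*(y\mid u)$ equal to the conditional density of $Y$ given $V(U)=u$, which by the displayed conditional independence does not depend on $x$. Conditioning on $V(U)$ and integrating it out gives, for every $x=(x_\S,x_{\bar \S})$,
\[
\int_u \tpr^*(y\mid u)\,\tpr(u\mid x_\S,x_{\bar \S})\,du=\pr(y\mid x),
\]
so $\tpr(y\mid u,x_{\bar \S})=\tpr^*(y\mid u)$, taken constant in $x_{\bar \S}$, solves \eqref{eq:hyp1}.

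For uniqueness I would transport the completeness of Assumption \ref{assump:null}(iii) from the truth to $\tpr$. If a difference of two solutions $d(u)$ satisfies $\int_u d(u)\,\tpr(u\mid x_\S,x_{\bar \S})\,du=0$ for all $x_\S$, then writing $\tpr(\cdot\mid x)$ as the pushforward of $\pr(\cdot\mid x)$ under $V$ turns this into $E\{d(V(U))\mid X_\S,x_{\bar \S}\}=0$ almost surely, and completeness of $\pr(u\mid x;\alpha)$ in $x_\S$ forces $d\circ V\equiv 0$, hence $d\equiv 0$ since $V$ is invertible. Thus the solution of \eqref{eq:hyp1} is unique and equals $\tpr^*(y\mid u)$. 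Plugging it into the right-hand side of \eqref{eq:hyp2} and integrating against the marginal $\tpr(u)$ of $V(U)$ gives $\int_u \tpr^*(y\mid u)\,\tpr(u)\,du=\pr(y)$, because $\tpr^*(y\mid u)$ and $\tpr(u)$ are a genuine conditional and marginal of $(Y,V(U))$ whose product marginalizes to the law of $Y$; this is exactly the g-formula value $\pr\{Y(x)=y\}=\pr(y)$ that the sharp null predicts.

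I expect the main obstacle to be one of care rather than depth: correctly transporting completeness through the unknown invertible $V$, and tracking that the solution is constant in $x_{\bar \S}$ so that the \emph{same} function $\tpr^*$ serves every choice of the subvector $x_\S$. It is precisely this last feature — that under the full sharp null a single outcome law solves \eqref{eq:hyp1} simultaneously for all $x_\S$ — that lets the argument dispense with the null treatments condition in Assumption \ref{assump:null}(i) and still deliver a valid test of $\mathbb H_0$ via the checkable equality \eqref{eq:hyp2}.
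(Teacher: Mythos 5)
Your proposal is correct and matches the paper's reasoning: the paper proves this proposition as an immediate corollary of Theorem \ref{thm:null} by observing that under $\mathbb H_0$ the null treatments assumption holds trivially, so that the solution to \eqref{eq:hyp1} is free of $x$ and the identification formula reduces to $\pr(y)$. Your argument simply inlines the same ingredients used in the proof of that theorem — equivalence to obtain the invertible $V$, the explicit $x$-free solution $\pr\{y \mid V(U)=u\}$, completeness transported through $V$ for uniqueness, and marginalization for \eqref{eq:hyp2} — so it is the same approach, spelled out rather than cited.
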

This result allows us to construct valid hypothesis tests even in the absence of auxiliary variables or a commitment to the null treatments assumption. 
Under Assumption \ref{assump:ign} and (ii)--(iii) of   Assumption \ref{assump:null}, 
evidence  against the existence of the solution to \eqref{eq:hyp1}  or against the   equality \eqref{eq:hyp2}  is evidence against $\mathbb H_0$.
The proof is immediate by noting that, under the null hypothesis $\mathbb H_0$,  the null treatments assumption is trivially satisfied. 
Thus,  if $\mathbb H_0$ is correct, the solution to \eqref{eq:hyp1} does not depend $x$, and the right hand side of \eqref{eq:hyp2}  identifying the potential outcome distribution $\pr\{Y(x)=y\}$ must be equal to the observed outcome distribution $\pr(Y=y)$.

\section{Estimation}\label{sec:estimation}
\subsection{General estimation strategies}
In this section we adhere to a common principle of causal inference, and indeed statistics more broadly, that is nicely summed up by  \cite{cox2011principles} (via \citealp{imai2019discussion}):
\begin{quote}
If an issue can be addressed nonparametrically then it will often be better to tackle it parametrically; however, if it cannot be resolved nonparametrically then it is usually dangerous to resolve it parametrically. (p. 96)
\end{quote}
Having proven identification without recourse to parametric outcome models, we will see that estimation does, in fact, require them. This is similar to myriad other causal inference problems, where nonparametric identification is commonly followed by estimation via parametric or sometimes semiparametric models. One key difference is that, in the absence of parametric assumptions, there is no closed-form identifying expression for treatment effects in this setting, because in general there is no closed-form solution to the integral equations that are involved in identification. We first describe an estimation procedure in full generality and then introduce some choices of parametric outcome models that make it feasible in practice.

Theorem \ref{thm:auxil} points to the following   auxiliary variables algorithm for estimation of $\pr\{Y(x)\}$.

\begin{framed}
The auxiliary variables algorithm:
\begin{itemize}[leftmargin=6em]
	  \item[\emph{Aux-1}]  Obtain an arbitrary admissible joint distribution $\tpr(x,u, z)$.
  
	\item[\emph{Aux-2}] Use the estimate from Step 1, along with an estimate of $\pr(y\mid x,z)$, to solve Equation \eqref{eq:int} for an estimate of $\tpr(y \mid u, x)$.
	
	\item[\emph{Aux-3}] Plug the estimate of $ \tpr(y \mid u, x)$ from Step 2 and the estimate of $\tilde \pr(u)$ derived from $\tpr(u, x,z)$  into Equation \eqref{eq:idn} to estimate $\pr\{Y(x)\}$.
\end{itemize}
\end{framed}

Theorem \ref{thm:null} points to the null treatments  algorithm with a similar set of steps for estimation of $f\{Y(x)\}$.

\begin{framed}
The null treatments  algorithm:
\begin{itemize}[leftmargin=6em]

	\item[\emph{Null-1}] Obtain an arbitrary admissible joint distribution $\tpr(x,u)$.
	
	\item[\emph{Null-2}] Use the estimate of $ \tpr(u \mid x)$ from Step 1, along with an estimate of $\pr(y\mid x)$, to solve Equation \eqref{eq:int2} for an estimate of $ \tpr(y \mid u, x)$. The constructive method described in Proposition \ref{prop:null} can be implemented to solve \eqref{eq:int2}.
	
	\item[\emph{Null-3}] Plug the estimate of $\tilde \pr(u)$ from Step 1 and $\tilde \pr(y \mid u, x)$ from Step 2 into Equation \eqref{eq:idn2} to estimate $\pr\{Y(x)\}$.
\end{itemize}
\end{framed}

Steps \emph{Aux-1} and \emph{Null-1} require only the equivalence assumption, which places nontrivial restrictions on the treatment-confounder distribution. 
To estimate $\tpr(x,u)$, one needs to correctly specify 
a treatment-confounder model that meets the equivalence assumption, 
such as  a factor or  mixture model.
Under standard   factor or mixture models, estimation of  $\tpr(x,u)$ is well established, 
and we refer to the large existing body of literature for estimation techniques and properties \citep{anderson1956, kim1978factor,titterington1985statistical}. 
Note that, crucially, Step 1 estimates the distribution of $U$ (joint with $X$), but does not estimate $U$ itself. This is advantageous   as it does not require infinite number of  treatments and engender the resulting positivity problems.

Steps \emph{Aux-2} and \emph{Null-2} involve, first, estimating $\pr(y\mid x)$ or $\pr(y\mid x,z)$, 
which can be done parametrically or nonparametrically using standard density estimation techniques. 
More challenging is the second step, solving integral equations   \eqref{eq:int} and \eqref{eq:int2},  
which do not admit analytic solutions in general.  
These equations are of the form of Fredholm integral equations of the first kind \citep[chapter 15]{kress1989linear}.
This kind of equation is known to be ill-posed due to noncontinuity of the solution and difficulty of computation.
In the contexts of nonparametric instrumental variable regression, regularization methods have been established to solve the equation and we refer to  \cite{newey2003instrumental,carrasco2007}  for a broad view of this problem.
Numerical solution to such equations  is an active area of mathematical and statistical research and is largely beyond the scope of this paper.  
However, we note that \cite{chae2019algorithm} provide \texttt{R} code for a numerical method that is 
conjectured to  provide a consistent estimator of the unknown function under mild conditions.
In the next subsections, we describe modeling assumptions under which the integral equation can be avoided altogether.

Steps \emph{Aux-3} and \emph{Null-3} are essentially applications of the g-formula; estimation of this integral is standard in causal inference problems.

Below we provide two examples of how parametric models--in this case linear--can obviate the need to solve integral equations,  permit estimation using standard software, and  admit   consistent estimators.

\subsection{The auxiliary variables approach with linear  models} \label{sec:aux_linear}
Consider the following model for a $p$-dimensional treatment $X$, a $q$-dimensional confounder $U$, and an $r$-dimensional instrumental variable $Z$, with $p\geq 2q+1$ and $r\geq q$:
\begin{gather}
X= \alpha U + \eta Z + \varepsilon,\quad \Sigma_U=I_q, \quad E(U)=0, \quad U\ind Z\ind \varepsilon,\quad \Sigma_\varepsilon\text{ diagonal,} \label{eq:factor}\\
\text{there remain two disjoint submatrices of rank $q$  after deleting any row of $\alpha$},\label{eq:alpha}\\
\text{$\eta^\T\gamma$ has full rank of $q$, where $\gamma=(\Sigma_{X-\eta Z})^{-1}\alpha = (\Sigma_\varepsilon  + \alpha\alpha^\T)^{-1}\alpha$},\label{eq:iv}\\
E(Y\mid U,X,Z) = \beta^\T X + \delta^\T U. \label{eq:outcome}
\end{gather}
This is the IV setting from Example \ref{examp:iv}.  
Intercepts are not included in the models as one can center  $(X,Y,Z)$ to have mean zero.
If $(\varepsilon^\T, U, Z)$ are normally distributed, then   \eqref{eq:factor}--\eqref{eq:alpha}  imply   equivalence  and \eqref{eq:iv} implies   completeness in Theorem \ref{thm:auxil}, and  as a special case, identification and estimation of $\pr\{Y(x)\}$ follows from  Theorem \ref{thm:auxil} and the auxiliary variables algorithm, respectively.
However, the estimation procedure below works even when     the error distributions are left  unspecified,   in which  case \eqref{eq:factor}--\eqref{eq:iv}  do not suffice for identification of $\pr\{Y(x)\}$.
We additionally  assume  the linear outcome model \eqref{eq:outcome} and focus on  the structural parameter $\beta$ but not the potential outcome distribution $\pr\{Y(x)\}$.
Then  \eqref{eq:factor}--\eqref{eq:iv},  viewed as a parallel version of Assumption \ref{assump:auxil}, 
guarantee identification of $\beta$.
Condition \eqref{eq:iv}  in principle can be tested after obtaining an estimator of $\eta$ and $\gamma$.
Note that, $r$ may be smaller than $p$, in which case the conventional IV estimator $\hat \beta_{\rm iv} =(Z^\T X)^{-1}Z^\T Y$ does not work.

Estimation of $\beta$ is parallel to the auxiliary variables algorithm.
We first obtain  $\hat \eta$ by regression of $X$ on $Z$ and  obtain   $\hat \gamma$    by   factor analysis of the residuals $X- \hat\eta Z$.
There must exist some orthogonal matrix $R$ so that  $\hat  \gamma$ converges to $\gamma R$. 
This corresponds to   \emph{Aux-1}.
Let $(\xi^X,   \xi^Z)$  denote the    coefficients by   regression  of  $Y$ on $(X,Z)$ and $(\hat \xi^X,   \hat \xi^Z)$ be the corresponding estimator.
Note that $\xi^X =    \beta +  \gamma   \delta$ and $\xi^Z = - \eta^\T  \gamma  \delta$, therefore we solve 
\begin{eqnarray}\label{eq:lineariv}
\hat \xi^X =  \hat \beta + \hat\gamma \hat \delta,\quad \hat \xi^Z = -\hat\eta^\T \hat\gamma \hat\delta
\end{eqnarray}
for $(\hat \beta, \hat\delta)$.
This corresponds to   \emph{Aux-2}: estimation of $\pr(y\mid x,z)$ is replaced by a linear regression of $Y$ on $(X,Z)$ and  solving the integral equation  for   $\tpr(y\mid u,x)$ is replaced by solving   linear equations for   finite-dimensional parameters $(\beta, \delta)$. 
We finally obtain
\[\hat\beta = \hat \xi^X  + \hat \gamma(\hat \gamma^\T \hat\eta\hat \eta^\T \hat \gamma)^{-1} \hat \gamma^\T\hat \eta \hat \xi^Z.\]
In the special case that the dimension of the instrumental variable $Z$ equals that of  the confounder $U$,
we obtain $\hat\beta = \hat \xi^X + \hat\gamma ( \hat\eta^\T \hat\gamma)^{-1}\hat \xi^Z$. 
Consistency and asymptotic normality follows from that of $(\hat\xi^X,\hat\xi^Z,\hat\gamma,\hat\eta)$.
Routine \texttt{R} software such as    \texttt{factanal} and  \texttt{lm} can be implemented for factor analysis and linear regression, respectively, and the variance of estimators can be bootstrapped.

\subsection{The null treatments approach with  linear  models} \label{sec:linear}

Consider the linear models:
\begin{gather}\label{mdl:linear}
\begin{split}
X = \alpha U + \varepsilon, \quad E(Y\mid X,U)= \beta^\T X  + \delta^\T  U,\\
\Sigma_U=I_q, \quad E(U)=0, \quad U\ind \varepsilon, \quad \Sigma_\varepsilon\text{ diagonal,} 
\end{split}
\end{gather}
where $U$ and $\varepsilon$ are     not necessarily normally distributed.
The coefficient $\beta$ encoding  the  average treatment effects is of interest. 
We let $\gamma=\Sigma_X^{-1}\alpha$, which denotes the coefficients of regressing $U$ on $X$.
Hereafter, we use $A_i$ to denote the $i$th row of  a matrix or a column vector $A$. 
Let $\C=\{i: \alpha_i\text{ is not a zero vector}\}$  denote  the indices of confounded treatments, 
$|\C|$   the cardinality  of $\C$, and  $\gamma_\C$  the  submatrix consisting of the corresponding rows of  $\gamma$.
We have the following  result.

\begin{theorem}\label{thm:ln}
The parameter $\beta$ is identified under model \eqref{mdl:linear} and the following assumptions:
\begin{enumerate}
\item[(i)]   at most  $(|\C|-q)/2$ entries of $\beta_\C$  are nonzero;
\item[(ii)] after deleting any row, there remain two disjoint submatrices of $\alpha$ of full rank;
\item[(iii)]  any  submatrix of $\gamma_\C$ consisting of $q$ rows has full rank.
\end{enumerate}
\end{theorem}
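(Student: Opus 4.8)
The plan is to bypass the integral equation entirely and reduce identification of $\beta$ to a sparse linear recovery problem in the population. First I would compute the best linear predictor of $Y$ on $X$. Writing $\xi=\Sigma_X^{-1}\cov(X,Y)$ and using that the residual $Y-E(Y\mid X,U)$ is uncorrelated with $X$, I get $\cov(X,Y)=\cov\{X,\beta^\T X+\delta^\T U\}=\Sigma_X\beta+\alpha\delta$, since $\cov(X,U)=\alpha\Sigma_U=\alpha$ under $U\ind\varepsilon$ and $\Sigma_U=I_q$. This yields the key population identity $\xi=\beta+\gamma\delta$ with $\gamma=\Sigma_X^{-1}\alpha$. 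Crucially this uses only second moments, so it holds whether or not $U$ and $\varepsilon$ are normal, and $\xi$ is identified by the regression of $Y$ on $X$.

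Next I would pin down $\gamma$ and the confounded set $\C$. By Proposition \ref{prop:factor}(i), assumption (ii) guarantees that $\alpha\alpha^\T$ and $\Sigma_\varepsilon$ are identified from $\Sigma_X=\alpha\alpha^\T+\Sigma_\varepsilon$ and that any admissible $\tilde\alpha$ equals $\alpha R$ for an orthogonal $R$; hence $\tilde\gamma=\Sigma_X^{-1}\tilde\alpha=\gamma R$. The product $\gamma\delta$ is invariant to this rotation, since $\tilde\gamma\tilde\delta=\gamma R R^\T\delta=\gamma\delta$ with $\tilde\delta=R^\T\delta$, so I may work with any admissible $\tilde\gamma$. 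Applying the Woodbury identity gives $\gamma=\Sigma_\varepsilon^{-1}\alpha(I_q+\alpha^\T\Sigma_\varepsilon^{-1}\alpha)^{-1}$, and because $\Sigma_\varepsilon$ is diagonal the $i$th row is $\gamma_i=\sigma_i^{-2}\alpha_i(I_q+\alpha^\T\Sigma_\varepsilon^{-1}\alpha)^{-1}$; the trailing matrix is invertible, so $\gamma_i=0$ if and only if $\alpha_i=0$. Therefore $\C$ is identified from the zero rows of $\tilde\gamma$, and for every unconfounded index $i\notin\C$ the identity $\xi_i=\beta_i+\gamma_i\delta=\beta_i$ identifies $\beta_i$ directly.

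It then remains to identify the confounded block from $\xi_\C=\beta_\C+\gamma_\C\delta$, where $\xi_\C$ is known, $\gamma_\C$ is known up to the harmless rotation, and by assumption (i) $\beta_\C$ has at most $s=(|\C|-q)/2$ nonzero entries. I would argue uniqueness by contradiction: if $(\beta_\C,\delta)$ and $(\beta_\C',\delta')$ both satisfy the block equation with this sparsity, subtracting gives $\beta_\C-\beta_\C'=-\gamma_\C d$ with $d=\delta-\delta'$. The left-hand side is supported on at most $2s=|\C|-q$ coordinates, so $\gamma_\C d$ vanishes on a set of at least $q$ rows; by assumption (iii) those $q$ rows form an invertible $q\times q$ submatrix of $\gamma_\C$, whence $d=0$. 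Feeding this back gives $\beta_\C=\beta_\C'$, and the full-column-rank of $\gamma_\C$ gives $\delta=\delta'$. Combined with the direct identification of $\beta_{\bar\C}$, this identifies $\beta$.

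I expect the sparse-recovery step to be the crux. The delicate point is that the matching of the sparsity budget $(|\C|-q)/2$ in (i) with the every-$q$-rows full-rank (spark-type) condition in (iii) is exactly what makes $\gamma_\C d$ sufficiently over-determined to force $d=0$; this is what replaces, in the linear case, the completeness argument used for the nonparametric Theorem \ref{thm:null}. The remaining steps, namely establishing $\xi=\beta+\gamma\delta$ and the zero-row characterization of $\C$, are routine moment and linear-algebra computations, but the invariance to the factor rotation $R$ should be verified carefully so that the entire construction can be carried out from an admissible $\tilde\gamma$ rather than the unknown truth.
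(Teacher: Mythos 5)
Your proposal is correct and follows essentially the same route as the paper's proof: the population identity $\xi=\beta+\gamma\delta$ from the regression of $Y$ on $X$, identification of $\C$ via the zero rows of $\gamma$ (your Woodbury computation is equivalent to the paper's Lemma on $\gamma=\Sigma_\varepsilon^{-1}\alpha\{I_q-\alpha^\T(\Sigma_\varepsilon+\alpha\alpha^\T)^{-1}\alpha\}$), and the same contradiction argument in which the difference of two sparse solutions vanishes on at least $q$ rows of $\gamma_\C$, whose full rank forces $\delta-\delta'=0$. Your explicit verification that $\gamma\delta$ and the rank condition are invariant under the factor rotation $R$ is a point the paper handles identically by working with $\tilde\gamma=\gamma R$, $\tilde\delta=R^\T\delta$.
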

The identification result is not compromised  if   $q$ is replaced  with $s\geq q$ in the conditions.
If  the errors are normally distributed, then  conditions (i)--(iii)  of  Theorem \ref{thm:ln} imply the null treatments, the equivalence  and the completeness conditions in Assumption \ref{assump:null}, respectively, and  identification and estimation of $\pr\{Y(x)\}$ follows from  Theorem  \ref{thm:null} and the null treatments algorithm, respectively.
The estimation procedure below works for  unspecified  error distributions,  in which case conditions (i)--(iii) in Theorem \ref{thm:ln}  no longer suffice for identification of $\pr\{Y(x)\}$,  but as a parallel version of Assumption \ref{assump:null} they guarantee identification of $\beta$.
Condition (iii) of  Theorem \ref{thm:ln}   in principle can be tested after obtaining an estimator of   $\gamma$.

Estimation of $\beta$ is parallel to the null treatments algorithm.
Let $\xi$ denote the    coefficients by   regression  of  $Y$ on $X$.
Given $n$ independent and identically distributed samples, 
we first obtain $\hat\gamma$  by  factor analysis of $X$; this corresponds to \emph{Null-1}.
Then we obtain   $\hat \xi$ by  regression  of  $Y$ on $X$, which can be viewed as a crude estimator of $\beta$ with    asymptotic  bias  $\gamma\delta$. 
Note that  $  \xi_\C = \beta_\C +   \gamma_\C   \delta$ for confounded treatments, 
then   under  the null treatments   assumption (i) of Theorem \ref{thm:ln},  
estimation of $\hat \delta$  can be cast as a standard robust linear regression \citep{rousseeuw2005robust} given  $\hat \xi$ and $\hat\gamma$:
  $\hat \gamma_\C $ is  the design matrix and $\hat \xi_\C$ is  observations with outliers corresponding to nonzero entries of $\beta_\C$. 
This corresponds to  \emph{Null-2}, where the complexity of solving integral equations is resolved by robust linear regression.
Specifically,  given a   $n^{1/2}$-consistent estimator   $(\hat\xi,\hat\gamma)$,  we solve
\begin{eqnarray}
\hat \delta^{\rm lms} =\arg\min_{\delta}  \text{median}\  \{ (\hat \xi_i  - \hat \gamma_i \delta)^2,\  i\in \hat \C\},\quad  \hat \C=\{i: ||\hat \gamma_i||_2^2 > \log(n)/n\},
\end{eqnarray}
which is a least median of squares estimator   minimizing   median of the squared errors $(\hat \xi_i  - \hat \gamma_i \delta)^2$ among the confounded treatments consistently selected by $\hat\C$. 
For the robust linear regression,  the least quantile of squares, least trimmed squares, and the 
S-estimator can also be used to solve for $\delta$, which we refer to \citet[chapter 3.4]{rousseeuw2005robust}.
The corresponding estimate of $\beta$ is $\hat \beta^{\rm lms}=\hat\xi - \hat\gamma \hat\delta^{\rm lms}$.
In the supplement, we show consistency of $(\hat \delta^{\rm lms},\hat\beta^{\rm lms})$ under the assumptions of Theorem \ref{thm:ln} and an additional regularity condition given $n^{1/2}$-consistency of  $(\hat\xi,\hat\gamma)$.
However, they are not necessarily  asymptotically normal as we observe in numerical experiments, even if $(\hat\xi,\hat\gamma)$ are.
Therefore, to promote asymptotic normality we use  $\hat \beta^{\rm lms}$ as an initial value to select the null treatments and to  update estimates of $\delta$ and $\beta$ as follows:  
obtain $\hat\delta$ by solving $\hat\xi_i=\hat\gamma_i \hat\delta$ (by OLS)   that corresponds to   the smallest $(|\hat \C|+q)/2$ entries of $|\hat\beta_{\rm lms}|$ and obtain the ultimate estimator $\hat\beta=\hat\xi-\hat\gamma\hat\delta$. 
Routine \texttt{R} software \texttt{factanal} and  \texttt{lqs} can be implemented  for factor analysis and robust linear regression, respectively,  and the variance of the ultimate estimator can be bootstrapped.

\section{Simulations}\label{sec:sims}
\subsection{The auxiliary variables setting}
We  evaluate  performance of the proposed methods via simulations.
For  the auxiliary variables setting,  2   confounders $(U)$, 6  instrumental variables $(Z)$, 6 treatments $(X)$, an outcome $(Y)$,
and 2 outcome-inducing confounder proxies $(W)$ are generated   as follows,
\begin{eqnarray}
X = \alpha U + \eta Z + \varepsilon_X,\quad  Y = \beta^\T X +  \delta_Y U + \varepsilon_Y,\quad W = \delta_W U +\varepsilon_W,\\
U\thicksim N(0,I_2),  \quad Z\thicksim N(0,I_6),\quad  \varepsilon_X\thicksim N(0, I_6),  \quad \varepsilon_Y\thicksim N(0,1);\quad \varepsilon_W\thicksim N(0,I_2);
\end{eqnarray}
\[\begin{array}{l}\alpha = \begin{pmatrix}
0&0\\
1&0\\
1.5&1\\
2&-2\\
2.5&1\\
2&-1
\end{pmatrix},  \quad 
\eta = \begin{pmatrix}
1&0&0&0&0&0\\
0&1&0&0&0&1\\
0&0&1&0&0&0\\
0&0&0&1&0&0\\
0&0&0&0&1&0\\
1&0&0&0&0&1\\
\end{pmatrix}, \quad
\beta =\begin{pmatrix}
1\\
1\\
1\\
1\\
1\\
1\\
\end{pmatrix},
\quad \delta_Y =(1,1),\quad \delta_W = \begin{pmatrix}
2&0\\
0&2\\
\end{pmatrix}.
\end{array}\]
Under this setting, $(X_2, \ldots, X_6)$ are confounded but $X_1$ is not.
For estimation, we consider eight methods:
\begin{table}[H]
\centering
\begin{tabular}{ll}
IV$_1$ &  conventional IV  approach using all six IVs;\\
IV$_2$ &  conventional IV  approach using five IVs $(Z_1,\ldots, Z_5)$ and treating $(X_6,Z_6)$ as   covariates;\\
Aux$_1$ &the proposed auxiliary variables approach assuming two factors and using all six IVs;\\
Aux$_2$ & the  auxiliary variables approach assuming two factors,\\
& using  $(Z_5,Z_6)$ as IVs  and $(Z_1,\ldots, Z_4)$ as covariates;\\
Aux$_3$ &   the auxiliary variables approach with one factor, using  $Z_6$ as an IV and $(Z_1,\ldots, Z_5)$ as covariates;\\
PI$_1$ &     proximal inference  using  $(Z_5,Z_6)$ and $(W_1,W_2)$ as the treatment- and outcome-inducing  \\
& confounder proxies,   respectively and  $(Z_1,\ldots, Z_4)$ as covariates;\\
PI$_2$ &    proximal inference using  $Z_6$ and $W_1$ as the treatment- and outcome-inducing  confounder proxy, \\
& respectively and  $(Z_1,\ldots, Z_5)$ as covariates;\\
OLS&   ordinary least squares estimation by regression $Y$ on $X$ and $Z$.
\end{tabular}
\end{table}
Two stage least squares are used in the above IV or PI methods.
Because  \cite{grimmer2020ive} have shown that the deconfounder \citep{wang2019jasa} is asymptotically
equivalent  to  and  does not outperform   OLS, we   do not include a separate comparison with the deconfounder.
We replicate 1000 simulations at sample size 1000 and 2000. 
Figure \ref{fig:bias1} summarizes bias of the estimators of each parameter. 
As expected,    IV$_1$,  Aux$_1$, Aux$_2$, and PI$_1$  perform well with little bias for estimation of all parameters, because   sufficient number of IVs or confounder proxies are used in these four methods  and  the number of factors  are correctly specified in Aux$_1$ and Aux$_2$.
Note that,   Aux$_2$ uses only two IVs  while IV$_1$ uses all six IVs and   PI$_1$ uses   two additional confounder proxies.
We also compute the bootstrap confidence interval and evaluate the coverage probability for Aux$_1$ and Aux$_2$, summarized in  Table \ref{tbl:cvp1}.
The 95\% bootstrap confidence interval has coverage probabilities close to the nominal level of 0.95 under a moderate sample size.
When using the same set of IVs, we might expect Aux$_1$  to be 
more efficient than IV$_1$ as the former additionally incorporates the internal dependence structure of the treatments. \footnote{This is conjectured by an anonymous reviewer.}
However,  in our simulations when the sample size is increased to 10000, we observe that  Aux$_1$ has   a larger  mean squared error    than IV$_1$ for estimation of $(\beta_3, \beta_5)$ and Aux$_1$ has  a  larger  mean squared error  than Aux$_2$ for estimation of $\beta_5$.
It can also be seen from the boxplots of the bias in Figure \ref{fig:bias1}.
This may be because the proposed auxiliary variables estimation is not    efficient.
Therefore, in the future  it is of great interest to theoretically establish the  semiparametric efficiency bound and the efficient estimator under the auxiliary variables assumption and compare   to other competing methods.

In contrast, IV$_2$ fails to consistently estimate $\beta_6$ because the corresponding IV ($Z_6$)
is not correctly used but treated as a covariate, but surprisingly,  IV$_2$  is also biased for estimation of $\beta_1$ that is not confounded and can be estimated very well by all the other methods. 
Likewise,   PI$_2$  is biased for estimation of $(\beta_2,\ldots, \beta_5)$ due to insufficient number of confounder proxies.
Nonetheless, PI$_2$ has little bias for estimation of $\beta_6$.
This is because $Z_6$ used in PI$_2$ is a valid IV for $X_6$ and  the PI method is consistent even if the confounder proxies are inadequate, a property previously shown by \cite{miao2018negative}.
Except for $\beta_1$, Aux$_3$ is biased for estimation of the confounded parameters because the number of factors and IVs are incorrectly specified.
As expected, OLS is biased for estimation of all parameters except  for $\beta_1$.

We also evaluate performance of the estimators when  the IVs have a direct effect on the outcome.
In this case, the outcome model is changed to $Y = \beta^\T X + \lambda Z + \delta_Y U + \varepsilon_Y$ with 
$\lambda=(0.2,\ldots,0.2, 0.3)$,
while the other settings of the data generating process remain the same.
Figure S.1  in the supplement summarizes bias of the estimators. 
All estimators are biased for estimation of the confounded treatment effects, 
and no estimator outperforms the others in terms of bias.
However, the IV estimators are also biased for estimation of $\beta_1$ while the other estimators have little bias.

In summary,  all of   conventional IV,   proximal inference, and the proposed auxiliary variables approaches 
rely on the exclusion restriction assumption. 
If the exclusion restriction and the required conditions are satisfied, each of these approaches can properly address the multi-treatment confounding.
The conventional IV entails at least as many IVs  as   treatments, otherwise it can be biased even for unconfounded treatments;   proximal inference needs  at least as many treatment- and outcome-inducing  proxies as   confounders; the proposed auxiliary variables approach does not rest on outcome-inducing confounder proxies but  rests on a factor model and correct specification of  the factor number. 
Therefore, to obtain a reliable causal conclusion  in practice, we recommend   implementing  and comparing multiple applicable methods.

\begin{table}[H]
	\centering
	\caption{Coverage probability of the $95\%$ bootstrap confidence interval based on Aux$_1$ and Aux$_2$} \label{tbl:cvp1}
\begin{minipage}{0.6\textwidth}
	\begin{tabular}{cccccccc}
&&$\beta_1$&$\beta_2$&$\beta_3$&$\beta_4$&$\beta_5$&$\beta_6$\\
\multirow{2}{*}{Aux$_1$}&&0.948 &0.940& 0.940& 0.970& 0.933 &0.958\\
&&0.943 &0.949& 0.954& 0.950& 0.933 &0.949\\
\multirow{2}{*}{Aux$_2$}&&0.946 &0.966& 0.970& 0.963& 0.972 &0.960\\
&&0.942 &0.959& 0.958& 0.953& 0.950 &0.942
	\end{tabular}
\end{minipage}\hfil
\begin{minipage}{0.4\textwidth}
Note:  For each estimator, the first row is for sample size 1000  the second for  2000.
\end{minipage}
\end{table}

\graphicspath{{Results/simuauxsetting1/}}

\begin{figure}[H]
	\centering
\includegraphics[scale=0.3]{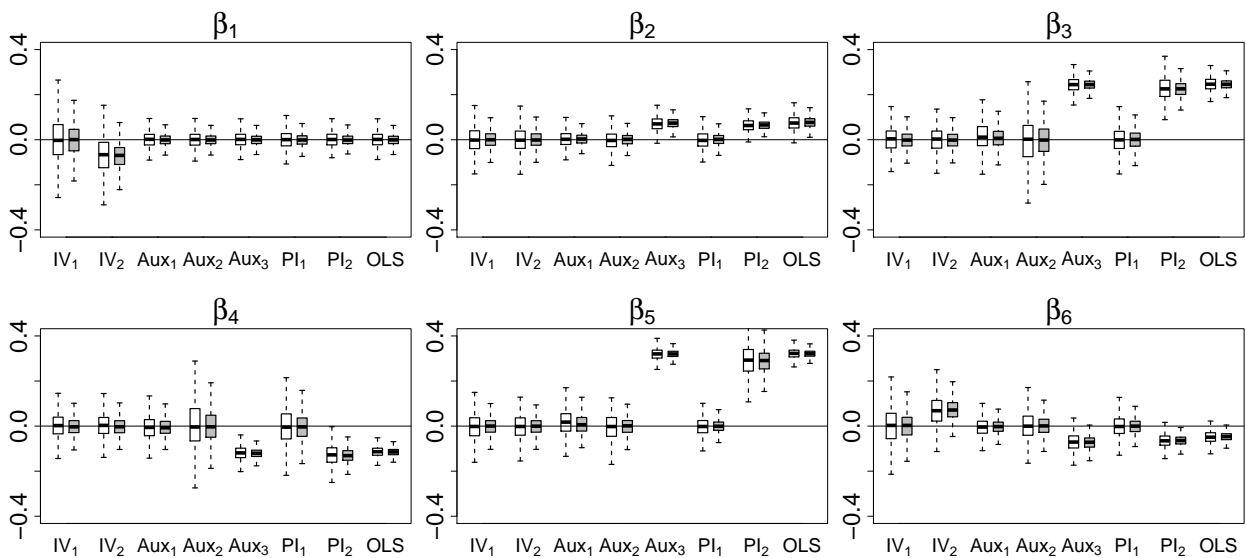}

\caption{Bias of estimators when the exclusion restriction holds. 
White boxes are for sample size 1000 and gray ones  for 2000.} \label{fig:bias1}
\end{figure}

\subsection{The null treatments setting}
We generate   2   confounders $(U)$, 8 treatments $(X)$,  and an outcome $(Y)$ as follows,
\[\begin{array}{c}
X = \alpha U   + \varepsilon_X,\quad  Y = \beta^\T X +  \delta_Y U + \varepsilon_Y,\quad 
U\thicksim N(0,I_2),   \quad  \varepsilon_X\thicksim N(0, I_6),  \quad \varepsilon_Y\thicksim N(0,1);\\
\alpha^\T = \begin{pmatrix}
0  &0.4 & 0.8 & 1.2 & 1.5& -0.4& -0.8& -1.2\\
0  &0.2 & 0.4 & 0.6 & 0.8& -0.5& -1.0& -1.2
\end{pmatrix},  \quad 
\delta_Y =(1,1).
\end{array}\]
We consider two choices for $\beta$,
\begin{itemize}
\item[] Case 1: $\beta_1=\beta_2=\beta_3=1$, $\beta_4=\ldots=\beta_8=0$;
\item[] Case 2: $\beta_1=\beta_2=\beta_3=1$, $\beta_4=\beta_5=0.2$, $\beta_6=\beta_7=\beta_8=0$.
\end{itemize}
Under these settings, $(X_2,\ldots, X_8)$ are confounded but $X_1$ is not;
the null treatments assumption is satisfied in Case 1 as only two of the confounded treatments are active,
but it is  violated in Case 2 where $\beta_4$ and $\beta_5$ also have a small effect on $Y$.

For estimation,  three methods are used: the null treatments   estimation with correct dimension of the confounder   (Null$_1$), the null treatments   estimation with one confounder (Null$_2$),  and OLS. 
Because no auxiliary variables are generated,  we do not compare to   IV,  the auxiliary variables, or proximal inference approaches.
We replicate 1000 simulations at sample size 2000 and 5000.
Figures \ref{fig:bias3} and Figure S.2 (in the supplement) summarize  bias of the estimators for Case 1 and Case 2, respectively. 
As expected, for estimation of $\beta_1$ that is not confounded,  
both OLS and the null treatments approach  have little bias in both cases even if the number of confounders is not correctly specified.
In Case 1 where the null treatments assumption is met, Null$_1$ has little bias 
because the number of confounders is correctly specified,
and as shown in Table \ref{tbl:cvp2}, the $95\%$  bootstrap confidence interval based on Null$_1$  has coverage probabilities approximate to the nominal level of $0.95$.
But Null$_2$    is biased because   the dimension of the confounder   is specified to be smaller  than the truth.
The bias is smaller than the OLS, although this is not theoretically guaranteed.
If the dimension of the confounder is larger than the truth, the factor analysis fails and so does  the  null treatments estimation.
In Case 2, the null treatments assumption is violated because more than half of the confounded treatments are
active. 
In this case, the null treatments estimation is in general biased for the confounded treatments;
the bias could be   larger or smaller than   OLS.
Therefore, to apply the null treatments estimation, one has to first assure that  the majority of the treatments are null or
very close to zero. 
Both the auxiliary variables and the null treatments approaches rest on correct specification of number of unmeasured  confounders. 
If it is not known with high confidence, we refer to \cite{bai2002determining,chen2012inference,owen2016bi} and \cite{mclachlan2019finite} 
for the estimation methods, and we recommend   assessing robustness of  estimation by varying the specification     in data analyses.

\begin{table}[H]
	\centering
	\caption{Coverage probability of the $95\%$ bootstrap confidence interval for Case 1} \label{tbl:cvp2}
\begin{minipage}{0.75\textwidth}
	\begin{tabular}{cccccccccccc}
 &&$\beta_1$&$\beta_2$&$\beta_3$&$\beta_4$&$\beta_5$&$\beta_6$&$\beta_7$&$\beta_8$\\
\multirow{2}{*}{Null$_1$}
&&0.967& 0.970& 0.973& 0.975& 0.987& 0.974& 0.990& 0.983\\
&&0.943& 0.958& 0.959& 0.967& 0.969& 0.961& 0.963& 0.975
	\end{tabular}
\end{minipage}\hfil
\begin{minipage}{0.23\textwidth}
Note:  The first row is for sample size 2000 and  the second for  5000.
\end{minipage}
\end{table}

\graphicspath{{Results/simunullsetting1/}}

\begin{figure}[H]
	\centering
\includegraphics[scale=0.3]{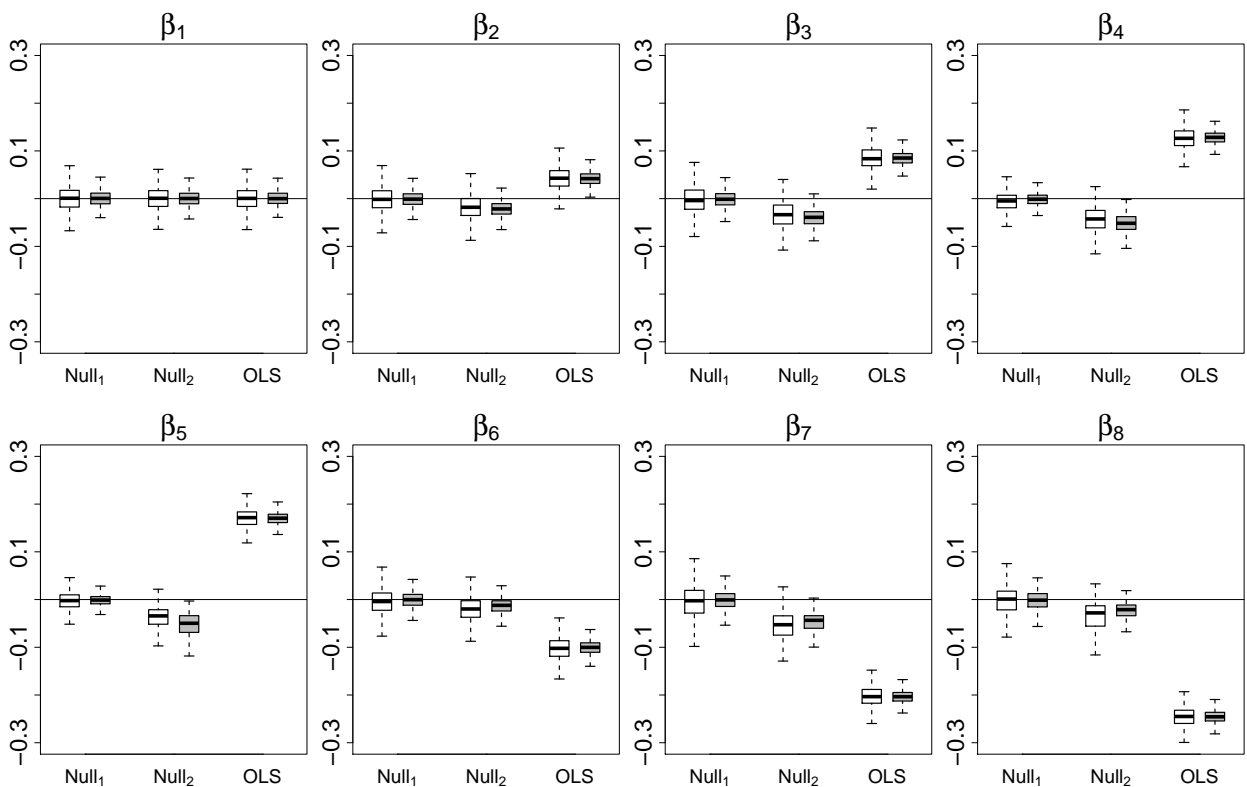}
\caption{Bias of estimators in Case 1.
White boxes are for sample size 2000 and gray ones  for 5000.} \label{fig:bias3}
\end{figure}

\section{Application to a mouse obesity study}\label{sec:application}
For further illustration,  we reanalyze a mouse obesity dataset described by \citet{wang2006genetic},
where  the effect of gene expressions on the body weight of F2 mice is of interest. 
Unmeasured confounding may arise   in such gene expression studies due to   batch effects or unmeasured phenotypes. 
The data we use  are collected from 227 mice, including  the body weight $(Y)$, 17 gene expressions ($X$), and 5 single nucleotide polymorphisms ($Z$); see the supplement for a complete list of these variables. 
As previously selected by \cite{lin2015regularization}, the 17 genes are likely to affect mouse weight and the 5 single nucleotide polymorphisms  are potential instrumental variables.
We further evaluate the effects of these  genes on mouse weight by adopting  a factor model that is widely used to characterize the unmeasured confounding in gene expression studies  \citep{gagnon2012using,wang2017confounder}.
We assume a linear outcome model and estimate the parameters  with three methods:   the auxiliary variables approach  using  single nucleotide polymorphisms as instrumental variables,  the null treatments approach   assuming that fewer than half of genes can affect the mouse weight, and ordinary least squares.
We also compute the bootstrap confidence intervals.

Figure \ref{fig:mice1} presents the point estimates  and  their significance for the 17 genes, 
when the factor number is specified as one for the auxiliary variables and the null treatments estimation. 
Detailed  results including point and interval estimates are relegated to the supplement.
All three methods agree with positive and significant effects of \texttt{Gstm2}, \texttt{Sirpa}, and \texttt{2010002N04Rik}, and a negative effect of  \texttt{Dscam}. 
Additionally, the auxiliary variables estimation  also indicates  negative and significant effects of \texttt{Igfbp2}, \texttt{Avpr1a},   \texttt{Abca8a}, and \texttt{Irx3}; 
the null treatments estimation  suggests  a potentially positive  effect of \texttt{Gpld1}.
These results reinforce previous findings  about the effects of genes on obesity.
For instance, recent studies show that \texttt{Igfbp2}  (Insulin-like growth factor binding protein 2) protects against the development of obesity \citep{wheatcroft2007igf};
\texttt{Gpld1} (Phosphatidylinositol-glycan-specific phospholipase D) is associated with  the change in insulin sensitivity in response to the low-fat diet  \citep{gray2008plasma};
and \texttt{Irx3} (Iroquois homebox gene 3) is associated with  lifestyle changes and plays a crucial role in  determining weight via energy balance \citep{schneeberger2019irx3}.

However,  the significance test \citep[chapter IV]{kim1978factor} of the hypothesis that one factor is  sufficient  is rejected,  indicating that either the number of factors is too small or there exists model misspecification.
Figure S.3 in the supplement shows the results when the number of factors is increased to two and three.
With two factors,
\texttt{Gstm2}, \texttt{2010002N04Rik}, \texttt{Igfbp2}, and \texttt{Avpr1a} remain significant in  the auxiliary variables analysis, and \texttt{Gstm2} and  \texttt{Dscam} remain significant in   the null treatments  analysis.
With three factors,
all estimates in both the    auxiliary variables and the null treatments analyses are no longer significant.
In summary, the association between the 17 genes and mice obesity can be explained by three or more unmeasured confounders. But if there exist only one or two confounders,  \texttt{Gstm2}, \texttt{Sirpa}, \texttt{2010002N04Rik}, \texttt{Dscam}, \texttt{Igfbp2}, \texttt{Avpr1a},   \texttt{Abca8a},  \texttt{Irx3}, and \texttt{Gpld1}   have
a potential causal association with  mouse obesity, which can not be completely attributed to confounding.
In future studies,  it is of interest to identify the potential confounders and to use additional  data to more confidently estimate   effects of these 9 genes.

\graphicspath{{Results/}}
\begin{figure}[H]
\includegraphics[scale=0.36]{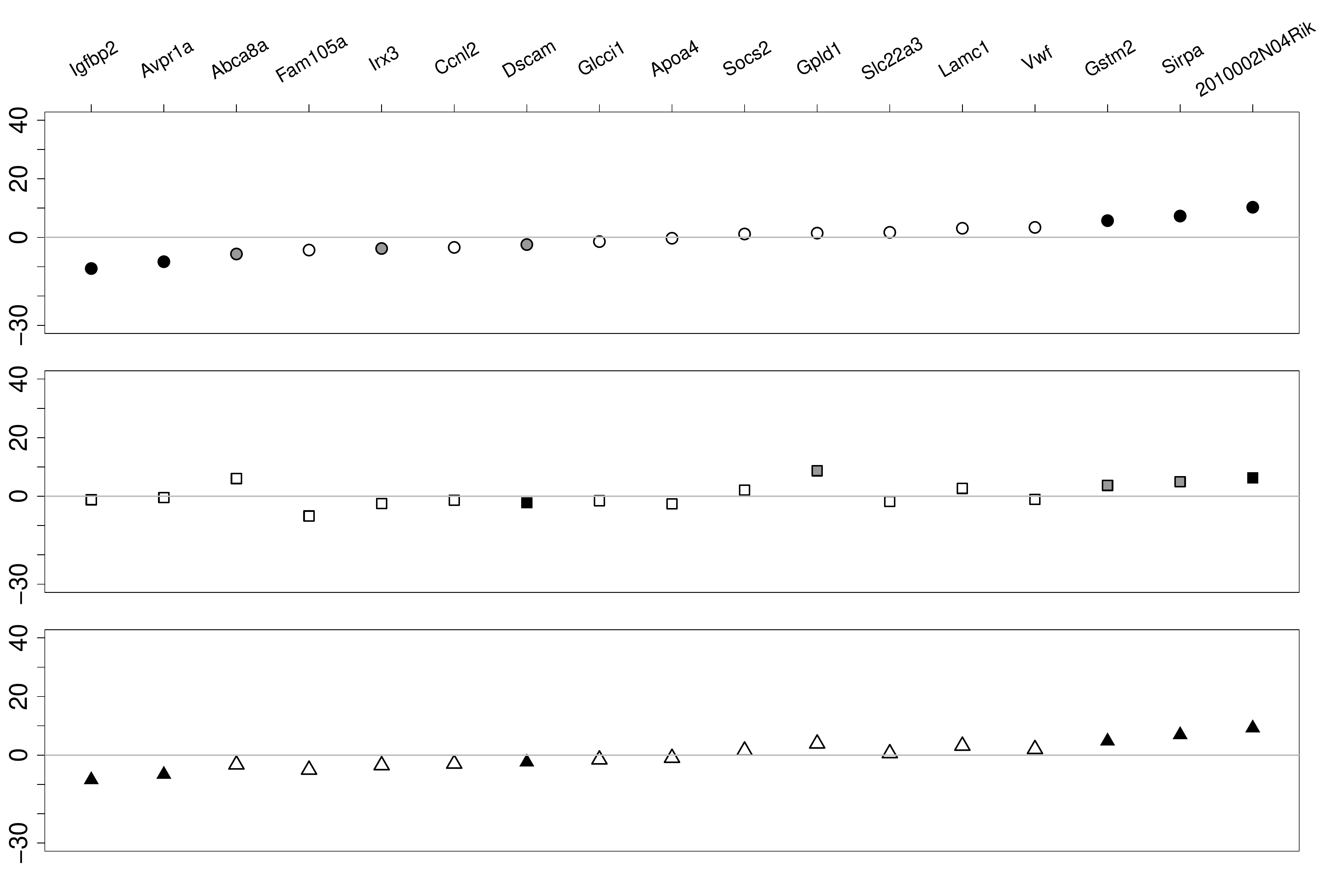}
\caption{Effect estimates for  17 genes when one factor is used in analyses. 
The first panel is for the auxiliary variables estimation, the second for the null treatments estimation, and the last for ordinary least squares estimation. Black points are for  significant estimates  at level of $0.05$, gray ones for $0.1$, and white ones for estimates  not significant at   $0.1$.}\label{fig:mice1}
\end{figure}

\section{Discussion}\label{sec:conclusion}
In this paper, we extend results that had previously been developed for the identification of treatment effects in the presence of unmeasured confounding in the single-treatment to the multi-treatment setting, 
and  we   extend the  parametric approach of \cite{wang2017confounder}  to   identification  of multi-treatment effects with an unrestricted outcome model. 
We demonstrate a novel framework    using integral equations and completeness conditions  to establish identification in causal inference problems.

We have assumed that the number of confounders is known, 
which is realistic  in confounder measurement error or misclassification problems.
When it is not known a priori, consistent estimation of the number of confounders has been well established by \cite{bai2002determining} under factor models and by \cite{chen2012inference} under mixture models.
The \texttt{R} software \texttt{factanal} provides a significance test of whether the number of factors in a factor model is sufficient  to capture the full dimensionality of the data set.
We also refer to  \cite{owen2016bi} and \cite{mclachlan2019finite} for a comprehensive literature review related to choosing the number of confounders in practice.
We also recommend    conducting a sensitivity analysis  like in the simulations and the application by altering the specification of the number of confounders to assess the robustness of the substantive   conclusion.

Our identification framework    rests on the auxiliary variables   or the null treatments assumption.
These assumptions are partially testable:
a heuristic approach,  taking equation (4) as an example, is to check whether a solution exists. This can be  achieved by obtaining  a solution $\tpr(y\mid u,x)$ that minimizes the mean squared error $||\pr(y\mid x,z)-\int_u \tpr(y\mid u,x)\tpr(u\mid x,z)du||^2$ and checking how far away the error is from zero to assess  how well  the solution fits the equation.
This is a typical  goodness-of-fit test  if all models are parametric. 
However,   in nonparametric models   statistical  inference  for the integral equation is quite difficult and we leave the  development of   falsification tests     for future research.
Even if both the auxiliary  variables  and the null treatments assumptions fail to hold,
we describe how to   test whether treatment effects exist.
The proposed estimation strategies   can also be used to test whether unmeasured confounding is present, 
by assessing how far   the proposed estimates   are from the crude ones.

Our  identification results lead to feasible estimation methods under parametric estimation assumptions. 
The proposed estimation methods, comprised of standard factor analysis, linear and robust linear regression, inherit properties from the classical theory of statistical  inference;
these methods work well in simulations and an application. 
However, statistical inference for nonparametric and semiparametric models remains to be studied. 
We have considered fixed dimensions of treatments and confounders, and extensions to large and high-dimensional  settings are of  both theoretical and practical interest.

\section*{Acknowledgements}
We are grateful for   comments from   the editors and three anonymous reviewers.
We thank Eric Tchetgen Tchetgen  for   comments on early versions of this  article.
This work was partially  supported by  Beijing Natural Science Foundation (Z190001) and National Natural Science Foundation of China (12071015 and 12026606), and    ONR grants N00014-18-1-2760 and N00014-21-1-2820.

\section*{Supplementary material}

Supplementary material online includes   
proof of theorems and propositions, useful lemmas, 
discussion and examples on the completeness condition,
consistency of the least median of squares estimator,
discussion on identification of a  parametric model for a binary outcome, 
details for examples,  
additional  results for simulations and the application,
and codes for reproducing the results in this article.

\bibliographystyle{jasa}
\bibliography{CausalMissing}

\begin{thebibliography}{81}
\newcommand{\enquote}[1]{``#1''}
\expandafter\ifx\csname natexlab\endcsname\relax\def\natexlab#1{#1}\fi
\expandafter\ifx\csname url\endcsname\relax
  \def\url#1{{\tt #1}}\fi
\expandafter\ifx\csname urlprefix\endcsname\relax\def\urlprefix{URL }\fi

\bibitem[{Alter et~al.(2000)Alter, Brown, and Botstein}]{alter2000singular}
Alter, O., Brown, P.~O., and Botstein, D.
\newblock \enquote{Singular value decomposition for genome-wide expression data
  processing and modeling.}
\newblock {\em Proceedings of the National Academy of Sciences\/},
  97:10101--10106 (2000).

\bibitem[{Anderson and Rubin(1956)}]{anderson1956}
Anderson, T.~W. and Rubin, H.
\newblock \enquote{Statistical inference in factor analysis.}
\newblock In {\em Proceedings of the Third Berkeley Symposium on Mathematical
  Statistics and Probability, Volume 5: Contributions to Econometrics,
  Industrial Research, and Psychometry\/}, 111--150. Berkeley, Calif.:
  University of California Press (1956).

\bibitem[{Andrews(2017)}]{andrews2017examples}
Andrews, D.~W.
\newblock \enquote{Examples of $\uppercase{L}^2$-complete and
  boundedly-complete distributions.}
\newblock {\em Journal of Econometrics\/}, 199:213--220 (2017).

\bibitem[{Angrist et~al.(1996)Angrist, Imbens, and
  Rubin}]{angrist1996identification}
Angrist, J., Imbens, G., and Rubin, D.
\newblock \enquote{Identification of causal effects using instrumental
  variables.}
\newblock {\em Journal of the American Statistical Association\/}, 91:444--455
  (1996).

\bibitem[{Athey et~al.(2019)Athey, Imbens, and Pollmann}]{athey2019comment}
Athey, S., Imbens, G.~W., and Pollmann, M.
\newblock \enquote{\protect{Comment on: ``The blessings of multiple causes" by
  Yixin Wang and David M. Blei}.}
\newblock {\em Journal of the American Statistical Association\/},
  114:1602--1604 (2019).

\bibitem[{Bai and Ng(2002)}]{bai2002determining}
Bai, J. and Ng, S.
\newblock \enquote{Determining the number of factors in approximate factor
  models.}
\newblock {\em Econometrica\/}, 70(1):191--221 (2002).

\bibitem[{Balke and Pearl(1997)}]{balke1997bounds}
Balke, A. and Pearl, J.
\newblock \enquote{Bounds on treatment effects from studies with imperfect
  compliance.}
\newblock {\em Journal of the American Statistical Association\/},
  92:1171--1176 (1997).

\bibitem[{Basu(1955)}]{basu1955statistics}
Basu, D.
\newblock \enquote{On statistics independent of a complete sufficient
  statistic.}
\newblock {\em Sankhya\/}, 15:377--380 (1955).

\bibitem[{Canay et~al.(2013)Canay, Santos, and Shaikh}]{canay2013testability}
Canay, I.~A., Santos, A., and Shaikh, A.~M.
\newblock \enquote{On the testability of identification in some nonparametric
  models with endogeneity.}
\newblock {\em Econometrica\/}, 81:2535--2559 (2013).

\bibitem[{Carrasco et~al.(2007)Carrasco, Florens, and Renault}]{carrasco2007}
Carrasco, M., Florens, J.~P., and Renault, E.
\newblock \enquote{Linear inverse problems in structural econometrics
  estimation based on spectral decomposition and regularization.}
\newblock In Heckman, J.~J. and Leamer, E. (eds.), {\em Handbook of
  Econometrics\/}, volume~6B, 5633--5751. Amsterdam: Elsevier (2007).

\bibitem[{Carroll et~al.(2006)Carroll, Ruppert, Stefanski, and
  Crainiceanu}]{carroll2006measurement}
Carroll, R., Ruppert, D., Stefanski, L., and Crainiceanu, C.
\newblock {\em Measurement Error in Nonlinear Models: A Modern Perspective\/}.
\newblock Boca Raton: Chapman \& Hall/CRC, 2nd edition (2006).

\bibitem[{{\'C}evid et~al.(2020){\'C}evid, B{\"u}hlmann, and
  Meinshausen}]{cevid2020spectral}
{\'C}evid, D., B{\"u}hlmann, P., and Meinshausen, N.
\newblock \enquote{Spectral deconfounding via perturbed sparse linear models.}
\newblock {\em Journal of Machine Learning Research\/}, 21:232 (2020).

\bibitem[{Chae et~al.(2019)Chae, Martin, and Walker}]{chae2019algorithm}
Chae, M., Martin, R., and Walker, S.~G.
\newblock \enquote{On an algorithm for solving Fredholm integrals of the first
  kind.}
\newblock {\em Statistics and Computing\/}, 29:645--654 (2019).

\bibitem[{Chen et~al.(2012)Chen, Li, and Fu}]{chen2012inference}
Chen, J., Li, P., and Fu, Y.
\newblock \enquote{Inference on the order of a normal mixture.}
\newblock {\em Journal of the American Statistical Association\/},
  107:1096--1105 (2012).

\bibitem[{Chen et~al.(2014)Chen, Chernozhukov, Lee, and Newey}]{chen2014local}
Chen, X., Chernozhukov, V., Lee, S., and Newey, W.~K.
\newblock \enquote{Local identification of nonparametric and semiparametric
  models.}
\newblock {\em Econometrica\/}, 82:785--809 (2014).

\bibitem[{Chernozhukov and Hansen(2005)}]{chernozhukov2005iv}
Chernozhukov, V. and Hansen, C.
\newblock \enquote{An IV model of quantile treatment effects.}
\newblock {\em Econometrica\/}, 73:245--261 (2005).

\bibitem[{Chernozhukov et~al.(2017)Chernozhukov, Hansen, Liao
  et~al.}]{chernozhukov2017lava}
Chernozhukov, V., Hansen, C., Liao, Y., et~al.
\newblock \enquote{A lava attack on the recovery of sums of dense and sparse
  signals.}
\newblock {\em The Annals of Statistics\/}, 45:39--76 (2017).

\bibitem[{Cornfield et~al.(1959)Cornfield, Haenszel, Hammond, Lilienfeld,
  Shimkin, and Wynder}]{cornfield1959}
Cornfield, J., Haenszel, W., Hammond, E.~C., Lilienfeld, A.~M., Shimkin, M.~B.,
  and Wynder, E.~L.
\newblock \enquote{Smoking and lung cancer: Recent evidence and a discussion of
  some questions.}
\newblock {\em Journal of the National Cancer Institute\/}, 22:173--203 (1959).

\bibitem[{Cox and Donnelly(2011)}]{cox2011principles}
Cox, D.~R. and Donnelly, C.~A.
\newblock {\em Principles of Applied Statistics\/}.
\newblock Cambridge University Press (2011).

\bibitem[{{D'Amour}(2019a)}]{damour2019multi}
{D'Amour}, A.
\newblock \enquote{{On multi-cause causal inference with unobserved
  confounding: Counterexamples, impossibility, and alternatives}.}
\newblock In {\em The 22nd International Conference on Artificial Intelligence
  and Statistics\/}, 3478--3486. Okinawa, Japan (2019a).

\bibitem[{D'Amour(2019b)}]{damour2019comment}
D'Amour, A.
\newblock \enquote{Comment: Reflections on the deconfounder.}
\newblock {\em Journal of the American Statistical Association\/},
  114:1597--1601 (2019b).

\bibitem[{Darolles et~al.(2011)Darolles, Fan, Florens, and
  Renault}]{darolles2011nonparametric}
Darolles, S., Fan, Y., Florens, J.~P., and Renault, E.
\newblock \enquote{Nonparametric instrumental regression.}
\newblock {\em Econometrica\/}, 79:1541--1565 (2011).

\bibitem[{D'Haultf\oe{}uille(2010)}]{d2010new}
D'Haultf\oe{}uille, X.
\newblock \enquote{A new instrumental method for dealing with endogenous
  selection.}
\newblock {\em Journal of Econometrics\/}, 154:1--15 (2010).

\bibitem[{D'Haultf\oe{}uille(2011)}]{d2011completeness}
---.
\newblock \enquote{On the completeness condition in nonparametric instrumental
  problems.}
\newblock {\em Econometric Theory\/}, 27:460--471 (2011).

\bibitem[{Didelez and Sheehan(2007)}]{didelez2007mendelian}
Didelez, V. and Sheehan, N.
\newblock \enquote{Mendelian randomization as an instrumental variable approach
  to causal inference.}
\newblock {\em Statistical Methods in Medical Research\/}, 16:309--330 (2007).

\bibitem[{Ding and Vanderweele(2014)}]{ding2014generalized}
Ding, P. and Vanderweele, T.~J.
\newblock \enquote{Generalized Cornfield conditions for the risk difference.}
\newblock {\em Biometrika\/}, 101:971--977 (2014).

\bibitem[{Flanders et~al.(2017)Flanders, Strickland, and
  Klein}]{flanders2017new}
Flanders, W.~D., Strickland, M.~J., and Klein, M.
\newblock \enquote{A new method for partial correction of residual confounding
  in time-series and other observational studies.}
\newblock {\em American Journal of Epidemiology\/}, 185:941--949 (2017).

\bibitem[{Friguet et~al.(2009)Friguet, Kloareg, and
  Causeur}]{friguet2009factor}
Friguet, C., Kloareg, M., and Causeur, D.
\newblock \enquote{A factor model approach to multiple testing under
  dependence.}
\newblock {\em Journal of the American Statistical Association\/},
  104:1406--1415 (2009).

\bibitem[{Gagnon-Bartsch and Speed(2012)}]{gagnon2012using}
Gagnon-Bartsch, J.~A. and Speed, T.~P.
\newblock \enquote{Using control genes to correct for unwanted variation in
  microarray data.}
\newblock {\em Biostatistics\/}, 13:539--552 (2012).

\bibitem[{Goldberger(1972)}]{goldberger1972}
Goldberger, A.~S.
\newblock \enquote{Structural equation methods in the social sciences.}
\newblock {\em Econometrica\/}, 40:979--1001 (1972).

\bibitem[{Gray et~al.(2008)Gray, O'Brien, D'Alessio, Brehm, and
  Deeg}]{gray2008plasma}
Gray, D.~L., O'Brien, K.~D., D'Alessio, D.~A., Brehm, B.~J., and Deeg, M.~A.
\newblock \enquote{Plasma glycosylphosphatidylinositol-specific phospholipase
  \protect{\uppercase{D}} predicts the change in insulin sensitivity in
  response to a low-fat but not a low-carbohydrate diet in obese women.}
\newblock {\em Metabolism\/}, 57:473--478 (2008).

\bibitem[{Grimmer et~al.(2020)Grimmer, Knox, and Stewart}]{grimmer2020ive}
Grimmer, J., Knox, D., and Stewart, B.~M.
\newblock \enquote{Naive regression requires weaker assumptions than factor
  models to adjust for multiple cause confounding.}
\newblock {\em arXiv preprint arXiv:2007.12702\/} (2020).

\bibitem[{Guo et~al.(2020)Guo, {\'C}evid, and B{\"u}hlmann}]{guo2020doubly}
Guo, Z., {\'C}evid, D., and B{\"u}hlmann, P.
\newblock \enquote{Doubly debiased lasso: High-Dimensional inference under
  hidden confounding and measurement errors.}
\newblock {\em arXiv preprint arXiv:2004.03758\/} (2020).

\bibitem[{Hu and Schennach(2008)}]{hu2008instrumental}
Hu, Y. and Schennach, S.~M.
\newblock \enquote{Instrumental variable treatment of nonclassical measurement
  error models.}
\newblock {\em Econometrica\/}, 76:195--216 (2008).

\bibitem[{Hu and Shiu(2018)}]{hu2018nonparametric}
Hu, Y. and Shiu, J.-L.
\newblock \enquote{Nonparametric identification using instrumental variables:
  \uppercase{S}ufficient conditions for completeness.}
\newblock {\em Econometric Theory\/}, 34:659--693 (2018).

\bibitem[{Imai and Jiang(2019)}]{imai2019discussion}
Imai, K. and Jiang, Z.
\newblock \enquote{Comment: The challenges of multiple causes.}
\newblock {\em Journal of the American Statistical Association\/},
  114:1605--1610 (2019).

\bibitem[{Imbens and Newey(2009)}]{imbens2009identification}
Imbens, G.~W. and Newey, W.~K.
\newblock \enquote{Identification and estimation of triangular simultaneous
  equations models without additivity.}
\newblock {\em Econometrica\/}, 77:1481--1512 (2009).

\bibitem[{Kang et~al.(2016)Kang, Zhang, Cai, and Small}]{kang2016instrumental}
Kang, H., Zhang, A., Cai, T.~T., and Small, D.~S.
\newblock \enquote{Instrumental variables estimation with some invalid
  instruments and its application to \uppercase{M}endelian randomization.}
\newblock {\em Journal of the American Statistical Association\/}, 111:132--144
  (2016).

\bibitem[{Kim and Mueller(1978)}]{kim1978factor}
Kim, J.-O. and Mueller, C.~W.
\newblock {\em Factor Analysis: Statistical Methods and Practical Issues\/}.
\newblock Beverly Hills, CA: Sage (1978).

\bibitem[{Kong et~al.(2021)Kong, Yang, and Wang}]{kong2019multi}
Kong, D., Yang, S., and Wang, L.
\newblock \enquote{Multi-cause causal inference with unmeasured confounding and
  binary outcome.}
\newblock {\em Biometrika\/}, in press (2021).

\bibitem[{Kotlarski(1967)}]{kotlarski1967characterizing}
Kotlarski, I.
\newblock \enquote{On characterizing the gamma and the normal distribution.}
\newblock {\em Pacific Journal of Mathematics\/}, 20:69--76 (1967).

\bibitem[{Kress(1989)}]{kress1989linear}
Kress, R.
\newblock {\em Linear Integral Equations\/}.
\newblock Berlin: Springer (1989).

\bibitem[{Kuroki and Pearl(2014)}]{kuroki2014measurement}
Kuroki, M. and Pearl, J.
\newblock \enquote{Measurement bias and effect restoration in causal
  inference.}
\newblock {\em Biometrika\/}, 101:423--437 (2014).

\bibitem[{Leek and Storey(2007)}]{leek2007capturing}
Leek, J.~T. and Storey, J.~D.
\newblock \enquote{Capturing heterogeneity in gene expression studies by
  surrogate variable analysis.}
\newblock {\em PLoS Genet\/}, 3:e161 (2007).

\bibitem[{Lehman and Scheffe(1950)}]{lehman1950completeness}
Lehman, E. and Scheffe, H.
\newblock \enquote{Completeness, Similar Regions and Unbiased Tests. Part I.}
\newblock {\em Sankhya\/}, 10:219--236 (1950).

\bibitem[{Lin et~al.(2015)Lin, Feng, and Li}]{lin2015regularization}
Lin, W., Feng, R., and Li, H.
\newblock \enquote{Regularization methods for high-dimensional instrumental
  variables regression with an application to genetical genomics.}
\newblock {\em Journal of the American Statistical Association\/}, 110:270--288
  (2015).

\bibitem[{Lipsitch et~al.(2010)Lipsitch, Tchetgen~Tchetgen, and
  Cohen}]{lipsitch2010negative}
Lipsitch, M., Tchetgen~Tchetgen, E., and Cohen, T.
\newblock \enquote{Negative controls: A tool for detecting confounding and bias
  in observational studies.}
\newblock {\em Epidemiology\/}, 21:383--388 (2010).

\bibitem[{Luo and Wei(2019)}]{luo2019batch}
Luo, X. and Wei, Y.
\newblock \enquote{Batch effects correction with unknown subtypes.}
\newblock {\em Journal of the American Statistical Association\/}, 114:581--594
  (2019).

\bibitem[{Manski(1990)}]{manski1990nonparametric}
Manski, C.~F.
\newblock \enquote{Nonparametric bounds on treatment effects.}
\newblock {\em The American Economic Review\/}, 80:319--323 (1990).

\bibitem[{Mattner(1992)}]{mattner1992completeness}
Mattner, L.
\newblock \enquote{Completeness of location families, translated moments, and
  uniqueness of charges.}
\newblock {\em Probability Theory and Related Fields\/}, 92:137--149 (1992).

\bibitem[{McLachlan et~al.(2019)McLachlan, Lee, and
  Rathnayake}]{mclachlan2019finite}
McLachlan, G.~J., Lee, S.~X., and Rathnayake, S.~I.
\newblock \enquote{Finite mixture models.}
\newblock {\em Annual Review of Statistics and Its Application\/}, 6:355--378
  (2019).

\bibitem[{Miao et~al.(2018)Miao, Geng, and Tchetgen~Tchetgen}]{miao2018proxy}
Miao, W., Geng, Z., and Tchetgen~Tchetgen, E.
\newblock \enquote{Identifying causal effects with proxy variables of an
  unmeasured confounder.}
\newblock {\em Biometrika\/}, 105:987--993. (2018).

\bibitem[{Miao and Tchetgen~Tchetgen(2018)}]{miao2018negative}
Miao, W. and Tchetgen~Tchetgen, E.
\newblock \enquote{A Confounding Bridge Approach for Double Negative Control
  Inference on Causal Effects.} (2018).

\bibitem[{Miao and Tchetgen~Tchetgen(2016)}]{miao2016varieties}
Miao, W. and Tchetgen~Tchetgen, E.~J.
\newblock \enquote{On varieties of doubly robust estimators under missingness
  not at random with a shadow variable.}
\newblock {\em Biometrika\/}, 103(2):475--482 (2016).

\bibitem[{Newey and Powell(2003)}]{newey2003instrumental}
Newey, W.~K. and Powell, J.~L.
\newblock \enquote{Instrumental variable estimation of nonparametric models.}
\newblock {\em Econometrica\/}, 71:1565--1578 (2003).

\bibitem[{Ogburn et~al.(2015)Ogburn, Rotnitzky, and Robins}]{ogburn2015doubly}
Ogburn, E.~L., Rotnitzky, A., and Robins, J.~M.
\newblock \enquote{Doubly robust estimation of the local average treatment
  effect curve.}
\newblock {\em Journal of the Royal Statistical Society. Series B, Statistical
  methodology\/}, 77:373--396 (2015).

\bibitem[{Ogburn et~al.(2019)Ogburn, Shpitser, and
  Tchetgen~Tchetgen}]{ogburn2019comment}
Ogburn, E.~L., Shpitser, I., and Tchetgen~Tchetgen, E.~J.
\newblock \enquote{Comment on ``\protect{\uppercase {B}}lessings of Multiple
  Causes".}
\newblock {\em Journal of the American Statistical Association\/},
  114:1611--1615 (2019).

\bibitem[{Ogburn et~al.(2020)Ogburn, Shpitser, and
  Tchetgen~Tchetgen}]{ogburn2020counterexamples}
---.
\newblock \enquote{Counterexamples to ``The Blessings of Multiple Causes" by
  Wang and Blei.}
\newblock {\em arXiv preprint arXiv:2001.06555\/} (2020).

\bibitem[{Ogburn and VanderWeele(2012)}]{ogburn2012nondifferential}
Ogburn, E.~L. and VanderWeele, T.~J.
\newblock \enquote{On the nondifferential misclassification of a binary
  confounder.}
\newblock {\em Epidemiology\/}, 23:433--439 (2012).

\bibitem[{Ogburn and VanderWeele(2013)}]{ogburn2013bias}
---.
\newblock \enquote{Bias attenuation results for nondifferentially mismeasured
  ordinal and coarsened confounders.}
\newblock {\em Biometrika\/}, 100:241--248 (2013).

\bibitem[{Owen et~al.(2016)Owen, Wang et~al.}]{owen2016bi}
Owen, A.~B., Wang, J., et~al.
\newblock \enquote{Bi-cross-validation for factor analysis.}
\newblock {\em Statistical Science\/}, 31:119--139 (2016).

\bibitem[{Pearl(1995)}]{pearl1995causal}
Pearl, J.
\newblock \enquote{Causal diagrams for empirical research.}
\newblock {\em Biometrika\/}, 82:669--688 (1995).

\bibitem[{Price et~al.(2006)Price, Patterson, Plenge, Weinblatt, Shadick, and
  Reich}]{price2006principal}
Price, A.~L., Patterson, N.~J., Plenge, R.~M., Weinblatt, M.~E., Shadick,
  N.~A., and Reich, D.
\newblock \enquote{Principal components analysis corrects for stratification in
  genome-wide association studies.}
\newblock {\em Nature Genetics\/}, 38:904--909 (2006).

\bibitem[{Richardson and Robins(2014)}]{richardson2014ace}
Richardson, T.~S. and Robins, J.~M.
\newblock \enquote{ACE bounds; SEMs with equilibrium conditions.}
\newblock {\em Statistical Science\/}, 29:363--366 (2014).

\bibitem[{Robins(1986)}]{robins1986new}
Robins, J.
\newblock \enquote{A new approach to causal inference in mortality studies with
  a sustained exposure period---application to control of the healthy worker
  survivor effect.}
\newblock {\em Mathematical Modelling\/}, 7:1393--1512 (1986).

\bibitem[{Robins(1994)}]{robins1994correcting}
Robins, J.~M.
\newblock \enquote{Correcting for non-compliance in randomized trials using
  structural nested mean models.}
\newblock {\em Communications in Statistics-Theory and Methods\/},
  23:2379--2412 (1994).

\bibitem[{Rosenbaum and Rubin(1983)}]{rosenbaum1983assessing}
Rosenbaum, P.~R. and Rubin, D.~B.
\newblock \enquote{Assessing sensitivity to an unobserved binary covariate in
  an observational study with binary outcome.}
\newblock {\em Journal of the Royal Statistical Society: Series B\/},
  45:212--218 (1983).

\bibitem[{Rousseeuw and Leroy(2005)}]{rousseeuw2005robust}
Rousseeuw, P.~J. and Leroy, A.~M.
\newblock {\em Robust Regression and Outlier Detection\/}, volume 589.
\newblock John wiley \& sons (2005).

\bibitem[{Schneeberger(2019)}]{schneeberger2019irx3}
Schneeberger, M.
\newblock \enquote{Irx3, a new leader on obesity genetics.}
\newblock {\em EBioMedicine\/}, 39:19--20 (2019).

\bibitem[{Shi et~al.(2020)Shi, Miao, Nelson, and
  Tchetgen~Tchetgen}]{shi2020multiply}
Shi, X., Miao, W., Nelson, J.~C., and Tchetgen~Tchetgen, E.
\newblock \enquote{Multiply robust causal inference with double negative
  control adjustment for categorical unmeasured confounding.}
\newblock {\em Journal of the Royal Statistical Society: Series B\/},
  82:521--540 (2020).

\bibitem[{Small et~al.(2017)Small, Tan, Ramsahai, Lorch, Brookhart
  et~al.}]{small2017instrumental}
Small, D.~S., Tan, Z., Ramsahai, R.~R., Lorch, S.~A., Brookhart, M.~A., et~al.
\newblock \enquote{Instrumental variable estimation with a stochastic
  monotonicity assumption.}
\newblock {\em Statistical Science\/}, 32:561--579 (2017).

\bibitem[{Tchetgen~Tchetgen et~al.(2020)Tchetgen~Tchetgen, Ying, Cui, Shi, and
  Miao}]{tchetgen2020introduction}
Tchetgen~Tchetgen, E.~J., Ying, A., Cui, Y., Shi, X., and Miao, W.
\newblock \enquote{An introduction to proximal causal learning.}
\newblock {\em arXiv preprint arXiv:2009.10982\/} (2020).

\bibitem[{Titterington et~al.(1985)Titterington, Smith, and
  Makov}]{titterington1985statistical}
Titterington, D.~M., Smith, A.~F., and Makov, U.~E.
\newblock {\em Statistical Analysis of Finite Mixture Distributions\/}.
\newblock New York: Wiley (1985).

\bibitem[{Wang et~al.(2017)Wang, Zhao, Hastie, and Owen}]{wang2017confounder}
Wang, J., Zhao, Q., Hastie, T., and Owen, A.~B.
\newblock \enquote{Confounder adjustment in multiple hypothesis testing.}
\newblock {\em The Annals of Statistics\/}, 45:1863--1894 (2017).

\bibitem[{Wang and Tchetgen~Tchetgen(2018)}]{wang2018bounded}
Wang, L. and Tchetgen~Tchetgen, E.
\newblock \enquote{Bounded, efficient and multiply robust estimation of average
  treatment effects using instrumental variables.}
\newblock {\em Journal of the Royal Statistical Society: Series B\/},
  80:531--550 (2018).

\bibitem[{Wang et~al.(2006)Wang, Yehya, Schadt, Wang, Drake, and
  Lusis}]{wang2006genetic}
Wang, S., Yehya, N., Schadt, E.~E., Wang, H., Drake, T.~A., and Lusis, A.~J.
\newblock \enquote{Genetic and genomic analysis of a fat mass trait with
  complex inheritance reveals marked sex specificity.}
\newblock {\em PLoS Genetics\/}, 2:e15 (2006).

\bibitem[{{Wang} and {Blei}(2019)}]{wang2019jasa}
{Wang}, Y. and {Blei}, D.~M.
\newblock \enquote{The blessings of multiple causes.}
\newblock {\em Journal of the American Statistical Association\/},
  114:1574--1596 (2019).

\bibitem[{Wang and Blei(2020)}]{wang2020towards}
Wang, Y. and Blei, D.~M.
\newblock \enquote{Towards clarifying the theory of the deconfounder.}
\newblock {\em arXiv preprint arXiv:2003.04948\/} (2020).

\bibitem[{Wheatcroft et~al.(2007)Wheatcroft, Kearney, Shah, Ezzat, Miell, Modo,
  Williams, Cawthorn, Medina-Gomez, Vidal-Puig, Sethi, and
  Crossey}]{wheatcroft2007igf}
Wheatcroft, S.~B., Kearney, M.~T., Shah, A.~M., Ezzat, V.~A., Miell, J.~R.,
  Modo, M., Williams, S.~C., Cawthorn, W.~P., Medina-Gomez, G., Vidal-Puig, A.,
  Sethi, J.~K., and Crossey, P.~A.
\newblock \enquote{IGF-binding protein-2 protects against the development of
  obesity and insulin resistance.}
\newblock {\em Diabetes\/}, 56:285--294 (2007).

\bibitem[{Wright(1928)}]{wright1928tariff}
Wright, P.~G.
\newblock {\em Tariff on Animal and Vegetable Oils\/}.
\newblock New York: Macmillan (1928).

\bibitem[{Yakowitz and Spragins(1968)}]{yakowitz1968identifiability}
Yakowitz, S.~J. and Spragins, J.~D.
\newblock \enquote{On the identifiability of finite mixtures.}
\newblock {\em The Annals of Mathematical Statistics\/}, 209--214 (1968).

\end{thebibliography}

\appendix
\setcounter{proposition}{0}
\setcounter{lemma}{0}
\setcounter{example}{0}
\renewcommand {\theexample} {S.\arabic{example}}
\renewcommand {\theproposition} {S.\arabic{proposition}}
\renewcommand {\theassumption} {S.\arabic{assumption}}
\renewcommand {\thesection} {S.\arabic{section}}
\renewcommand {\theequation} {S.\arabic{equation}}
\makeatletter

\renewcommand{\@seccntformat}[1]{{\csname the#1\endcsname}.\hspace*{1em}}
\makeatother

\newpage

\setcounter{proposition}{0}
\setcounter{assumption}{0}
\renewcommand {\theexample} {S.\arabic{example}}
\renewcommand {\theproposition} {S.\arabic{proposition}}
\renewcommand {\theassumption} {S.\arabic{assumption}}
\renewcommand {\theequation} {S.\arabic{equation}}
\renewcommand {\thetable} {S.\arabic{table}}
\renewcommand {\thefigure} {S.\arabic{figure}}
\makeatletter

\renewcommand{\@seccntformat}[1]{{\csname the#1\endcsname}.\hspace*{1em}}
\makeatother

 \begin{center}
	\LARGE \bf 
Online supplement to ``Identifying effects of multiple treatments in the presence of unmeasured confounding"
\end{center}

This supplement includes   
\begin{itemize}
\item proof of theorems and propositions, useful lemmas, 
\item discussion and examples on the completeness condition,
\item consistency of the least median of squares estimator,
\item discussion on identification of a  parametric model for a binary outcome, 
\item details for examples,   and
\item  additional  results for simulations and  the application.
\end{itemize}

\section{Proof of propositions  and theorems}
\subsection{Proof of Proposition \ref{prop:factor}}
Note that $\eta$ can be identified by regression of $X$ on $Z$, 
then applying  lemma 5.1 and theorem 5.1 of \cite{anderson1956}  to the factor model  for the residuals,
\begin{eqnarray*}
X-\eta Z = \alpha U + \varepsilon,
\end{eqnarray*}
we obtain   (i) of Proposition \ref{prop:factor}.
The third result of Proposition \ref{prop:factor}  can be obtained from the well-known completeness property of exponential families, see   Theorem 2.2 of \cite{newey2003instrumental}.
Here we prove (ii), which rests on the following lemma  described  by \citet[lemma 1 and remark 5]{kotlarski1967characterizing}.

\begin{lemma}[Kotlarski, 1967]\label{lem:kotlarski}
Let $U$, $\varepsilon_1$, and $\varepsilon_2$ be three independent $q$-dimensional real random vectors with mean zero, 
and let  $W_1= U+\varepsilon_1$ and $W_2=U+\varepsilon_2$.
If the joint characteristic function of  $(W_1,W_2)$ does not vanish, 
then the distributions of $U$, $\varepsilon_1$, and $\varepsilon_2$ are uniquely determined from the joint distribution of $(W_1,W_2)$. 
\end{lemma}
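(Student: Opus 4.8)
This is an identification statement about sums of independent vectors, so the natural device is the characteristic function, which turns the additive structure into a multiplicative one. The plan is to pass to characteristic functions, use the nonvanishing hypothesis to take a well-defined continuous logarithm, and then separate the three unknown transforms by differentiating in one argument and exploiting the mean-zero normalization to remove the location indeterminacy.

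First I would write $\varphi_U,\varphi_1,\varphi_2$ for the characteristic functions of $U,\varepsilon_1,\varepsilon_2$ and record that, by independence, the joint characteristic function of $(W_1,W_2)$ factors as
\[
\psi(t_1,t_2)=E\{\exp(\mathrm{i}\,t_1^\T W_1+\mathrm{i}\,t_2^\T W_2)\}=\varphi_U(t_1+t_2)\,\varphi_1(t_1)\,\varphi_2(t_2),
\]
for $t_1,t_2\in\mathbb{R}^q$, since $t_1^\T W_1+t_2^\T W_2=(t_1+t_2)^\T U+t_1^\T\varepsilon_1+t_2^\T\varepsilon_2$. Evaluating $\psi$ at $t_2=0$, at $t_1=0$, and at $t_2=-t_1$ shows, using the assumed nonvanishing of $\psi$, that each of $\varphi_U$, $\varphi_1$, $\varphi_2$ is nonvanishing on all of $\mathbb{R}^q$.

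Because each of these functions is continuous, equals one at the origin, and never vanishes, I can select a single continuous branch of the logarithm with value zero at the origin, so that $\log\psi(t_1,t_2)=\log\varphi_U(t_1+t_2)+\log\varphi_1(t_1)+\log\varphi_2(t_2)$. The existence of first moments, guaranteed by the mean-zero assumption, makes $\psi$ continuously differentiable, so I may differentiate this identity in $t_2$ and set $t_2=0$ to obtain
\[
\nabla_{t_2}\log\psi(t_1,0)=\nabla\log\varphi_U(t_1)+\nabla\log\varphi_2(0).
\]
Since $E(\varepsilon_2)=0$ forces $\nabla\varphi_2(0)=\mathrm{i}\,E(\varepsilon_2)=0$, the last term drops out, and $\nabla\log\varphi_U(t_1)$ is thereby expressed entirely through the observable $\psi$. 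Integrating this gradient field along any path from the origin to an arbitrary $t$, using $\log\varphi_U(0)=0$, recovers $\log\varphi_U$ and hence $\varphi_U$ uniquely; the remaining transforms then follow from $\varphi_1(t)=\psi(t,0)/\varphi_U(t)$ and $\varphi_2(t)=\psi(0,t)/\varphi_U(t)$, and the uniqueness theorem for characteristic functions converts this into uniqueness of the three distributions.

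The main obstacle is analytic bookkeeping rather than conceptual content: justifying a globally single-valued continuous logarithm of $\psi$, differentiating under the expectation (which requires only the finite first moments available here), and noting that the recovery integral is path-independent because the field in question is a genuine gradient. The essential trick is that the mean-zero normalization kills the unwanted term $\nabla\log\varphi_2(0)$ and fixes the integration constant; this is precisely what breaks the symmetry between $\varphi_U$ and the pair $\varphi_1,\varphi_2$ and renders the otherwise under-determined factorization identifiable.
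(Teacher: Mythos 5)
The paper never proves this lemma: it is imported as a black box from \citet[lemma 1 and remark 5]{kotlarski1967characterizing} and then applied in the proof of Proposition 1(ii), so there is no internal proof to compare yours against. Your reconstruction is the classical argument (essentially Kotlarski's own, the one also reproduced in the deconvolution and measurement-error literature), and it is correct. The factorization $\psi(t_1,t_2)=\varphi_U(t_1+t_2)\,\varphi_1(t_1)\,\varphi_2(t_2)$ is right; evaluating at $t_2=0$ and $t_1=0$ does show all three factors are everywhere nonvanishing; the mean-zero hypothesis (which applies to all three vectors, so in particular to $\varepsilon_2$) both guarantees the finite first moments needed for $\psi$ to be continuously differentiable and kills the nuisance term via $\nabla\log\varphi_2(0)=\mathrm{i}\,E(\varepsilon_2)=0$; and the recovery of $\log\varphi_U$ by integrating $\nabla_{t_2}\log\psi(\cdot,0)$ from the origin is path-independent because the integrand is a genuine gradient, with the constant pinned by $\log\varphi_U(0)=0$. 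The two technical points you flag are exactly the ones requiring care, and both are standard: the distinguished logarithm exists and is unique for a continuous nonvanishing function on the simply connected domain $\mathbb{R}^{2q}$ equal to one at the origin (which also justifies the additivity of the logarithms across the factorization), and it inherits $C^1$ smoothness from $\psi$ wherever $\psi$ is nonvanishing. Once $\varphi_U$ is determined by $\psi$, the quotients $\varphi_1(t)=\psi(t,0)/\varphi_U(t)$ and $\varphi_2(t)=\psi(0,t)/\varphi_U(t)$ together with the uniqueness theorem for characteristic functions complete the identification, exactly as you state.
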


We apply Kotlarski's lemma to prove (ii) of Proposition \ref{prop:factor}.
\begin{proof}[Proof of Proposition \ref{prop:factor} (ii)]
We denote $W=X-\eta Z=\alpha U + \varepsilon$. 
Note that from (i) of proposition  \ref{prop:factor}, any admissible  value for $\alpha$ can be written as $\tilde \alpha =\alpha R$ with $R$ an arbitrary $q\times q$ orthogonal matrix, we only need to prove that  given $\tilde \alpha=\alpha R$, the joint distribution  $\tpr(w,u)=\pr(W=w, U=u; \tilde \alpha)$ is uniquely determined and  $\tpr(w,u)=\pr(W=w, R^\T U=u; \alpha)$.

Because after deleting any row of $\alpha$ there remain two full-rank submatrices of $\alpha$,
there must exist  two disjoint square submatrices of $\alpha$ with full rank $q$.
Note that $\tilde \alpha =\alpha R$, there must exist  two disjoint square submatrices of $\tilde \alpha$ with full rank $q$, which we denote by  $\tilde \alpha_{\mathcal I}$ and $\tilde \alpha_{\mathcal J}$ with ${\mathcal I}$
and ${\mathcal J}$ denoting   the corresponding indices, respectively.
Note that $W_{\mathcal I}= \tilde \alpha_{\mathcal I} V + \varepsilon_{\mathcal I}$ and  $W_{\mathcal J}= \tilde \alpha_{\mathcal J} V + \varepsilon_{\mathcal J}$ 
with $V=R^\T U$,   
we have $\tilde \alpha_{\mathcal I}^{-1} W_{\mathcal I}=  V + \tilde \alpha_{\mathcal I}^{-1} \varepsilon_{\mathcal I}$  and $\tilde \alpha_{\mathcal J}^{-1} W_{\mathcal J}=  V + \tilde \alpha_{\mathcal J}^{-1}\varepsilon_{\mathcal J}$. 
According to Lemma \ref{lem:kotlarski}, the distributions of $V$, $\tilde \alpha_{\mathcal I}^{-1} \varepsilon_{\mathcal I}$, 
and $\tilde \alpha_{\mathcal J}^{-1}\varepsilon_{\mathcal J}$ are uniquely determined given $\tilde \alpha$,
and therefore, the distribution of $\varepsilon = W -\tilde \alpha V$ is uniquely determined.
As a result,  given $\tilde \alpha$,   there is only one admissible joint distribution,  which must be $\tpr(w,u)=\pr(W=w, R^\T U=u\mid \alpha)$.
\end{proof}

\subsection{Proof of Theorem \ref{thm:auxil} }
\begin{proof}
Under  the equivalence (Assumption \ref{assump:auxil} (ii)),  given any admissible joint distribution $\tpr(x,u\mid z)$, there must exist some invertible function $V(U)$ such that $\tpr(x,u\mid z)=\pr\{X=x, V(U)=u\mid z\}$.
Because $V(U)$ is invertible, the ignorability assumption \ref{assump:ign}  ($Y(x)\ind X\mid U$)  implies that $Y(x)\ind X\mid V(U)$,
the exclusion restriction  $Z\ind Y\mid (U,X)$ implies that $Z\ind Y\mid \{X,V(U)\}$, and the completeness  (Assumption \ref{assump:auxil} (iii)) implies that $\tpr(u\mid x,z)$ is also complete in $z$.
Letting $\tpr(y\mid u,x)=\pr\{y\mid V(U)=u,x\}$, then we have that 
\begin{eqnarray}
\pr\{Y(x)=y\} = \int_u  \pr(y\mid u, x)\pr(u) du= \int_u  \tpr(y\mid u, x)\tpr(u) du,  \label{eq:s1}\\
\pr(y\mid x,z)=\int_u \tpr(y\mid u, x) \tpr(u\mid x,z) du,   \label{eq:s2}
\end{eqnarray}
with $ \tpr(u)$ and $\tpr(u\mid x,z)$ derived from $\tpr(x,u\mid z)$.
Because $\tpr(u\mid x,z)$ is complete in $z$, the solution to \eqref{eq:s2} is unique; 
this is because for any candidate solutions $\tpr_1(y\mid u,x)$ and $\tpr_2(y\mid u,x)$ to \eqref{eq:s2},
we must have that $\int_u \{\tpr_1(y\mid u,x)-\tpr_2(y\mid u,x)\}\tpr(u\mid x,z)=0$, which implies that $\tpr_1(y\mid u,x)=\tpr_2(y\mid u,x)$ by the completeness of $\tpr(u\mid x,z)$ in $z$.
Thus, $\tpr(y\mid u,x)$ is uniquely determined from \eqref{eq:s2}, and $\pr\{Y(x)\}$ is identified by plugging in it into \eqref{eq:s1}.
\end{proof}

\subsection{Proof of Theorem \ref{thm:null}}
\begin{proof}
Under  the equivalence (Assumption \ref{assump:null} (ii)),  for any admissible joint distribution  $\tpr(x,u)$ we must have
some invertible function $V(U)$ such that $\tpr(u,x)=\pr\{X=x, V(U)=u\}$. 
Letting $\C=\{i: \pr(u\mid x) \text{ varies with } x_i\}$ and $\tilde \C=\{i: \tpr(u\mid x) \text{ varies with } x_i\}$,
then we must have $\C=\tilde \C$ by noting that $\tpr(u\mid x) = \pr\{V(U)=u\mid x\}$, i.e., $\C$ can be identified from any admissible joint distribution $\tpr(x,u)$.

Because $V(U)$ is invertible, the ignorability assumption \ref{assump:ign}  $(Y(x)\ind X\mid U)$ implies that $Y(x)\ind X\mid V(U)$,
and the completeness  (Assumption \ref{assump:null} (iii)) implies that $\tpr(u\mid x)$ is also complete in $x_\S$ for any $\S\subset \C$ with cardinality $q$.
Letting $\tpr(y\mid u,x)=\pr\{y\mid V(U)=u,x\}$, then we have that 
\begin{eqnarray}
\pr\{Y(x)=y\} = \int_u  \pr(y\mid u, x)\pr(u) du = \int_u  \tpr(y\mid u, x)\tpr(u) du,  \label{eq:s3}\\
\pr(y\mid x)=\int_u \tpr(y\mid u, x) \tpr(u\mid x) du,   \label{eq:s4}
\end{eqnarray}
with $ \tpr(u)$ and $\tpr(u\mid x)$ obtained from $\tpr(x,u)$.

We prove that $\tpr(y\mid u,x)$ is uniquely determined from \eqref{eq:s4} given $\pr(y\mid x)$ and  $\tpr(u\mid x)$ by way of contradiction.
Suppose two candidate outcome models $\tpr_1(y\mid u,x)$ and $\tpr_2(y\mid u,x)$ satisfy \eqref{eq:s4}, then 
$\int_u \{\tpr_1(y\mid u,x)-\tpr_2(y\mid u,x)\} \tpr(u\mid x)=0$. 
Under the null treatments assumption, each of $\tpr_1(y\mid u,x)$ and $\tpr_2(y\mid u,x)$ can depend on only $(|\C|-q)/2$ confounded treatments,
and thus the contrast $\{\tpr_1(y\mid u,x)-\tpr_2(y\mid u,x)\}$ can depend on at most $|\C|-q$ confounded treatments.
We let $X_\S$ denote the rest $q$ confounded treatments that the contrast $\{\tpr_1(y\mid u,x)-\tpr_2(y\mid u,x)\}$ does not depend on,
then the completeness (Assumption \ref{assump:null}(iii)) implies that $\tpr(u\mid x_\S,x_{\bar\S})$ is complete in $x_\S$, 
and thus $\{\tpr_1(y\mid u,x)-\tpr_2(y\mid u,x)\}=0$ almost surely, 
i.e., $\tpr_1(y\mid u,x)=\tpr_2(y\mid u,x)$ almost surely.
Therefore, the solution to \eqref{eq:s4} must be unique. 
Finally, plugging in $\tpr(y\mid u,x)$ and $\tpr(u)$ into \eqref{eq:s3} identifies the potential outcome distribution.
\end{proof}

\subsection{Proof of Proposition \ref{prop:null}}
\begin{proof}
We first note that the  confounded treatments can be identified under the equivalence assumption, by the argument in the proof of Theorem 2.
Note that  the candidate solutions depending  on more than $(|\C|-q)/2$ confounded treatments contradict the null treatments assumption that at most $(|\C|-q)/2$ confounded ones can  affect the outcome,
we only focus on solutions that depends on at most $(|\C|-q)/2$ confounded treatments.

We let $A_\C$ denotes the number of active ones of the confounded treatments.
Consider a solution  $\tpr(y\mid u, x_{\mathcal B})$ that solves
\begin{eqnarray}\label{eq:int4}
\pr(y\mid x) = \int_u \tpr(y\mid u,x_{\mathcal B})\tilde\pr(u\mid x)du.
\end{eqnarray}
where $|\mathcal B \cap \C| \leq (|\C|-q)/2$, i.e.,  $x_{\mathcal B}$ includes at   most  $(|\C|-q)/2$ confounded treatments.
Equation \eqref{eq:int4} is  a Fredholm integral equation of the first kind with the kernel  $\tilde\pr(u\mid x)$
complete in $x_{\bar {\mathcal B}}$, where $x_{\bar {\mathcal B}}$ denotes the remaining treatments of $x$ except for $x_{\mathcal B}$.
For  $x_{\B}$  that includes all active   treatments, i.e.,    $x_\A  \subset x_\B$ and $|\mathcal B \cap \C| =A_\C$, 
the solution to \eqref{eq:int4}
exists and must be unique and equal to $\tpr(y\mid u, x_\A)$.
For $x_{\B}$ that includes   $t < A_\C$  active   ones of the confounded treatments, i.e.,  $|\mathcal B \cap \C| =t<A_\C$, the solution to  \eqref{eq:int4} does not exist. We prove this by way of contradiction.

Suppose  \eqref{eq:int4} has a solution  $\tpr(y\mid u,x_\B)$, then it must  also satisfy the following equation
\begin{eqnarray} \label{eq:int5}
\pr(y\mid x) = \int_u \tpr(y\mid u, x_\B \cup x_\A) \tilde\pr(u\mid x)du;
\end{eqnarray}
where the unknown function $\tpr(y\mid u, x_\B \cup x_\A) $ of $u$  is allowed to 
depend on  all active treatments.
Note that   $\tpr(y\mid u, x_\B \cup x_\A) $ can depend on  at most $ (|\C|-q)/2 -t $ null ones of the confounded treatments, Equation \eqref{eq:int5}  is a Fredholm integral equation of the first kind, 
with  the kernel $\tilde\pr(u\mid x)$ complete in $x_{\bar \B}\cap x_{\bar \A}\cap x_\C$, i.e.,   the remaining $q+t$ null ones  of the confounded treatments. 
Therefore, \eqref{eq:int5} can  be satisfied by  only one function, which in fact is $\tpr(y\mid u,x_\A)$.
This contradicts that $\tpr(y\mid u,x_\B)$ depend on only $t<A_\C$ active ones of the confounded treatments.

As a result, all solutions that depend on at most $(|\C|-q)/2$ confounded treatments must be equal to  $\tpr(y\mid u, x_\A)$, i.e., the solution to \eqref{eq:int2}.

\end{proof}

\subsection{Proof of Theorem \ref{thm:ln}}
We first describe a lemma that is useful for proof of  Theorem \ref{thm:ln}.
\begin{lemma}\label{lem:factor}
For  a $p\times p$ positive-definite matrix  $\Sigma_\varepsilon$  and a $p\times q$  matrix $\alpha$ of full column rank  with $p>q$, letting  $\gamma= (\Sigma_\varepsilon  + \alpha\alpha^\T)^{-1}\alpha$, then 
$\gamma=\Sigma_\varepsilon^{-1} \alpha \{I_q -  \alpha^\T (\Sigma_\varepsilon + \alpha\alpha^\T)^{-1} \alpha\}$.
\end{lemma}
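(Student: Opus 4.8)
The plan is to reduce the claimed identity to a one-line algebraic manipulation by clearing the inverse $\Sigma_\varepsilon^{-1}$. Both inverses appearing in the statement exist: $\Sigma_\varepsilon$ is positive definite by hypothesis, and $\Sigma_\varepsilon + \alpha\alpha^\T$ is positive definite as the sum of a positive-definite and a positive-semidefinite matrix. Writing $\gamma = (\Sigma_\varepsilon + \alpha\alpha^\T)^{-1}\alpha$ and noting $\alpha^\T(\Sigma_\varepsilon + \alpha\alpha^\T)^{-1}\alpha = \alpha^\T\gamma$, the target equality $\gamma = \Sigma_\varepsilon^{-1}\alpha(I_q - \alpha^\T\gamma)$ is equivalent, after left-multiplying by $\Sigma_\varepsilon$, to
\[
\Sigma_\varepsilon(\Sigma_\varepsilon + \alpha\alpha^\T)^{-1}\alpha = \alpha - \alpha\alpha^\T(\Sigma_\varepsilon + \alpha\alpha^\T)^{-1}\alpha.
\]

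First I would factor the right-hand side as $[I_p - \alpha\alpha^\T(\Sigma_\varepsilon + \alpha\alpha^\T)^{-1}]\alpha$. The key observation is then the trivial splitting $\Sigma_\varepsilon = (\Sigma_\varepsilon + \alpha\alpha^\T) - \alpha\alpha^\T$, which gives
\[
I_p - \alpha\alpha^\T(\Sigma_\varepsilon + \alpha\alpha^\T)^{-1} = [(\Sigma_\varepsilon + \alpha\alpha^\T) - \alpha\alpha^\T](\Sigma_\varepsilon + \alpha\alpha^\T)^{-1} = \Sigma_\varepsilon(\Sigma_\varepsilon + \alpha\alpha^\T)^{-1}.
\]
Substituting this back, the right-hand side collapses to exactly $\Sigma_\varepsilon(\Sigma_\varepsilon + \alpha\alpha^\T)^{-1}\alpha$, matching the left-hand side and closing the argument. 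Note that full column rank of $\alpha$ is not even needed for the identity itself; it is only the ambient hypothesis under which $\gamma$ is later used.

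An alternative route, if one prefers to exhibit the structure more explicitly, is to invoke the Woodbury matrix identity to write $(\Sigma_\varepsilon + \alpha\alpha^\T)^{-1} = \Sigma_\varepsilon^{-1} - \Sigma_\varepsilon^{-1}\alpha(I_q + \alpha^\T\Sigma_\varepsilon^{-1}\alpha)^{-1}\alpha^\T\Sigma_\varepsilon^{-1}$; both sides of the claimed identity then reduce to the common expression $\Sigma_\varepsilon^{-1}\alpha(I_q + \alpha^\T\Sigma_\varepsilon^{-1}\alpha)^{-1}$ after routine simplification. I would present the first route, since it avoids Woodbury entirely and uses only the decomposition $\Sigma_\varepsilon = (\Sigma_\varepsilon + \alpha\alpha^\T) - \alpha\alpha^\T$ together with a single factorization. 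There is no genuine obstacle here—the content is elementary linear algebra—so the only thing to watch is the bookkeeping of matrix dimensions, keeping $I_p$ and $I_q$ straight so that every product remains conformable.
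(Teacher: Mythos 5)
Your proof is correct, and at its core it rests on the same elementary computation as the paper's: the additive splitting of the inverted matrix, $\Sigma_\varepsilon = (\Sigma_\varepsilon+\alpha\alpha^\T)-\alpha\alpha^\T$, equivalently $\Sigma_\varepsilon(\Sigma_\varepsilon+\alpha\alpha^\T)^{-1}=I_p-\alpha\alpha^\T(\Sigma_\varepsilon+\alpha\alpha^\T)^{-1}$. The packaging differs, though, in two ways worth noting. First, the paper whitens before computing: it sets $A=\Sigma_\varepsilon^{-1/2}\alpha$ and $B=\Sigma_\varepsilon^{1/2}\gamma$, asserts that $B=(I_p+AA^\T)^{-1}A=A\{I_q-A^\T(I_p+AA^\T)^{-1}A\}$ is ``straightforward to verify'' (the verification being exactly your manipulation in whitened coordinates), and then translates back; you work directly with $\Sigma_\varepsilon$ and $\alpha$ and actually display the one-line verification, which is cleaner for the identity as stated. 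Second, and more substantively, the paper's proof delivers more than the stated identity: using the relation $\{I_q-A^\T(I_p+AA^\T)^{-1}A\}\{A^\T(I_p+AA^\T)A\}=A^\T A$ and invertibility of $A^\T A$, it concludes that $I_q-\alpha^\T(\Sigma_\varepsilon+\alpha\alpha^\T)^{-1}\alpha$ has full rank $q$. This is where the full-column-rank hypothesis on $\alpha$ enters, and it is this invertibility---not the bare identity---that the proof of Theorem 3 actually invokes to conclude that $\tilde\gamma_i$ is a zero vector if and only if $\alpha_i$ is. So your observation that full column rank is unnecessary for the identity is accurate, but the hypothesis is not idle in context. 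Incidentally, your Woodbury alternative would recover that byproduct for free and without any rank condition, since it yields $I_q-\alpha^\T\gamma=(I_q+\alpha^\T\Sigma_\varepsilon^{-1}\alpha)^{-1}$, which is invertible because $I_q+\alpha^\T\Sigma_\varepsilon^{-1}\alpha$ is positive definite; if the lemma is to be used as the paper uses it, that is the fact to state and prove alongside the identity.
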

\begin{proof}
Letting $A= \Sigma_\varepsilon^{-1/2}\alpha $ and $B=\Sigma_\varepsilon^{1/2}\gamma$,  then $A$ has full rank, 
and it is straightforward to verify that 
\[B=(I_p + A A^\T)^{-1}A = A\{I_q- A^\T(I_p+AA^\T)^{-1}A\}.\]
Because $\{I_q- A^\T(I_p+AA^\T)^{-1}A\}\{A^\T(I_p+AA^\T)A\} = A^\T A$ and $A^\T A$ has full rank of $q$, 
then $\{I_q- A^\T(I+AA^\T)^{-1}A\}$ must have  full rank of $q$ as a $q\times q$ matrix, i.e., $\{I_q -  \alpha^\T (\Sigma_\varepsilon + \alpha\alpha^\T)^{-1} \alpha\}$ has full rank.
Thus, we have that  
\begin{eqnarray*}
\gamma = \{\Sigma_\varepsilon + \alpha\alpha^\T\}^{-1}\alpha= \Sigma_\varepsilon^{-1} \alpha \{I_q -  \alpha^\T (\Sigma_\varepsilon + \alpha\alpha^\T)^{-1} \alpha\}.
\end{eqnarray*}

In the special case  that $\Sigma_\varepsilon$ is diagonal and $q=1$, we have $ \gamma_i = \Sigma_{\varepsilon,i}^{-1} \alpha_i \{1 -  \alpha^\T (\Sigma_\varepsilon + \alpha\alpha^\T)^{-1} \alpha\}$,
where  $\Sigma_{\varepsilon,i}$ is the $i$th diagonal element of  $\Sigma_\varepsilon$. 
Therefore, if $\alpha_i\neq 0$, we must have $\gamma_i\neq0$.
\end{proof}

We then prove Theorem \ref{thm:ln}.
\begin{proof}[Proof of  Theorem \ref{thm:ln}]
Under model \eqref{mdl:linear} and condition (i) of Theorem \ref{thm:ln}, 
$\Sigma_\varepsilon$ is identified and any admissible value  $\tilde \alpha$ is a rotation of the truth (Proposition \ref{prop:factor}), 
i.e., $\tilde \alpha =\alpha R$ for some  $q\times q$ orthogonal matrix $R$.
We let $ \gamma=\Sigma_X^{-1}\alpha$ denote the coefficient by linear regression of  $U$ on $X$.
Given an admissible value $\tilde \alpha=\alpha R$,  we let $\tilde \gamma=\Sigma_X^{-1}\tilde\alpha=\gamma R$ and $\tilde \delta=R^\T \delta$. We let	 $\C=\{i:\alpha_i \text{ is not a zero vector}\}$  denote the confounded treatments, 
where $\alpha_i$ is the $i$th row of $\alpha$.

Note that $\Sigma_\varepsilon$ is diagonal under model \eqref{mdl:linear},  then according to Lemma \ref{lem:factor}, we have $\tilde \gamma_i = \Sigma_{\varepsilon,i}^{-1} \alpha_i \{I_q -  \alpha^\T (\Sigma_\varepsilon + \alpha\alpha^\T)^{-1} \alpha\}R$,
where $\tilde \gamma_i$ is the $i$th row of $\tilde \gamma$ and $\Sigma_{\varepsilon,i}$ is the $i$th diagonal element of  $\Sigma_\varepsilon$. Therefore,   $\tilde \gamma_i$ is a zero vector if and only if $\alpha_i$ is a zero vector, i.e., $\C$ is identified by the set $\{i: \tilde \gamma_i \text{ is not a zero vector}\}$.

Letting $ \xi$ denote the ordinary least squares coefficient by regressing $Y$ on $X$, then  we have
\begin{equation}\label{eq:s5}
\xi= \beta + \tilde\gamma \tilde \delta.
\end{equation}
By way of contradiction, we prove that $(\beta, \tilde \delta)$ is uniquely determined from this equation  given $(\tilde \beta, \tilde \gamma)$ and under the null treatments assumption.
Suppose that two sets of values $(\beta^{(1)},\tilde \delta^{(1)})$ and $(\beta^{(2)},\tilde \delta^{(2)})$
satisfy \eqref{eq:s5}, and both $\beta_\C^{(1)}$ and $\beta_\C^{(2)}$ satisfies the null treatments assumption that   at most $(|\C|-q)/2$  entries are nonzero.

For unconfounded treatments, the corresponding rows of $\tilde \alpha$ and $\tilde \gamma$ must be zero as we show in the above,    thus $\beta_{\bar\C}^{(1)}=\beta_{\bar\C}^{(2)}=\tilde \beta_{\bar\C}$.
We remain to prove that $\beta_{\C}^{(1)}=\beta_{\C}^{(2)}$ and $\tilde \delta^{(1)}=\tilde \delta^{(2)}$.
From \eqref{eq:s5},  we have that  
\[\beta_\C^{(1)}  - \beta_\C^{(2)}  = \tilde \gamma_\C  (\tilde \delta^{(2)} - \tilde \delta^{(1)} ).\]
On the left hand side of this equation,  $\beta_\C^{(1)}  - \beta_\C^{(2)}$ has at least $q$ zero entries under  assumption  (i) of Theorem \ref{thm:ln}.
We use $\mathcal Z$ to denote the  indices of zero entries of $\beta_\C^{(1)}  - \beta_\C^{(2)}$ and $\tilde\gamma_{\mathcal Z}$ the corresponding submatrix of $\tilde\gamma_\C$,
then we have that $0=\tilde \gamma_{\mathcal Z} (\tilde \delta^{(2)} - \tilde \delta^{(1)} )$.
Note that $\tilde \gamma_{\mathcal Z}=\gamma_{\mathcal Z} R$   must have full rank of $q$ under assumption (iii) of Theorem \ref{thm:ln},
then we must have $\tilde \delta^{(2)} =\tilde \delta^{(1)}$, and thus $\beta_\C^{(1)}  = \beta_\C^{(2)}$.
In summary, $\beta^{(1)}=\beta^{(2)}$, i.e., $\beta$ is uniquely determined.
\end{proof}

\section{Discussion and examples on the completeness condition}
Completeness is a fundamental   concept   in statistics (see \cite{lehman1950completeness,basu1955statistics}), 
which is  taught in most foundational courses of statistical inference.
It has been used to establish the theory for hypothesis testing and unbiased estimation in mathematical statistics \citep{lehman1950completeness}, and recently been used to establish identification in causal inference, missing data, and measurement error problems.
Nonetheless, it may  still be  abstract to practitioners.
Therefore, it is worth explaining in more detail. 
We add further explanation and extra examples to facilitate the interpretation and use  of the completeness condition in practice.
In particular,  we illustrate   completeness from the following perspectives.

\begin{itemize}
\item The role of completeness in identification. 
Since its  prevalent use in statistics, completeness    has been  widely used to  establish   identification  for  a variety of nonparametric and semiparametric models,  for instance, the   IV regression model \citep{newey2003instrumental,darolles2011nonparametric},
IV quantile model \citep{chernozhukov2005iv,chen2014local}, measurement error model \citep{hu2008instrumental}, missing data model \citep{miao2016varieties,d2010new}, and proximal inference \citep{miao2018proxy}. 
It has been very well studied by statisticians and economists,  primitive conditions  are readily available in the literature including  very general exponential families of distributions and regression models, and the literature is still growing; see for example,  \cite{newey2003instrumental,d2010new,d2011completeness,darolles2011nonparametric,chen2014local,hu2018nonparametric}. 
Our use of completeness falls in this line of  work, where the main identifying assumption that captures the underlying causal structure  is  an IV, auxiliary variables, or  null treatments assumption and the  completeness  is viewed as a regularity condition.

\item  Intuition  and implication of  completeness. 
Completeness is equivalent to the injectivity of the conditional expectation operator \citep{d2011completeness}.
Completeness characterizes the informativeness of the auxiliary variable about the confounder and its ability to recover the confounding bias.
It is analogous to the relevance   condition in the  instrumental variable identification.
It can be interpreted as a nonparametric rank condition and  is easiest understood in the  categorical  and the linear cases where the outcome model  to be identified is parametric.
In the categorical case where  both $U$ and $Z$ have $k$ levels, completeness means that  the matrix $[\pr(u_i\mid x,z_j)]$ consisting of the conditional probabilities is invertible for any $x$. 
This is stronger than dependence of $Z$ and $U$ given $X$.
Roughly speaking, dependence reveals that variability in $U$ is accompanied by  variability in $Z$, 
and  completeness reinforces that  any infinitesimal  variability in $U$ is accompanied by   variability in $Z$.
For instance,  if $Z$ is a proxy of $U$, completeness of $\pr(u\mid x,z)$ can be interpreted as no coarsening in the  measurement $Z$ of the confounder $U$.
As a consequence, completeness fails if the number of levels or dimension of $Z$ is smaller  than that of $U$.
For the binary   case, completeness holds if   $U$ and $Z$ are correlated within each level of $X$.
In the linear model  $E(U\mid x,Z)=\gamma_0(x)+\gamma_1(x)Z$, completeness reduces to a   rank condition that $\gamma_1(x)$  has full row rank for all $x$. 
The rank condition can only hold if the dimension of $Z$ is no smaller than  that of $U$.
This argument   provides a rationale for measuring a rich set of potential auxiliary variables for the purpose of confounding adjustment.
However,  if the outcome model is unrestricted,    completeness serves as a   generic   rank condition accommodating  both categorical and continuous variables, linear and nonlinear models,
although, it can no longer be expressed so   concisely as a full rank condition.

\item How to assess or test completeness.
Completeness can be checked in specific models, for instance, one can check whether the covariance matrix is of full rank in the joint normal model. 
Unfortunately, \cite{canay2013testability} show that  for unrestricted models the  completeness condition  is in fact untestable, even if all relevant variables ($X,Z,U$ in our problem) are observed.
Therefore, without restrictions on the distribution, it is    impossible to provide empirical evidence in favor of the completeness condition, akin to  the ignorability assumption.

\item When   does completeness hold or fail,  and is it a stringent condition?
A number of papers \citep{andrews2017examples,d2011completeness,newey2003instrumental,darolles2011nonparametric,chen2014local,hu2018nonparametric} have established  genericity results for parametric, semiparametric, and nonparametric distributions satisfying completeness. 
\cite{andrews2017examples} has  shown that if $Z$ and $U$ are continuously distributed and the dimension of $Z$ is larger than that of $U$,  then  under    a mild regularity condition  the completeness condition holds generically in the sense that   the set of distributions or conditional expectation operators for which completeness fails has a property analogous to having zero Lebesgue  measure \citep{chen2014local,andrews2017examples}.
By appealing to such results,   completeness holds in a large class of distributions and thus one may argue that it is  commonly satisfied.


\end{itemize}

In short, completeness is one of the most general conditions made in problems of identification.
It requires  that $Z$ must have sufficient dimensions or levels and variability relative to   $U$.
Commonly-used parametric and semiparametric models, such as exponential families \citep[Theorem 2.2]{newey2003instrumental} and location-scale families \citep{mattner1992completeness,hu2018nonparametric}, and nonparametric additive models \citep{d2011completeness} satisfy the completeness condition.
For nonparametric models, it is not testable but  holds in a large class of models.

In the following, we  provide extra examples   illustrating completeness, see also \cite{lehman1950completeness} for a variety of parametric examples where completeness holds and also counterexamples.
We also refer to  \cite{newey2003instrumental} for completeness of  exponential families, \cite{hu2018nonparametric} for location-scale families,   and \cite{d2011completeness,darolles2011nonparametric}   for     additive separable regression models.

\begin{example}
The  binary case.
Suppose both $Z$ and $U$ are binary, then for any $x$ completeness of $\pr(u\mid x,z)$ holds as long as $U\nind Z\mid X=x$, but otherwise completeness fails  if $U\ind Z\mid X=x$.
\end{example}

\begin{example}
The  categorical case.
Suppose  $U$ has $q$ levels and $Z$ has $r$ levels, 
then for a given  $x$ completeness of $\pr(u\mid x,z)$ in $z$ holds as long as the matrix 
\[[\pr(u_i\mid x,z_j)]_{q\times r}=\left\{\begin{array}{ccc}
{\pr(u_1\mid z_1,x)}&\cdots& {\pr(u_1\mid z_r,x)}\\
\colon&\ddots&\colon\\
{\pr(u_q\mid z_1,x)}&\cdots& {\pr(u_q\mid z_r,x)}
\end{array}
\right\}\]
consisting of the conditional probabilities  has full row rank.
Therefore, it is necessary that $q\leq r$ and $U\nind Z\mid X=x$. 
Otherwise, completeness fails if either $q> r$ or $U\ind Z\mid X=x$.
However for  $q>2$, the full rank condition  is stronger than the dependence  ($U\nind Z\mid X=x$).
For instance, if $\pr(u_1\mid z,x)\neq \pr(u_2\mid z,x)$ and $\pr(u_3\mid z,x)= \pr(u_1\mid z,x)$  for all $z$,
then $U\nind Z\mid X=x$ but the full rank condition is obviously not met. 
This is because the variability in $U$ from $u_1$ to $u_3$ is not sufficiently captured by $Z$, i.e., 
the measure of $Z$ is coarsened if we view it as a proxy of $U$. 
Roughly speaking, dependence reveals that variability in $U$ is accompanied by  variability in $Z$, 
and  completeness reinforces that  any infinitesimal  variability in $U$ is accompanied by   variability in $Z$.
\end{example}

\begin{example}
Gaussian distributions.
Suppose $U$ and $Z$ have dimensions of $q$ and $r$, respectively, and  $\pr(u,z\mid x)$ is joint normal given $x$, then  completeness  of $\pr(u\mid x,z)$ in $z$ reduces to a   rank condition: the coefficient matrix  $\gamma_1(x)$ in model $E(U\mid x,Z)=\gamma_0(x)+\gamma_1(x)Z$ has full row rank   given  $x$. 
It is required that   the dimension of $Z$ is no smaller than that of $U$ and that the regression coefficients 
of each confounder on $X$ and $Z$ are not collinear; otherwise, the completeness fails. 
\end{example}

\begin{example}
A scale model.
\citet[example 3.3]{lehman1950completeness} presents a counterexample where completeness fails
for  $\pr(u\mid x,z) \thicksim  N(0, \sigma_{x,z}^2)$.
This is because the conditional density is  an even function of $u$  and $E\{g(U)\mid x,z\}=0$ for any square-integrable and odd function $g$.
In this example, the scale or magnitude of variability of $U$ is captured by $Z$  but not the orientation.

\end{example}

\section{Consistency of the least median of squares estimator  $(\tilde \delta^{\lowercase{\rm lms}},\hat\beta^{\rm \lowercase{lms}})$}
For the consistency of $(\hat \delta^{\rm lms}, \hat\beta^{\rm lms})$, we need an additional regularity condition that is slightly stronger than assumption (i) of Theorem \ref{thm:ln}, which is routinely assumed  in the  least median squares estimation \citep[see Theorem 3 in Chapter 3 of][]{rousseeuw2005robust}.
\begin{assumption}\label{assump:s1}
At most  $[|\C|/2]-q+1$ entries of $\beta_\C$  are nonzero, where $[x]$ is  the largest integer less than   or equal to $x$.
\end{assumption}

We show consistency of $(\hat \delta^{\rm lms}, \hat\beta^{\rm lms})$ under this assumption and the assumptions of Theorem \ref{thm:ln}, 
given   $n^{1/2}$-consistency of $(\hat\xi,\hat\gamma)$, i.e.,  $n^{1/2}(\hat\xi -\xi)$ and   $n^{1/2}(\hat\gamma - \gamma R)$ are bounded in probability for some unknown orthogonal matrix $R$. 
We show that $\hat \delta^{\rm lms}\rightarrow R\delta$ and $\hat\beta^{\rm lms}\rightarrow \beta$.
For notational simplicity, we only consider the special case where $R$ is  the identity matrix. 
For general cases with  $R$  unknown, the following proof holds by simply replacing $\delta$ with $R\delta$ and $\gamma$ with $\gamma R$.

Because $n^{1/2}(\hat\gamma - \gamma)$ is bounded in probability,
then $||\hat \gamma_i||_2^2 \rightarrow || \gamma_i||_2^2 > \log(n)/n$ for $\gamma_i\neq 0$ 
and $n/ \log(n) ||\hat \gamma_i||_2^2\rightarrow 0$ for $\gamma_i=0$. 
Lemma \ref{lem:factor} implies $\C=\{i: ||\alpha_i||_2^2 > 0\} = \{i: || \gamma_i||_2^2 > 0\}$
and therefore, $\pr(\hat \C\neq \C) \rightarrow 0$, i.e., $\hat \C$ consistently selects the confounded treatments.  
Letting
\begin{eqnarray*}
\tilde  \delta^{\rm lms} =\arg\min_{\delta}  \text{median}\  \{ (\hat \xi_i  - \hat \gamma_i \delta)^2,\  i\in   \C\},\quad    \C=\{i: || \gamma_i||_2^2 > 0\},
\end{eqnarray*}
we only need to show consistency of $\tilde  \delta^{\rm lms}$ because  $\tilde  \delta^{\rm lms}= \hat  \delta^{\rm lms}$ upon $\hat\C = \C$.

Note that 
\begin{eqnarray*}
\med\{ (\hat\xi_i  - \hat \gamma_i \tilde \delta^{\rm lms})^2: i \in \C\} 
&\leq&  \med\{ (\hat\xi_i  - \hat \gamma_i \delta)^2: i \in \C\} \\
&\leq& \med \{( \hat \xi_i - \xi_i  - (\hat\gamma_i - \gamma_i)\delta + \xi_i - \gamma_i\delta)^2: i \in \C\}\\
&\leq& \med \{( \hat \xi_i - \xi_i  - (\hat\gamma_i - \gamma_i)\delta +\beta_i)^2: i \in \C\}.
\end{eqnarray*}
For sufficiently large sample size $n$, $\hat\xi_i -\xi_i$ and $\hat\gamma_i - \gamma_i$ are close to zero  so that
$( \hat \xi_i - \xi_i  - (\hat\gamma_i - \gamma_i)\delta)^2 < ( \hat \xi_j - \xi_j  - (\hat\gamma_j - \gamma_j)\delta +\beta_j)^2$ for any $i$ with $\beta_i=0$ and any $j$ with $\beta_j \neq 0$.
Assumption \ref{assump:s1}  states  that more than  half entries of $\beta_\C$ are zero, and thus, 
$\med \{( \hat \xi_i - \xi_i  - (\hat\gamma_i - \gamma_i)\delta +\beta_i)^2: i \in \C\}$ is attained among the null treatments. 
Therefore, we have asymptotically 
\begin{eqnarray*}
\med \{( \hat \xi_i - \xi_i  - (\hat\gamma_i - \gamma_i)\delta +\beta_i)^2: i \in \C\}
&\leq & \max \{( \hat \xi_i - \xi_i  - (\hat\gamma_i - \gamma_i)\delta)^2: i\in \C\text{ and } \beta_i=0\}\\
&\leq&  \max \{( \hat \xi_i - \xi_i  - (\hat\gamma_i - \gamma_i)\delta)^2: i\in \C\}.
\end{eqnarray*}
Hence,
\begin{eqnarray*}
\med\{ (\hat\xi_i  - \hat \gamma_i \tilde \delta^{\rm lms})^2: i \in \C\} &\leq&  \max \{( \hat \xi_i - \xi_i  - (\hat\gamma_i - \gamma_i)\delta)^2: i\in \C\}.
\end{eqnarray*}
Letting $\Delta = \tilde \delta_{\rm lms} - \delta$, we can show the following  result,  
\begin{gather*}
\text{\bf Result 1:\quad }\med\{ (\hat\xi_i  - \hat \gamma_i \tilde\delta^{\rm lms})^2: i\in \C\} \geq   \frac{1}{2}\{(\hat\xi_i - \xi_i) - (\hat\gamma_i - \gamma_i)(\delta+\Delta) - \gamma_i\Delta\}^2\\
\text{for   at least $q$ elements belonging to the subset $\{i\in \C:\beta_i=0\}$}.
\end{gather*}
Given Result 1, we have 
\begin{gather*}
\frac{1}{2}(\hat\xi_i - \xi_i - (\hat\gamma_i - \gamma_i)(\delta+\Delta) - \gamma_i\Delta)^2 \leq  \max \{( \hat \xi_i - \xi_i  - (\hat\gamma_i - \gamma_i)\delta)^2: i\in \C\}\\
\text{for   at least $q$ elements belonging to the subset $\{i\in \C:\beta_i=0\}$}.
\end{gather*}
Assuming that $(\hat\xi,\hat\gamma)$ are consistent,  then the right hand side must converge to zero and  thus $\gamma_i\Delta \rightarrow 0$ for at least $q$ elements in $\C$.  Moreover,  any submatrix of $\gamma_\C$ consisting of $q$ rows has full rank (Assumption (iii) of Theorem 3), then $\Delta$ must converge to zero, i.e., $\tilde\delta_{\rm lms}$ is consistent and as a result $\hat\delta_{\rm lms}$ is consistent. 
Consistency of $\hat \beta^{\rm lms}=\hat\xi - \hat\gamma \hat\delta^{\rm lms}$ follows from consistency 
of $(\hat\xi,\hat\gamma,\hat\delta_{\rm lms})$.

Now we prove Result 1. 
Note that
\begin{eqnarray*}
\med\{ (\hat\xi_i  - \hat \gamma_i \tilde\delta^{\rm lms})^2: i\in \C\} 
&=& \med \{(\beta_i + (\hat\xi_i - \xi_i) - (\hat\gamma_i - \gamma_i)(\delta+\Delta) - \gamma_i\Delta)^2: i\in \C\}.
\end{eqnarray*}
If $|\C|$ is odd, Assumption \ref{assump:s1} implies that at most $(|\C| - 1)/2 - q + 1$  entries of $\beta_{\C}$ are nonzero. 
Arranging $(\hat\xi_i  - \hat \gamma_i \tilde \delta^{\rm lms})^2$  in increasing order, 
then $\med\{ (\hat\xi_i  - \hat \gamma_i \tilde\delta^{\rm lms})^2: i\in \C\} $ is equal to the $(|\C|+1)/2$-th element. 
Thus, the following inequality holds for  at least $(|\C| +1 )/2 - \{ (|\C| - 1)/2 - q+1\} = q$  elements  belonging to the subset $\{i\in \C: \beta_i=0\}$,
\begin{eqnarray*}
\med\{ (\hat\xi_i  - \hat \gamma_i \tilde \delta^{\rm lms})^2: i \in \C\} &\geq& \{(\hat\xi_i - \xi_i) - (\hat\gamma_i - \gamma_i)(\delta+\Delta) - \gamma_i\Delta\}^2   \\ 
 &\geq& \frac{1}{2}\{(\hat\xi_i - \xi_i) - (\hat\gamma_i - \gamma_i)(\delta+\Delta) - \gamma_i\Delta\}^2.
\end{eqnarray*}
If $|\C|$ is even, Assumption \ref{assump:s1} implies that at most $|\C|/2 - q + 1$  entries of $\beta_{\C}$ are nonzero.  
Arranging $(\hat\xi_i  - \hat \gamma_i \tilde \delta^{\rm lms})^2$  in increasing order, 
then $\med\{ (\hat\xi_i  - \hat \gamma_i \tilde\delta^{\rm lms})^2: i\in \C\}$ is equal to the average of the $|\C|/2$-th and $(|\C|/2+1)$-th  elements.
Thus,  $\med\{ (\hat\xi_i  - \hat \gamma_i \tilde\delta^{\rm lms})^2: i\in \C\}$ is no smaller  than half of the $(|\C|/2+1)$-th element.  
As a result,  the following inequality holds for at least $|\C|/2 + 1 - (|\C|/2 - q +1) = q$  elements  belonging to the subset $\{i\in \C: \beta_i=0\}$,
\begin{eqnarray*}
\med\{ (\hat\xi_i  - \hat \gamma_i \tilde \delta^{\rm lms})^2: i \in \C\} 
 &\geq& \frac{1}{2}\{(\hat\xi_i - \xi_i) - (\hat\gamma_i - \gamma_i)(\delta+\Delta) - \gamma_i\Delta\}^2.
\end{eqnarray*}
This completes the proof of Result 1.

\section{Discussion on  identification of a  parametric binary outcome model}

Without assist of  auxiliary variables and  null treatments assumptions,  identification is not generally available and depends on specific  model assumptions.
In a recent note,  \cite{kong2019multi} consider a    binary outcome model with  one confounder.
Under  a factor model for     normally distributed treatments and   a couple of  assumptions such as knowing the sign of confounding bias, they  prove  identification  via a meticulous analysis  of the link distribution.
However, their identification results do not generalize to the multivariate confounder case as illustrated by the following counterexample.

\begin{example}

Assuming that
\begin{eqnarray}\label{mdl:glm}
X= \alpha U + \varepsilon,\quad 
\pr(Y=1\mid X,U) = G(\beta_0 + \beta^\T X + \delta^\T U),
\quad \varepsilon\thicksim N(0, \Sigma_\varepsilon),\quad  U\thicksim N(0, I_q),
\end{eqnarray}
where $U$ is a $q$-dimensional confounder,  $\Sigma_\varepsilon$ is diagonal, and  $G$ is a known  distribution    function relating the outcome mean to a linear model of the treatments and confounder.
The unknown parameters $(\beta,\delta)$ capture the treatment effects and the magnitude of confounding, respectively. 

Under this setting, one can verify that  the observed data distribution $\pr(x,y)$ is satisfied with 
$\tilde \alpha =\alpha R_1$,  $\tilde \delta = R_1^\T \Sigma^{-1/2}R_2 \Sigma^{1/2} \delta$, and $\tilde \beta = \beta + \gamma(I_q - \Sigma^{-1/2}R_2 \Sigma^{1/2})\delta$,
where  $\Sigma=I_q - \alpha^\T \Sigma_X^{-1}\alpha$ and $R_1,R_2$ are arbitrary $q\times q$ orthogonal matrices.

In the special case where $U$ is univariate,  i.e., $q=1$, there are only two possible values $-1$ and $1$  for orthogonal matrices $R_1,R_2$; thus,  there are at least  two possible values for the treatment effect,   $\tilde \beta=\beta$  and $\tilde \beta=\beta + 2\gamma\delta$. 
If further the signs of $\delta$ and at least one entry of $\alpha$ are known, i.e., $R_1=R_2=1$, then $\tilde \beta=\beta + 2\gamma\delta$ can be excluded, and in fact,  \cite{kong2019multi} have shown that $\tilde \beta=\beta$ is the only possible value for the treatment effect provided that $G$ is not a normal distribution.
However, this argument does not generalize to the multivariate confounder case, because there are infinite number of orthogonal matrices with dimension $q\geq 2$, in which case, it is impossible to specify $R_1,R_2$.
\end{example}

\section{Details for examples}
\subsection{Details for Example \ref{examp:iv}}
Note that $\eta$ can be identified by regression of $X$ on $Z$.
Given  $\eta$,  an arbitrary admissible value $\tilde \alpha$, and $\tpr(u\mid x,z) \thicksim N(\tilde \gamma^\T x - \tilde \gamma^\T  \eta z, \tilde \sigma^2)$ with $\tilde \gamma= (\Sigma_{X-\eta Z})^{-1}\tilde \alpha$ and $\tilde \sigma^2 = 1 - \tilde \alpha ^\T (\Sigma_{X-\eta Z})^{-1}\tilde \alpha$, we solve
\begin{eqnarray}\label{eq:integral}
\pr(y\mid x,z) &=& \int_u h(y,x,u)\tpr(u\mid x,z)du\\
&&\text{$\phi$ is the probability density function of $N(0,1)$,}\\
&=& \int_u h(y,x,u)\cdot \frac{1}{\tilde\sigma}\phi\left\{\frac{u - (\tilde \gamma^\T x - \tilde \gamma^\T  \eta z)}{\tilde\sigma}\right\}du
\end{eqnarray} 
for $h(y,x, u)$, which is the outcome model $\tpr(y\mid x,u)$.
Following the procedure described by \cite{miao2018proxy}, 
$h(y,x,u)$ can be represented in Fourier transforms of  $\tpr(u\mid x,z)$ and $\pr(y\mid x,z)$.

By  substitution  $z'=\{\tilde \gamma^\T x -  \tilde \gamma^\T\eta z\}/\tilde \sigma$, $u'=u/\tilde \sigma$,  and by letting 
\[g(y, x, z')=\pr\left\{y\mid x, z= \frac{\tilde \gamma^\T x - \tilde \sigma z' }{\tilde \gamma^\T \eta}\right\},\]
\eqref{eq:integral} implies that 
\begin{eqnarray*}
g(y, x, z')&=&\int_{-\infty}^{+\infty}  \frac{1}{\tilde\sigma} \phi(z' - u')\cdot  h(y,x, u'\tilde \sigma) du\\
&=&\int_{-\infty}^{+\infty}    \phi(z' - u')\cdot   h(y,x, u'\tilde \sigma) du'.
\end{eqnarray*}
which is an integral equation of convolution type and  can be solved  by applying the Fourier transform.
Letting $h_1$ and $h_2$ denote the Fourier transforms of  $\phi$ and $g$  respectively:
\begin{align*}
h_1(t)&=\int_{-\infty}^{+\infty}\exp(-{\rm i}tz)\phi(z)dz,\\
h_2(y,x, t)&=\int_{-\infty}^{+\infty} \exp(-{\rm i}t z') g(y,x,z')dz'\\
&=- \frac{\tilde \gamma^\T \eta}{\tilde \sigma}\int_{-\infty}^{+\infty} \exp\left\{-{\rm i}t \frac{\tilde \gamma^\T x - \tilde \gamma^\T \eta z}{\tilde \sigma}\right\} \pr(y\mid x,z)dz,
\end{align*}
with  ${\rm i}=(-1)^{1/2}$ the imaginary unity, we have
\[h_2(y,x, t)=h_1(t)\times \int_{-\infty}^{+\infty}\exp(-\i t u') h(y,x, u'\tilde \sigma) du',\]
\[\int_{-\infty}^{+\infty}\exp(-\i t u')h(y,x, u'\tilde \sigma) du'=\frac{h_2(y,x, t)}{h_1(t)};\]
by  Fourier inversion,  we have
\[h(y,x, u'\tilde \sigma)=\frac{1}{2\pi}\int_{-\infty}^{+\infty}\exp(\i t u')\frac{h_2(y,x, t)}{h_1(t)} dt;\]
by  substitution $u=u'\tilde \sigma$, we obtain
\[h(y,x, u)=\frac{1}{2\pi}\int_{-\infty}^{+\infty}\exp\left\{\frac{\i t u}{\tilde \sigma}\right\}\frac{h_2(y,x, t)}{h_1(t)} dt,\]
and the potential outcome distribution is 
\[\pr\{Y(x)=y\} = \int_{-\infty}^{+\infty} h(y,x, u)\phi(u)du.\]

\subsection{A confounder proxy example for the auxiliary variables approach}
For simplicity, we consider a binary confounder, $p$ binary treatments, and a binary proxy $Z$ of the confounder.
We assume that  
\begin{gather}
X_1\ind\cdots \ind X_p\mid U,\label{eq:factor2}\\ 
\text{at least three treatments are correlated with $U$;}\label{eq:factor3}\\
 Z\nind U,  \quad Z\ind (X,Y)\mid U; \label{eq:factor4}
\end{gather} 
where the last  independence is known as the  nondifferentially  error assumption \citep{ogburn2013bias,carroll2006measurement}. 
Under \eqref{eq:factor2}--\eqref{eq:factor4}, completeness of $\pr(u\mid x, z)$ in $z$ holds as long as $Z$ is correlated with $U$, because 
$E\{g(U)\mid x,z\}=0\Leftrightarrow \sum_u g(u)\pr(x,u)\pr(z\mid u)=0\Leftrightarrow g(u)\pr(x,u)=0\Leftrightarrow g(u)=0$.

According to  \cite{kuroki2014measurement},   under \eqref{eq:factor2}--\eqref{eq:factor3}, 
any admissible joint distribution $\tilde \pr(x,u)$ equals the joint distribution of $X$ and  some label switching   of $U$.
Given $\tilde\pr(x,u)$,  we  solve $\pr(z,x)=\sum_u \tpr(z\mid u)\tilde \pr(x,u)$ to obtain $\tilde \pr(z\mid u)$ and  $\tilde \pr(x,z,u)=\tilde \pr(x,u)\tilde \pr(z\mid u)$.
We then obtain $\tilde \pr(y\mid u,x)$ by solving $\pr(y\mid x,z)=\sum_u  \tpr(y\mid u, x)\tpr(u\mid x, z)$,  and finally the potential outcome distribution is identified by $\pr\{Y(x)=y\}=\sum_u \tpr(y\mid u,x)\tpr(u)$.
The identification  result can be generalized to the categorical setting, 
and we refer to \cite{kuroki2014measurement}  for  details of factor analysis in this case.

\subsection{A normal mixture model that satisfies the equivalence assumption}

\begin{example} \citep[Proposition 2]{yakowitz1968identifiability}.
Suppose $U$ has $q$ categories with $\pr(U=u_i)=\pi_i$ and $X$ is a $p$ dimensional vector with $\pr(x\mid u_i)\thicksim N(\mu_i,\Sigma_i)$.  
Assuming that   the pairs $(\mu_i,\Sigma_i)$ are all distinct,  
then $\pr(x)$ has a unique representation in normal mixtures $\pr(x) \thicksim  \sum_{i=1}^{q'} p_i' N(\mu_i',\Sigma_i')$:
we must have $q'=q$ and for each $i$ there must exist some $j$ such that $\pi_i'=\pi_j$ and $(\mu_i',\Sigma_i') = (\mu_j,\Sigma_j)$. 
That is, the equivalence holds and $\pr(x,u)$ is identified up to a label switching of the confounder.

\end{example}

\section{Additional results for simulations and the  application}
\subsection{Results for simulations}

\graphicspath{{Results/simuauxsetting2/}}
\begin{figure}[H]
	\centering
\includegraphics[scale=0.3]{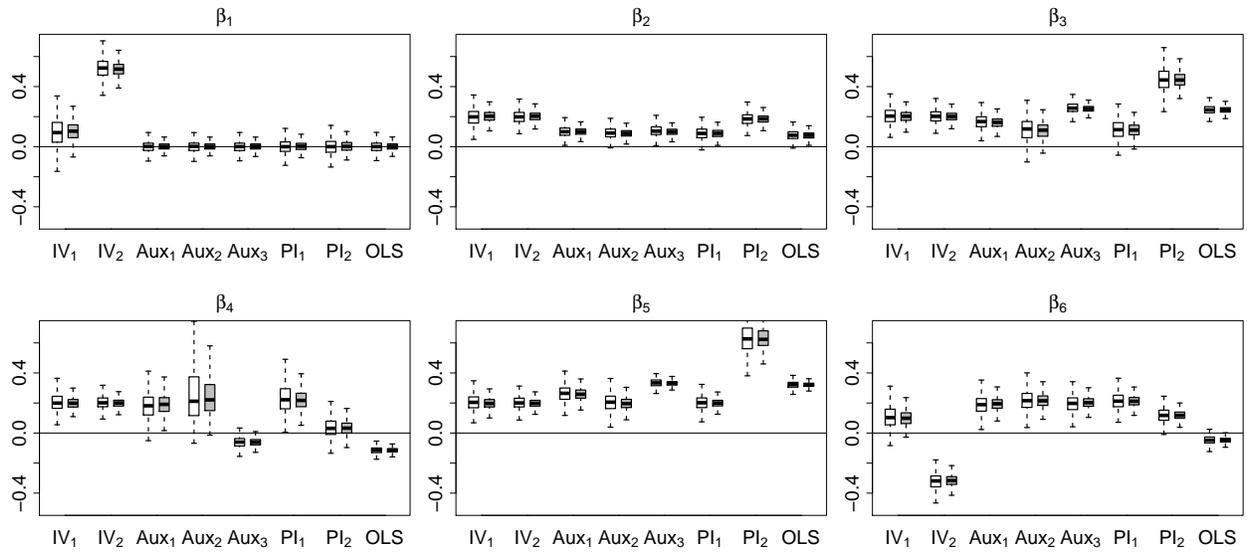} 
\caption{Bias of estimators when  the exclusion restriction fails in the auxiliary variables setting.
White boxes are for sample size 1000 and gray ones  for 2000.} \label{fig:bias2}
\end{figure}

\graphicspath{{Results/simunullsetting2/}}

\begin{figure}[H]
	\centering
 
\includegraphics[scale=0.3]{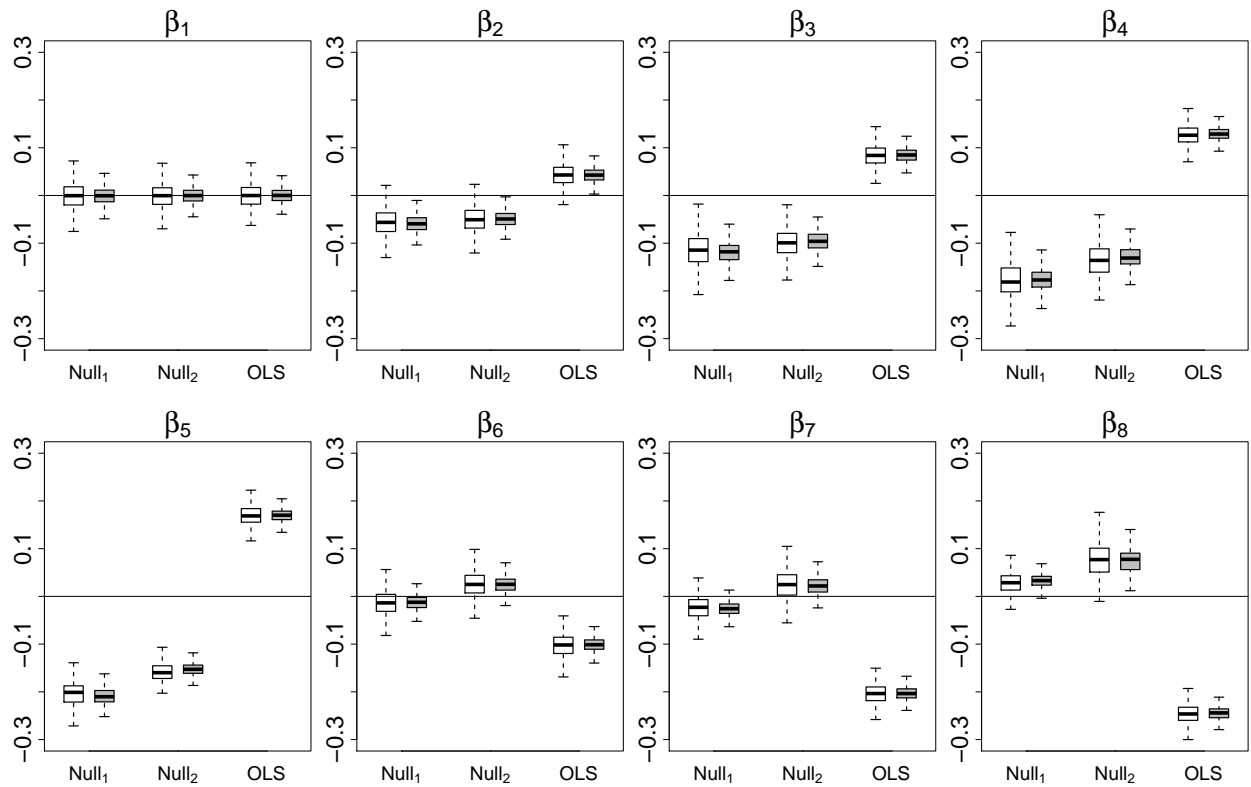}
\caption{Bias of estimators in Case 2 of the null treatments setting. 
White boxes are for sample size 2000 and gray ones  for 5000.} \label{fig:bias4}
\end{figure}

\subsection{Results for the application}

\graphicspath{{Results/}}
\begin{figure}[H]
\includegraphics[scale=0.36]{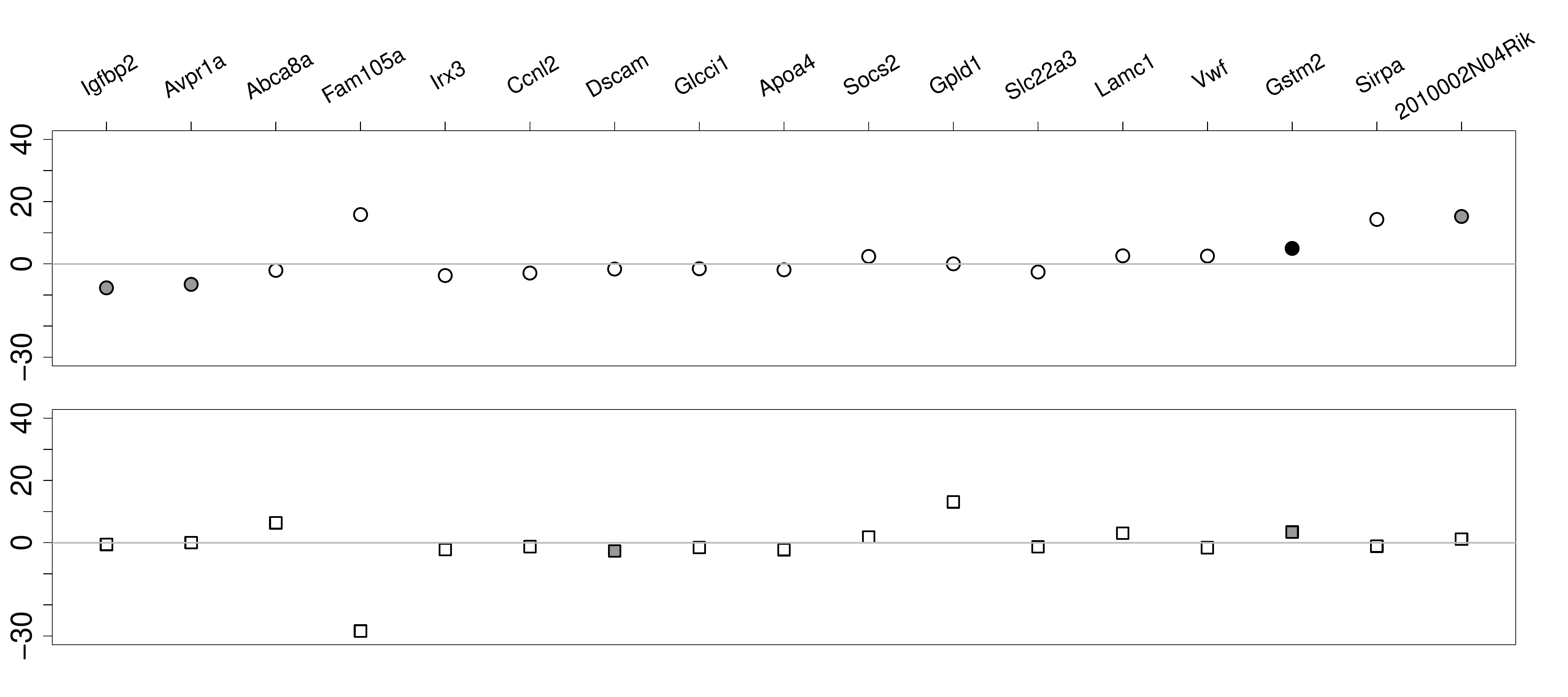}
\includegraphics[scale=0.36]{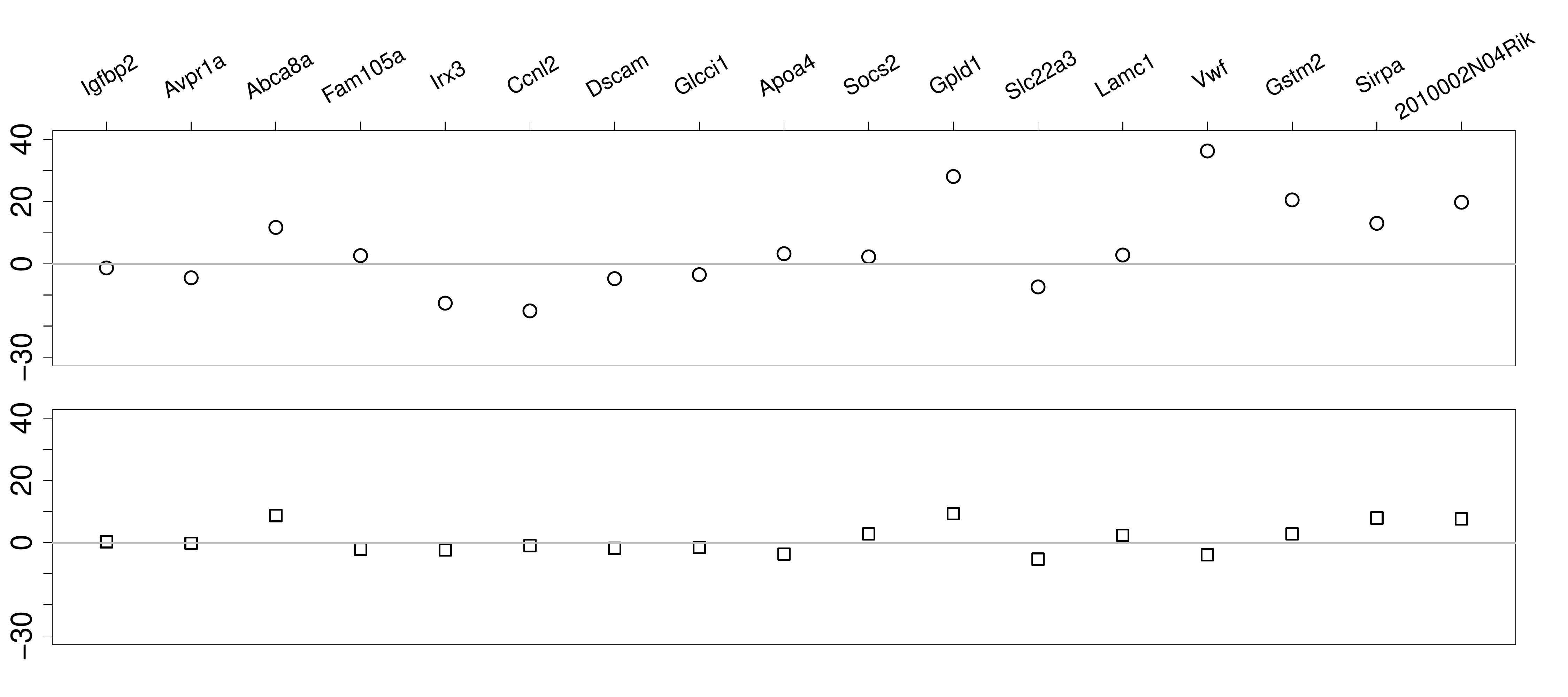}
\caption{Effect estimates for  17 genes when two (the first two panels) or three (the last two panels) factors are used in analyses. 
The first and third panels are for the auxiliary variables estimation, and the second and fourth for the null treatments estimation. Black points are for  significant estimates  at level of $0.05$, gray ones for $0.1$, and white ones for estimates  not significant at   $0.1$.}\label{fig:mice2}
\end{figure}

\begin{center}
Point and confidence interval estimates
\end{center}

Bootstrap percentiles: 2.5\%,  97.5\%,      5\%,    95\%;
Significance codes: ``**" for significant at level of $0.05$, ``*" for $0.1$, and ``0" for not significant at level $0.1$.
\begin{center}
Estimation with one confounder
\end{center}
\begin{verbatim}
1. Results for the auxiliary variables approach 

            estimates    2.5%  97.5%      5%    95% significance
Igfbp2        -10.626 -15.614 -5.266 -14.764 -5.719     **
Avpr1a         -8.296 -14.568 -1.544 -13.494 -2.670     **
Abca8a         -5.664 -12.368  1.072 -10.887 -0.032      *
Fam105a        -4.325 -13.812  4.983 -12.137  2.933      0
Irx3           -3.793  -8.033  0.593  -7.297 -0.154      *
Ccnl2          -3.421  -7.810  0.917  -6.966  0.246      0
Dscam          -2.442  -6.703  0.276  -5.072 -0.514      *
Glcci1         -1.429  -6.999  3.139  -5.964  2.360      0
Apoa4          -0.287  -4.226  3.821  -3.569  3.299      0
Socs2           1.162  -1.800  3.909  -1.276  3.516      0
Gpld1           1.456  -6.520 10.827  -4.847  8.951      0
Slc22a3         1.724  -4.953  7.274  -3.915  6.177      0
Lamc1           3.111  -1.267  8.967  -0.661  8.204      0
Vwf             3.410  -2.669  9.452  -1.400  8.589      0
Gstm2           5.717   1.181 10.045   1.864  9.304     **
Sirpa           7.283   1.312 13.743   2.263 12.774     **
2010002N04Rik  10.293   3.012 17.203   4.351 16.110     **


2. Results for the null treatments approach 

           estimates    2.5%  97.5%      5%    95% significance
Igfbp2        -1.194  -9.369  2.800  -8.045  1.840      0
Avpr1a        -0.430  -7.574  3.999  -6.821  2.678      0
Abca8a         6.032  -3.043 13.693  -1.842 12.352      0
Fam105a       -6.726 -15.875  3.118 -14.526  1.234      0
Irx3          -2.465  -7.292  1.475  -6.661  0.988      0
Ccnl2         -1.349  -6.031  2.855  -5.338  2.083      0
Dscam         -2.255  -7.043 -0.145  -5.540 -0.390     **
Glcci1        -1.536  -6.826  3.343  -6.151  2.567      0
Apoa4         -2.595  -6.378  2.136  -5.713  1.285      0
Socs2          2.117  -1.100  4.992  -0.533  4.513      0
Gpld1          8.681  -0.709 16.796   0.703 14.594      *
Slc22a3       -1.803  -7.358  4.809  -6.227  3.932      0
Lamc1          2.664  -1.792  8.506  -1.235  7.696      0
Vwf           -1.072  -5.372  6.179  -4.309  5.445      0
Gstm2          3.688  -0.806  9.612   0.171  8.948      *
Sirpa          4.972  -0.704 11.731   0.285 10.628      *
2010002N04Rik  6.272   0.096 13.682   1.353 12.599     **


3. Results for the crude estimation 

           estimates    2.5%  97.5%      5%    95% significance
Igfbp2        -8.445 -12.797 -4.586 -12.124 -5.292     **
Avpr1a        -6.607 -11.779 -1.634 -11.094 -2.395     **
Abca8a        -3.236  -8.644  2.049  -7.674  1.429      0
Fam105a       -5.017 -14.030  3.880 -12.662  2.108      0
Irx3          -3.494  -8.028  0.878  -7.284  0.263      0
Ccnl2         -2.962  -7.357  0.860  -6.716  0.269      0
Dscam         -2.410  -7.427 -0.220  -5.806 -0.640     **
Glcci1        -1.522  -7.485  3.178  -6.389  2.303      0
Apoa4         -0.973  -4.563  2.935  -3.992  2.045      0
Socs2          1.518  -1.109  4.289  -0.743  3.828      0
Gpld1          3.919  -2.720 10.907  -1.583  9.980      0
Slc22a3        0.609  -5.184  6.260  -4.205  4.983      0
Lamc1          3.174  -1.191  9.643  -0.598  8.337      0
Vwf            2.032  -3.505  7.896  -2.520  6.781      0
Gstm2          4.767   0.251  9.300   1.194  8.647     **
Sirpa          6.935   0.858 13.027   1.636 12.190     **
2010002N04Rik  9.226   2.454 15.538   3.235 14.574     **
\end{verbatim}

 \begin{center}
Estimation with two confounders 
\end{center}
\begin{verbatim}
1. Results for the auxiliary variables approach 

           estimates    2.5%  97.5%      5%    95% significance
Igfbp2        -7.707 -18.416  1.113 -16.599 -0.978      *
Avpr1a        -6.586 -15.364  0.383 -13.920 -1.049      *
Abca8a        -2.100 -15.114  8.107 -12.496  5.177      0
Fam105a       15.842 -43.428 59.749 -34.904 49.521      0
Irx3          -3.752  -8.337  0.682  -7.363  0.020      0
Ccnl2         -2.959  -7.808  1.329  -7.060  0.594      0
Dscam         -1.671  -6.691  1.455  -5.357  0.861      0
Glcci1        -1.570  -7.136  3.210  -5.886  2.482      0
Apoa4         -1.913  -7.064  4.566  -5.579  3.704      0
Socs2          2.382  -2.566  6.707  -1.939  5.493      0
Gpld1          0.004  -9.004 11.469  -6.927  9.564      0
Slc22a3       -2.617 -15.306 11.915 -11.466  9.435      0
Lamc1          2.585  -1.511  9.001  -1.086  8.108      0
Vwf            2.503  -3.897  9.630  -2.735  8.523      0
Gstm2          4.962   0.284 10.218   1.271  9.119     **
Sirpa         14.276  -6.498 30.434  -2.951 26.550      0
2010002N04Rik 15.232  -2.550 31.459   0.999 25.530      *


2. Results for the null treatments approach 

            estimates    2.5%  97.5%      5%    95% significance
Igfbp2         -0.584   -9.401  2.190  -7.986  1.276      0
Avpr1a          0.021   -8.330  2.998  -7.046  2.156      0
Abca8a          6.359   -3.449 13.216  -2.241 11.147      0
Fam105a       -28.398 -318.341  1.951 -73.696  0.404      0
Irx3           -2.211   -7.370  1.553  -6.523  1.022      0
Ccnl2          -1.270   -6.192  2.666  -5.442  2.121      0
Dscam          -2.692   -7.175  0.208  -5.986 -0.262      *
Glcci1         -1.582   -6.935  3.447  -5.972  2.602      0
Apoa4          -2.275   -6.315  2.239  -5.450  1.679      0
Socs2           1.784   -1.301  5.008  -0.813  4.409      0
Gpld1          13.077   -0.993 19.818  -0.040 17.053      0
Slc22a3        -1.340   -9.239  5.508  -7.669  4.482      0
Lamc1           3.058   -1.893  8.703  -1.245  7.451      0
Vwf            -1.619   -5.267  6.335  -4.201  5.598      0
Gstm2           3.426   -0.295  9.968   0.438  9.129      *
Sirpa          -1.141   -6.001 14.311  -3.833 12.380      0
2010002N04Rik   1.116   -2.534 14.654  -1.160 13.157      0
\end{verbatim}

\begin{center}
Estimation with three confounders 
\end{center}
\begin{verbatim}
1. Results for the auxiliary variables approach 

            estimates    2.5%  97.5%      5%    95% significance
Igfbp2         -1.312 -17.987  6.862 -15.951  4.212      0
Avpr1a         -4.481 -18.904  5.022 -15.353  2.897      0
Abca8a         11.713 -17.207 34.518 -12.953 28.557      0
Fam105a         2.648 -71.350 81.166 -49.248 62.241      0
Irx3          -12.628 -33.805  4.268 -24.106  1.251      0
Ccnl2         -15.119 -23.126  4.215 -18.271  1.616      0
Dscam          -4.751 -15.444  1.729 -11.027  0.785      0
Glcci1         -3.485  -9.837  4.176  -7.731  3.013      0
Apoa4           3.268  -7.404 10.427  -5.519  7.904      0
Socs2           2.249  -5.602  8.652  -3.800  6.912      0
Gpld1          28.073 -50.673 59.428 -19.273 42.508      0
Slc22a3        -7.413 -23.712 12.482 -18.348  9.988      0
Lamc1           2.841  -2.863 13.409  -1.291 10.471      0
Vwf            36.274 -11.822 56.152  -4.906 43.160      0
Gstm2          20.551  -4.118 30.792  -0.593 25.547      0
Sirpa          13.029  -6.271 33.161  -3.244 26.817      0
2010002N04Rik  19.805  -6.478 69.222  -0.918 53.812      0


2. Results for the null treatments approach 

            estimates    2.5%  97.5%      5%    95% significance
Igfbp2         0.304    -9.903   1.264   -8.630  0.689      0
Avpr1a        -0.188    -8.836   3.601   -7.821  2.650      0
Abca8a         8.709    -4.579  13.954   -3.275 10.415      0
Fam105a       -2.115 -1527.892  42.311 -900.908  6.914      0
Irx3          -2.364    -7.044   4.262   -6.313  2.759      0
Ccnl2         -0.936    -5.863   4.479   -5.029  3.555      0
Dscam         -1.812    -6.532   1.339   -5.410  0.149      0
Glcci1        -1.523    -6.817   3.630   -5.850  2.827      0
Apoa4         -3.665    -7.561   2.568   -6.635  1.677      0
Socs2          2.812    -1.284   5.211   -0.670  4.579      0
Gpld1          9.310    -4.096  19.869   -2.003 15.753      0
Slc22a3       -5.310    -7.903   5.516   -6.701  4.519      0
Lamc1          2.353    -2.483   9.319   -1.486  7.890      0
Vwf           -3.877   -15.989   5.917  -11.843  4.612      0
Gstm2          2.807    -3.831   9.466   -2.007  8.653      0
Sirpa          7.923    -4.153  13.763   -2.283 12.411      0
2010002N04Rik  7.609   -19.287 178.871   -2.461 14.933      0
\end{verbatim}

\end{document}